\documentclass[twocolumn,superscriptaddress,prx,floatfix]{revtex4-2}
\usepackage{graphicx}
\usepackage{amsmath}
\usepackage{amsthm}
\usepackage{amssymb}
\usepackage{latexsym}
\usepackage{array}
\usepackage{hyperref}
\usepackage{float}
\usepackage{amsfonts}
\usepackage{dsfont}
\usepackage{mathrsfs}
\usepackage{verbatim}
\usepackage{bm}
\usepackage{times}
\usepackage{bbold}
\usepackage{lipsum}
\usepackage[normalem]{ulem}

\usepackage{upgreek}

\usepackage{makecell}
\usepackage{adjustbox,lipsum}

\usepackage{algorithm}
\usepackage{algpseudocode}

\newcommand{\bra}[1]{\left\langle #1\right|}
\newcommand{\ket}[1]{\left|#1\right\rangle}
\newcommand{\abs}[1]{\bigl|#1\bigr|}

\newcommand{\ketbra}[2]{\ket{#1}\!\!\bra{#2}}

\newtheorem{theorem}{Theorem}
\newtheorem{proposition}{Proposition}
\newtheorem{lemma}{Lemma}
\newtheorem{corollary}{Corollary}
\newtheorem{definition}{Definition}

\newtheorem{assumption}{Assumption}
\newtheorem{remark}{Remark}
\newtheorem{example}{Example}

\DeclareMathOperator{\tr}{Tr}
\providecommand{\openone}{\mathds{1}}

\begin{document}

\title{Neural Information Causality}

\author{Jeongho~Bang}\email{jbang@yonsei.ac.kr}
\affiliation{Institute for Convergence Research and Education in Advanced Technology, Yonsei University, Seoul 03722, Republic of Korea}
\affiliation{Department of Quantum Information, Yonsei University, Incheon 21983, Republic of Korea}

\author{Marcin~Paw\l{}owski}\email{marcin.pawlowski@ug.edu.pl}
\affiliation{International Centre for Theory of Quantum Technologies, University of Gda\'{n}sk, 80-308 Gda\'{n}sk, Poland}
\affiliation{Institute of Theoretical Physics and Astrophysics, Faculty of Mathematics, Physics and Informatics, University of Gda\'{n}sk, 80-308 Gda\'{n}sk, Poland}

\date{\today}

\begin{abstract}
Query-separated computation forces a representation to play an operational role: data are encoded before a query is known, and a later decoder can answer only through the intermediate interface. In this regime the representation functions as a message rather than merely as a feature map. We formalize this observation by embedding information causality (IC) into representation learning, obtaining a framework called neural information causality (Neural-IC). The revised formulation separates two logically distinct statements. First, every query-separated architecture induces a random-access communication experiment and obeys the embedding inequality $I_{\mathrm{N\text{-}RAC}}\le I(\vec a:H,B)$. Second, any independently certified physical capacity bound on the interface, such as a hard $m$-bit alphabet, a finite-precision register, or a power-constrained noisy channel, implies $I_{\mathrm{N\text{-}RAC}}\le C_H$. This separation avoids treating capacity as a post hoc definition and makes Neural-IC an operational diagnostic for query leakage, precision leakage, and episode-specific memory. We also provide an exact one-bit classical RAC benchmark, showing explicitly that the relevant quantum enhancement is not total information beyond the bottleneck, but fair query-conditioned access. For CHSH-type correlation layers, nested Neural-RAC protocols multiply correlation biases across depth; requiring stability of a one-bit bottleneck for arbitrary depth selects the Tsirelson threshold. We extend the analysis to asymmetric seed biases, to multi-capacity finite-depth phase diagrams, and to correlated data via a conditional information score. Controlled simulations, including straight-through binary bottlenecks and deliberately leaky ablations, verify that apparent violations are accounted for by broken query separation or undercounted capacity.
\end{abstract}

\maketitle


\section{Introduction}\label{sec:introduction}

Modern machine learning is built on representations: intermediate variables that stand between raw data and task-specific decisions. Yet, in many of the most consequential architectures, representations are not merely features---they are interfaces constrained by causality. Whenever a system must commit to an internal code before a query is revealed and answer after the query is supplied, the representation becomes the only causal conduit from past data to future answers. This ``query-separated'' regime is ubiquitous: it underlies memory-augmented networks and differentiable external memories, retrieval-augmented generation, and encoder--decoder pipelines where the query arrives at inference time rather than training time~\cite{Graves2014NTM,Graves2016DNC,Lewis2020RAG,Vaswani2017Attention}. It is also the regime in which a hidden layer most literally behaves like a message.

A remarkably similar causal structure appears in the foundations of quantum theory. Bell nonlocality shows that distant parties can exhibit correlations that defy any local hidden-variable model~\cite{Clauser1969CHSH,Brunner2014BellNonlocality}. Quantum correlations, however, are not arbitrarily strong: for the CHSH scenario, they are capped by Tsirelson's bound~\cite{Tsirelson1980}. This numerical frontier has motivated a long search for physical principles that carve out the quantum set from the larger no-signaling polytope---a program now equipped with mature tools such as semidefinite relaxations and device-independent characterizations~\cite{Navascues2007NPA,Tavakoli2024SDP,Fritz2013LocalOrthogonality}. Among these principles, information causality (IC) is especially operational: it states that if Alice communicates $m$ classical bits to Bob, then Bob's total information gain about Alice's previously unknown data cannot exceed $m$~\cite{Pawlowski2009}. IC is respected by classical and quantum physics, yet it is violated by post-quantum no-signaling correlations such as Popescu--Rohrlich boxes~\cite{PopescuRohrlich1994}, whose existence would trivialize communication complexity~\cite{vanDam2005,Brassard2006CC}. IC has also seen a recent revival as a computational tool for deriving nontrivial polynomial constraints that approximate quantum correlation sets beyond CHSH~\cite{Jain2024}.

\begin{figure}[t]
\centering
\includegraphics[width=0.95\linewidth]{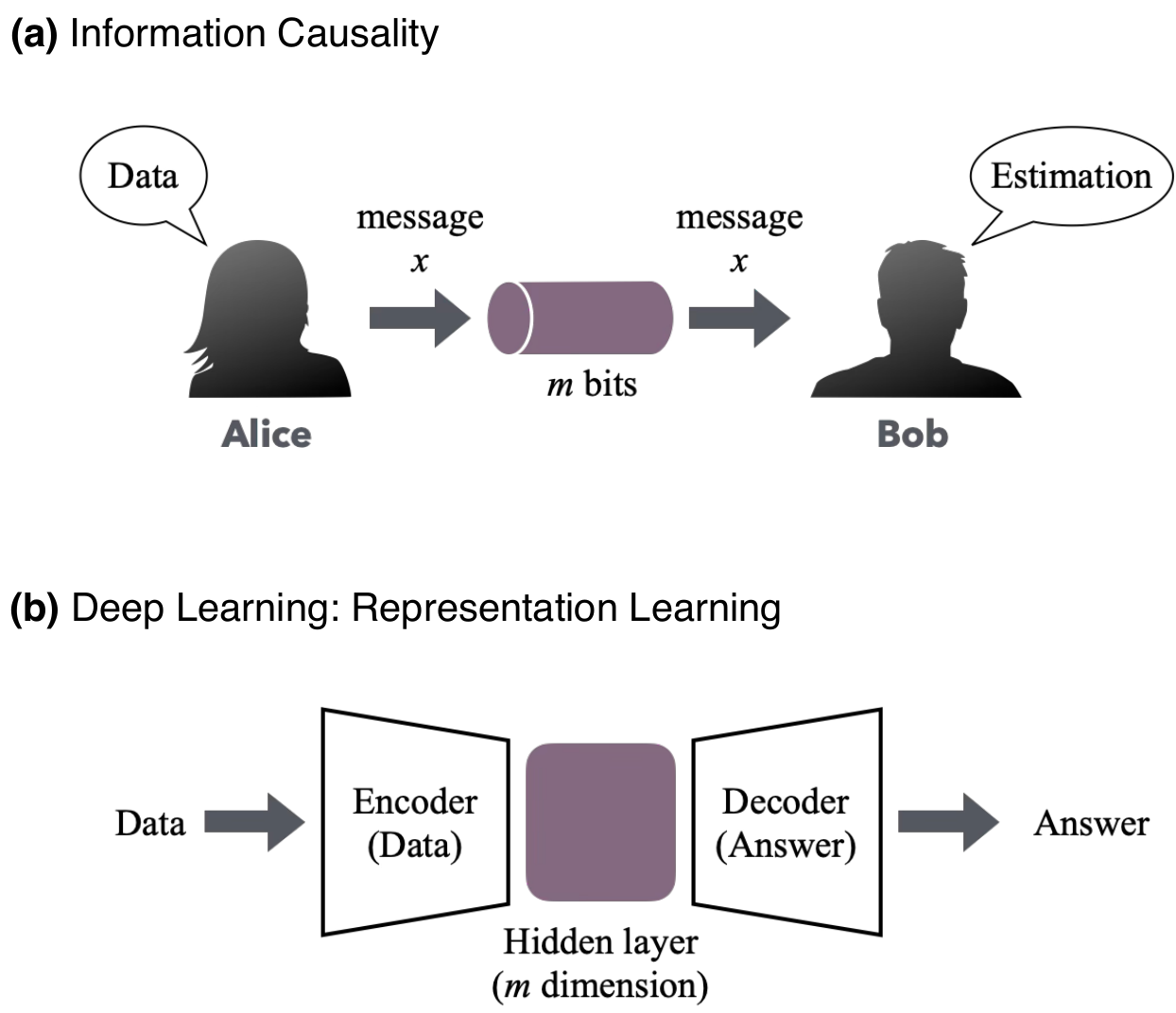}
\caption{\textbf{Information causality (IC) and representation learning share a similar causal skeleton.} (a) In the IC/RAC communication game, Alice compresses her data into a finite message of $m$ bits; after the message is sent, Bob can combine it with his (later-supplied) query to form an estimate. (b) A standard encoder--decoder pipeline has the identical structure inside a single model: the encoder commits to a bottleneck (hidden) representation of limited capacity before the downstream query/task is revealed, and the decoder must answer using only that representation. This side-by-side view motivates treating a hidden layer as a literal message: a representation is useful precisely insofar as it enables reliable random-access--style answers under a strict bottleneck.}
\label{fig:comp}
\end{figure}

The goal of this study is to show that IC is not merely a principle about spacelike-separated parties. It can be internalized as a principle about representations in machine learning systems, provided the architecture is query-separated. In that regime an encoder and a decoder play the operational roles of Alice and Bob, while the representation is the only data-dependent causal carrier between them (see Fig.~\ref{fig:comp}). The core question is therefore: how much information about a large database can a decoder access, across all possible random-access queries, through a finite-capacity interface? This question resonates with information-theoretic approaches to representation learning, most notably the information bottleneck principle and its variational instantiations~\cite{Tishby2000IB,Alemi2017VIB,Bengio2013RepresentationLearning}. At the same time, mutual-information narratives about deep networks are subtle and sometimes contentious~\cite{Saxe2019IBCritique,Kawaguchi2023IB}. Neural-IC is complementary: it does not aim to explain training dynamics, but gives a task-conditional accounting rule dictated by causal separation.

Our contribution is best understood as an operational embedding theorem rather than as a new proof of IC itself. We show that every query-separated representation-learning architecture induces a random-access communication experiment whose observable score is bounded by the information capacity of the representation. This turns IC into a diagnostic principle for bottlenecked learning systems: if a small representation appears to support uniformly strong random access to a large database, then either the capacity has been undercounted, the query separation has been broken, or the assumed correlation resource is beyond ordinary classical/quantum physics.

Concretely, we introduce a framework of \emph{neural information causality (Neural-IC)} with five ingredients: (i) a formal query-separated primitive, neural random access coding (Neural-RAC); (ii) a theory-independent score $I_{\mathrm{N\text{-}RAC}}$ and an accuracy-to-information lower bound; (iii) a capacity-general bound $I_{\mathrm{N\text{-}RAC}}\le C_H$, with the hard $m$-bit statement $I_{\mathrm{N\text{-}RAC}}\le m$ as a special case; (iv) a CHSH correlation layer whose nested composition yields the familiar IC amplification mechanism; and (v) a quantum realization showing that entanglement attains the maximal stable correlation strength allowed by this bottleneck accounting. We supplement the theory with closed-form numerical probes and finite-capacity sanity checks that visualize the subcritical, critical, and supercritical regimes.

The broader significance of this formulation is that it reframes IC from a constraint on nonlocal correlations alone into an operational stability criterion for query-separated representations. The same random-access primitive links three settings that are usually discussed separately: foundational limits on physical correlations, communication-complexity style access to distributed data, and capacity-controlled learning architectures. In this sense, Neural-IC is not a new proof of the original IC principle; it is an embedding that makes IC testable inside representation-learning pipelines once the interface capacity is specified independently.

This also clarifies the relation to the information bottleneck literature. Information bottleneck methods constrain or regularize representations statistically, often with the goal of understanding generalization or training dynamics. Neural-IC instead asks a more operational question: after the representation has been fixed before the query arrives, how much query-conditioned random access can the decoder certify from it? The answer depends on a physical or architectural capacity certificate for the interface, not on a post hoc interpretation of the observed score.

This perspective is directly relevant to correlation-enhanced learning architectures. Quantum entanglement and other nonclassical resources can be viewed as physically admissible correlation layers that change which queries a fixed bottleneck can answer accurately~\cite{Biamonte2017QML,Cerezo2021VQA,McClean2018BarrenPlateaus,Schuld2019QMLFeatureHilbert}. Neural-IC then supplies a stress test: any claimed advantage that effectively turns a finite representation into an oracle must be accounted for by an explicit capacity budget, an explicit relaxation of the causal architecture, or an explicitly post-quantum resource.

\section{Preliminaries: Information Causality and CHSH}\label{sec:preliminaries}

We recall only the ingredients needed later: the random-access communication task underlying information causality (IC), the minimal information-calculus assumptions used in the IC proof, and the CHSH game that supplies the correlation parameter. Throughout the paper, $\log$ denotes $\log_2$ unless the natural logarithm is explicitly written as $\ln$; all information quantities are measured in bits, and $\oplus$ denotes addition modulo two.

\subsection{Distributed random access coding and information causality}

\begin{definition}[Distributed random access coding (RAC)]
\label{def:RAC}
Fix $N\ge2$ and a message length $m\ge0$. Alice receives $N$ independent unbiased bits
\begin{eqnarray}
\vec{a}:=(a_0,a_1,\ldots,a_{N-1})^T,  \quad  a_k \in \{0,1\},
\end{eqnarray}
and Bob receives an independent uniformly random index $b\in\{0,\ldots,N-1\}$. Alice may send a classical message $\vec{x}\in\{0,1\}^m$, possibly using local randomness and pre-established correlations with Bob. Bob then outputs $\beta\in\{0,1\}$ as a guess for $a_b$. The conditional success probability for the $K$th bit is
\begin{eqnarray}
P_K:=\Pr[\beta=a_K\mid b=K].
\label{eq:Pk_def}
\end{eqnarray}
\end{definition}

\begin{definition}[Theory-independent information score]
\label{def:I_score}
The RAC information score is the observable Shannon-mutual-information quantity
\begin{eqnarray}
I_{\mathrm{RAC}}:=\sum_{K=0}^{N-1} I(a_K:\beta\mid b=K).
\label{eq:IRAC_def}
\end{eqnarray}
It depends only on the classical variables $(\vec a,b,\beta)$ and is therefore insensitive to whether the pre-established resource is classical, quantum, or more general.
\end{definition}

\begin{lemma}[Success probability lower-bounds mutual information]
\label{lem:success_MI}
Let $A\in\{0,1\}$ be unbiased and let $\widehat A\in\{0,1\}$ be any estimate. If $P:=\Pr[\widehat A=A]$, then
\begin{eqnarray}
I(A:\widehat A)\ge 1-h(P),
\label{eq:MI_lower_binary}
\end{eqnarray}
where $h(p):=-p\log p-(1-p)\log(1-p)$ is the binary entropy.
\end{lemma}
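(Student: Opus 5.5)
The plan is to write the mutual information as $I(A:\widehat A)=H(A)-H(A\mid\widehat A)$ and bound each term separately. Since $A$ is unbiased we have $H(A)=1$ exactly. The remaining task is the upper bound $H(A\mid\widehat A)\le h(P)$, which is the binary form of Fano's inequality. With that bound in hand, \eqref{eq:MI_lower_binary} follows at once.

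To prove the conditional-entropy bound, introduce the error indicator $E:=A\oplus\widehat A$, so that $\Pr[E=1]=1-P$. Given $\widehat A$, the pair $(A,E)$ is in bijection with $E$, because $A=\widehat A\oplus E$. Hence
\begin{eqnarray}
H(A\mid\widehat A)=H(E\mid\widehat A)\le H(E)=h(1-P)=h(P).
\end{eqnarray}
The inequality here is simply that conditioning does not increase entropy, and the last equality uses the symmetry $h(p)=h(1-p)$. Putting the pieces together gives $I(A:\widehat A)=1-H(A\mid\widehat A)\ge 1-h(P)$.

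The argument is short, and the only step needing care is the identity $H(A\mid\widehat A)=H(E\mid\widehat A)$. I would justify it directly: for each value $\hat a$, the conditional distribution of $A$ given $\widehat A=\hat a$ is a relabeling of the conditional distribution of $E$ given $\widehat A=\hat a$, so the two conditional entropies agree term by term.

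The estimate $\widehat A$ may be produced with arbitrary local randomness or shared correlations. This causes no difficulty, since the argument uses only the joint distribution of the two classical bits $(A,\widehat A)$. The same observation makes the lemma applicable per query in Definition~\ref{def:I_score}, taking $A=a_K$ and $\widehat A=\beta$ conditioned on $b=K$. Because $h$ is symmetric, the bound also holds when $P<1/2$, where it correctly reflects that a systematically wrong guesser still carries information about $A$.
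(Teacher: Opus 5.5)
Your proposal is correct and follows the paper's own proof: both use the error bit $E=A\oplus\widehat A$, the bound $H(A\mid\widehat A)\le H(E\mid\widehat A)\le H(E)=h(P)$, and $H(A)=1$. The only difference is that you state the first step as an equality (which is indeed exact, since $A=\widehat A\oplus E$), whereas the paper writes it as an inequality.
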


\begin{proof}---Let $E=A\oplus\widehat A$ be the error bit. Since $A=\widehat A\oplus E$, one has $H(A\mid\widehat A)\le H(E\mid\widehat A)\le H(E)=h(P)$. Because $A$ is unbiased, $I(A:\widehat A)=1-H(A\mid\widehat A)\ge1-h(P)$.
\end{proof}

\begin{theorem}[From RAC accuracy to information]
\label{thm:IRAC_lower_bound}
For independent unbiased database bits, every RAC protocol satisfies
\begin{eqnarray}
I_{\mathrm{RAC}}\ge N-\sum_{K=0}^{N-1}h(P_K).
\label{eq:IRAC_lower_bound}
\end{eqnarray}
\end{theorem}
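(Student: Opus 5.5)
The plan is to apply Lemma~\ref{lem:success_MI} separately to each term in the definition of $I_{\mathrm{RAC}}$ (Definition~\ref{def:I_score}), working under the conditional distribution given $b=K$, and then add the resulting bounds over $K$.

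Fix an index $K\in\{0,\ldots,N-1\}$ and condition on the event $b=K$. The first step is to verify that Lemma~\ref{lem:success_MI} applies in this conditional world, with $A:=a_K$ and $\widehat A:=\beta$. The only hypothesis to check is that $a_K$ remains unbiased once we condition on $b=K$. This follows from Definition~\ref{def:RAC}: Bob's index $b$ is drawn independently of Alice's database $\vec a$, so $\Pr[a_K=0\mid b=K]=\Pr[a_K=0]=1/2$. Nothing further is required of $\beta$. It may depend arbitrarily on the message, on local randomness, and on the pre-established resource, because Lemma~\ref{lem:success_MI} holds for any estimate $\widehat A$. In the conditional distribution, the success probability $\Pr[\widehat A=A]$ is exactly $P_K$ as defined in Eq.~\eqref{eq:Pk_def}. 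The lemma therefore gives
\begin{eqnarray}
I(a_K:\beta\mid b=K)\ge 1-h(P_K).
\end{eqnarray}

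The second step is to sum this inequality over $K=0,\ldots,N-1$. The left-hand sides add up to $I_{\mathrm{RAC}}$ by Eq.~\eqref{eq:IRAC_def}, and the right-hand sides add up to $N-\sum_K h(P_K)$. This is exactly Eq.~\eqref{eq:IRAC_lower_bound}.

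I do not expect a genuine obstacle. The one point needing care is the conditioning step. The quantity $I(a_K:\beta\mid b=K)$ is a mutual information evaluated on the conditional distribution given the event $b=K$, not a conditional mutual information averaged over $b$. Lemma~\ref{lem:success_MI} has to be applied to that conditional distribution, which is why the unbiasedness of $a_K$ given $b=K$ must be checked through the independence of $b$ and $\vec a$. It is also worth remarking that the nature of the shared resource never enters the argument: the bound uses only the classical variables $(\vec a,b,\beta)$. This is consistent with the theory-independence stated in Definition~\ref{def:I_score}.
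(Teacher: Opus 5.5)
Your proposal is correct and is essentially the paper's own proof: condition on $b=K$, apply Lemma~\ref{lem:success_MI} to $\beta$ as an estimate of $a_K$, and sum over $K$. Your explicit check that $a_K$ remains unbiased given $b=K$, by independence of $b$ and $\vec a$, is a detail the paper leaves implicit.
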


\begin{proof}---Condition on $b=K$ and apply Lemma~\ref{lem:success_MI} to the binary channel $a_K\mapsto\beta$. Summing over $K$ gives Eq.~(\ref{eq:IRAC_lower_bound}).
\end{proof}

To state IC without committing to a particular physical theory, we use the standard axiomatic information calculus of Ref.~\cite{Pawlowski2009}, with two classical normalizations made explicit because they will also be used in the neural proof.

\begin{assumption}[Information calculus]
\label{ass:info_calculus}
For every bipartite state of systems $U,V$ admitted by the underlying theory, there is a symmetric nonnegative quantity $I(U:V)$ satisfying:
\begin{itemize}
\item \emph{(i)} Consistency: if $U,V$ are classical registers, $I(U:V)$ is Shannon mutual information;
\item \emph{(ii)} Product-state normalization: independently prepared systems satisfy $I(U:V)=0$;
\item \emph{(iii)} Data processing: any local transformation $V\to V'$ obeys $I(U:V)\ge I(U:V')$;
\item \emph{(iv)} Chain rule: there is a nonnegative conditional mutual information such that
\begin{eqnarray}
I(U:VW)=I(U:W)+I(U:V\mid W);
\label{eq:chain_rule_assumption}
\end{eqnarray}
\item \emph{(v)} Classical self-conditioning: for a classical register $X$, $I(X:Y\mid X)=0$ and hence $I(X:XY)=H(X)$.
\end{itemize}
\end{assumption}

\begin{theorem}[Information causality as a necessary condition]
\label{thm:IC_bound}
Consider the RAC task of Definition~\ref{def:RAC}, and assume the database $\vec a$ is independent of any pre-established resource. If Assumption~\ref{ass:info_calculus} holds, then every protocol with an $m$-bit classical message satisfies
\begin{eqnarray}
I_{\mathrm{RAC}}\le m.
\label{eq:IC_bound}
\end{eqnarray}
\end{theorem}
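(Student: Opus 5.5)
The plan follows the original Pawłowski et al. argument and uses only the axioms of Assumption~\ref{ass:info_calculus}. Let $B$ denote Bob's share of the pre-established resource and $\vec x$ the $m$-bit message. The central quantity is $I(\vec a:\vec x B)$, the information about the whole database available to Bob after the message arrives. We squeeze it from both sides: an upper bound of $m$, and a lower bound of $I_{\mathrm{RAC}}$.

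\textbf{Upper bound.} Apply the chain rule (iv) to write
\begin{eqnarray}
I(\vec a:\vec x B)=I(\vec a:B)+I(\vec a:\vec x\mid B).
\end{eqnarray}
Because $\vec a$ is independent of the resource, product-state normalization (ii) gives $I(\vec a:B)=0$. For the second term, use the chain rule again in the form $I(\vec x:\vec a B)=I(\vec x:B)+I(\vec x:\vec a\mid B)$. Symmetry and nonnegativity then give $I(\vec a:\vec x\mid B)\le I(\vec x:\vec a B)$. Next, bound $I(\vec x:\vec a B)$ by $I(\vec x:\vec x\,\vec a B)$; this step follows from data processing (iii), since discarding $\vec x$ is a local operation. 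Classical self-conditioning (v) turns the latter into $H(\vec x)$, and consistency (i) gives $H(\vec x)\le m$. Together these yield $I(\vec a:\vec x B)\le m$.

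\textbf{Lower bound.} We need $I(\vec a:\vec x B)\ge\sum_K I(a_K:\vec x B)$, which follows from the chain rule and independence of the $a_K$. Write
\begin{eqnarray}
I(a_0\ldots a_{N-1}:\vec x B)=I(a_0:\vec x B)+I(a_1\ldots a_{N-1}:\vec x B\,a_0).
\end{eqnarray}
Expand the second term by the chain rule as $I(a_1\ldots:a_0)+I(a_1\ldots:\vec x B\mid a_0)$. The first piece vanishes because the bits are independent and classical. The conditional piece is at least $I(a_1\ldots:\vec x B)$, again using independence of $a_0$ from $a_1\ldots a_{N-1}$ together with the chain rule and nonnegativity. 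Iterating this over $K$ gives the sum. Finally, Bob's output when $b=K$ is produced by a local operation on $\vec x B$, with $b$ independent of everything else. Data processing (iii) therefore gives $I(a_K:\vec x B)\ge I(a_K:\beta\mid b=K)$. Summing over $K$ and using Definition~\ref{def:I_score} gives $I_{\mathrm{RAC}}\le I(\vec a:\vec xB)\le m$.

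\textbf{Main obstacle.} The hardest step is the superadditivity $I(\vec a:\vec xB)\ge\sum_K I(a_K:\vec xB)$. Here $B$ may be a non-classical system, so the only tools are the abstract chain rule and nonnegativity. The two terms to handle are $I(a_{>0}:\vec xB\,a_0)$ and $I(a_{>0}:\vec xB)$. The first attempt is to use symmetry and the chain rule:
\begin{eqnarray}
I(a_{>0}:\vec xB\,a_0)=I(a_{>0}:\vec xB)+I(a_{>0}:a_0\mid \vec xB).
\end{eqnarray}
The last term is a nonnegative conditional mutual information, which gives the needed inequality directly without requiring classicality of $B$.
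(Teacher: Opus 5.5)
Your proposal is correct and follows essentially the same route as the paper: it squeezes $I(\vec a:\vec x,B)$ between $m$ from above (product-state normalization, chain rule, data processing, classical self-conditioning) and $\sum_K I(a_K:\vec x,B)\ge I_{\mathrm{RAC}}$ from below (chain rule over the independent bits, then Bob's local decoding map). The only cosmetic difference is how the already-expanded bits are dropped: you use the chain rule plus nonnegativity of $I(a_{>0}:a_0\mid\vec x B)$, whereas the paper discards $\vec a_{<K}$ directly by data processing, and both follow immediately from Assumption~\ref{ass:info_calculus}.
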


\begin{proof}---Let $B$ be Bob's share of the pre-established resource and $\vec x\in\{0,1\}^m$ Alice's message. By product-state normalization and the chain rule,
\begin{eqnarray}
I(\vec a:\vec x,B) &=& I(\vec a:\vec x\mid B) \nonumber \\
	&=& I(\vec x:\vec a,B)-I(\vec x:B)\le I(\vec x:\vec a,B).
\end{eqnarray}
Data processing and classical self-conditioning give $I(\vec x:\vec a,B)\le I(\vec x:\vec x,\vec a,B)=H(\vec x)\le m$. Hence $I(\vec a:\vec x,B)\le m$. Conversely, expanding $I(\vec a:\vec x,B)$ by the chain rule over the independent bits $a_K$, discarding previous bits by data processing, and then applying Bob's local decoding map for the branch $b=K$ yields
\begin{eqnarray}
I(\vec a:\vec x,B) &\ge& \sum_{K=0}^{N-1}I(a_K:\vec x,B) \nonumber \\
	&\ge& \sum_{K=0}^{N-1}I(a_K:\beta\mid b=K)=I_{\mathrm{RAC}}.
\end{eqnarray}
Combining the two inequalities proves Eq.~(\ref{eq:IC_bound}).
\end{proof}

\subsection{No-signaling correlations and the CHSH game}

\begin{definition}[No-signaling box and CHSH functional]
\label{def:CHSH_box}
A binary-input/binary-output bipartite box is a conditional distribution $P(A,B\mid s,t)$ with $s,t,A,B\in\{0,1\}$. It is no-signaling when Alice's marginal is independent of $t$ and Bob's marginal is independent of $s$. The CHSH winning predicate is $A\oplus B=st$, and we use
\begin{eqnarray}
S_{\mathrm{CHSH}}:=\sum_{s,t\in\{0,1\}}\Pr[A\oplus B=st\mid s,t]
\label{eq:S_CHSH_def}
\end{eqnarray}
so that the uniform-input winning probability is $S_{\mathrm{CHSH}}/4$.
\end{definition}

\begin{lemma}[Correlation form]
\label{lem:CHSH_corr_form}
Let $\alpha=(-1)^A$, $\widetilde\beta=(-1)^B$, and $E_{st}:=\mathbb E[\alpha\widetilde\beta\mid s,t]$. Then
\begin{eqnarray}
S_{\mathrm{CHSH}}=2+\frac12\left(E_{00}+E_{01}+E_{10}-E_{11}\right).
\label{eq:S_CHSH_corr_form}
\end{eqnarray}
\end{lemma}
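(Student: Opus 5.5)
The plan is to rewrite each conditional winning probability in terms of the correlator $E_{st}$, and then sum the four terms with the signs dictated by the CHSH predicate.

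\emph{Step 1 (same-parity probability).} Fix inputs $(s,t)$. Since $\alpha\widetilde\beta=(-1)^{A\oplus B}$ takes only the values $\pm1$,
\begin{eqnarray}
E_{st}=\Pr[A\oplus B=0\mid s,t]-\Pr[A\oplus B=1\mid s,t].
\end{eqnarray}
These two probabilities sum to one, so $\Pr[A\oplus B=0\mid s,t]=(1+E_{st})/2$ and $\Pr[A\oplus B=1\mid s,t]=(1-E_{st})/2$.

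\emph{Step 2 (apply the winning predicate).} The predicate is $A\oplus B=st$, so the case split is by the value of the product $st$. For the three input pairs $(s,t)\in\{(0,0),(0,1),(1,0)\}$ we have $st=0$. The winning probability is then $(1+E_{st})/2$. For $(s,t)=(1,1)$ we have $st=1$, and the winning probability is $(1-E_{11})/2$.

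\emph{Step 3 (sum).} Summing the four terms as in Eq.~(\ref{eq:S_CHSH_def}), the constants contribute $4\cdot\tfrac12=2$. The correlators contribute $\tfrac12\left(E_{00}+E_{01}+E_{10}-E_{11}\right)$, which is exactly Eq.~(\ref{eq:S_CHSH_corr_form}).

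No step here is a genuine obstacle. The only point needing care is the sign bookkeeping: only the $(1,1)$ input pair flips the sign of its correlator. The identity uses no no-signaling property. It holds for any conditional distribution $P(A,B\mid s,t)$ with binary inputs and outputs.
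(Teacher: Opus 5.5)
Your proof is correct and is essentially the paper's argument. The paper writes the win indicator compactly as $\frac12\bigl[1+(-1)^{st}\alpha\widetilde\beta\bigr]$ and takes conditional expectations, which is exactly your Steps 1--2 (the case split on $st$) packaged into one formula; your remark that no-signaling is never used is also accurate.
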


\begin{proof}---The win indicator is $\frac12[1+(-1)^{st}\alpha\widetilde\beta]$. Taking conditional expectations and summing over $(s,t)$ gives Eq.~(\ref{eq:S_CHSH_corr_form}).
\end{proof}

\begin{theorem}[Classical and quantum CHSH bounds]
\label{thm:CHSH_bounds}
Local hidden-variable models satisfy $S_{\mathrm{CHSH}}\le3$. Quantum theory satisfies Tsirelson's bound
\begin{eqnarray}
S_{\mathrm{CHSH}}\le 2+\sqrt2,
\label{eq:Tsirelson_S_compact}
\end{eqnarray}
equivalently $E_{00}+E_{01}+E_{10}-E_{11}\le2\sqrt2$ for $\pm1$ observables.
\end{theorem}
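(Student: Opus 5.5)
The plan is to prove the two bounds separately, both starting from the correlation form of Lemma~\ref{lem:CHSH_corr_form}, which reduces everything to bounding $E_{00}+E_{01}+E_{10}-E_{11}$ by $2$ (local hidden variables) and by $2\sqrt2$ (quantum).

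\emph{Local hidden-variable bound.} A local hidden-variable model writes $P(A,B\mid s,t)=\int d\lambda\,\rho(\lambda)\,P(A\mid s,\lambda)P(B\mid t,\lambda)$. The correlation functional is linear in $P$, so its maximum is attained at deterministic local strategies. These are the extreme points, obtained by absorbing local randomness into $\lambda$. For a deterministic strategy, set $\alpha_s=(-1)^{A(s)}$ and $\widetilde\beta_t=(-1)^{B(t)}$, with all four values in $\{\pm1\}$. The functional then becomes
\begin{eqnarray}
\alpha_0(\widetilde\beta_0+\widetilde\beta_1)+\alpha_1(\widetilde\beta_0-\widetilde\beta_1).
\end{eqnarray}
Exactly one of $\widetilde\beta_0\pm\widetilde\beta_1$ equals $\pm2$ and the other vanishes, so the expression is at most $2$. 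Lemma~\ref{lem:CHSH_corr_form} then gives $S_{\mathrm{CHSH}}\le 2+1=3$.

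\emph{Tsirelson bound.} In the quantum case we have $E_{st}=\langle\psi|\mathcal A_s\otimes\mathcal B_t|\psi\rangle$ with Hermitian $\pm1$ observables satisfying $\mathcal A_s^2=\mathcal B_t^2=\openone$. Measurements may first be assumed projective by Naimark dilation, and the state pure by purification; neither changes the statistics. Define the Bell operator $\mathcal C=\mathcal A_0(\mathcal B_0+\mathcal B_1)+\mathcal A_1(\mathcal B_0-\mathcal B_1)$. The main step is the expected obstacle; the cleanest route I would take is the Cauchy--Schwarz argument. Write
\begin{eqnarray}
\langle\mathcal C\rangle\le\|(\mathcal B_0+\mathcal B_1)\psi\|+\|(\mathcal B_0-\mathcal B_1)\psi\|,
\end{eqnarray}
using that each $\mathcal A_s\otimes\openone$ is unitary and Hermitian. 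Then apply $x+y\le\sqrt{2(x^2+y^2)}$ together with
\begin{eqnarray}
\|(\mathcal B_0+\mathcal B_1)\psi\|^2+\|(\mathcal B_0-\mathcal B_1)\psi\|^2=2\langle\mathcal B_0^2+\mathcal B_1^2\rangle=4,
\end{eqnarray}
where the cross terms cancel. This gives $\langle\mathcal C\rangle\le\sqrt8=2\sqrt2$. An alternative route is Tsirelson's identity $\mathcal C^2=4\openone-[\mathcal A_0,\mathcal A_1]\otimes[\mathcal B_0,\mathcal B_1]$ with $\|[\cdot,\cdot]\|\le2$, and I would mention it as a check. Substituting into Lemma~\ref{lem:CHSH_corr_form} yields $S_{\mathrm{CHSH}}\le2+\sqrt2$.

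Tightness is not required by the statement, but I would remark on it. A maximally entangled two-qubit state with $\mathcal A_s\in\{Z,X\}$ and $\mathcal B_t=(Z\pm X)/\sqrt2$ saturates the quantum bound, and the constant strategy $A=B=0$ saturates the classical one.
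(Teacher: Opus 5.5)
Your proof is correct. It follows the paper's two-part structure, but the key quantum step is different.

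\textbf{Local bound.} The paper works in bit language. The four clauses $A_0\oplus B_0=0$, $A_0\oplus B_1=0$, $A_1\oplus B_0=0$, $A_1\oplus B_1=1$ are jointly inconsistent (their XOR gives $0=1$), so a deterministic strategy wins at most three. You instead use Lemma~\ref{lem:CHSH_corr_form} and the factorization $\alpha_0(\widetilde\beta_0+\widetilde\beta_1)+\alpha_1(\widetilde\beta_0-\widetilde\beta_1)\le 2$. The two arguments are equivalent. Yours has the small merit of being the commuting-scalar version of the Bell operator you use next.

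\textbf{Tsirelson's bound.} The paper uses the identity $\widehat{\mathcal B}^{\,2}=4\openone-[\widehat A_0,\widehat A_1]\otimes[\widehat B_0,\widehat B_1]$. This gives a state-independent operator-norm bound and shows that noncommutativity is exactly what lifts $2$ to $2\sqrt2$. However, it needs $\widehat A_s^2=\widehat B_t^2=\openone$ exactly. It also tacitly uses $\|[\widehat A_0,\widehat A_1]\|\le 2$, which you spell out.

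Your Cauchy--Schwarz argument bounds $\langle\mathcal C\rangle$ state by state. It uses only $\|\mathcal A_s\|\le 1$ and $\mathcal B_0^2+\mathcal B_1^2\le 2\openone$. So it applies directly to POVM-derived observables of the form $2M_+-\openone$ with $0\le M_+\le\openone$, and to mixed states by convexity. The Naimark dilation and purification you invoke are harmless but unnecessary on your route. Dilation is what the paper's identity would need for non-projective measurements.
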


\begin{proof}---For the local bound, it suffices to check deterministic assignments: the four equations $A_0\oplus B_0=0$, $A_0\oplus B_1=0$, $A_1\oplus B_0=0$, $A_1\oplus B_1=1$ cannot all hold simultaneously, so at most three CHSH clauses are won; convexity gives the result. For the quantum bound, define $\widehat{\mathcal B}=\widehat A_0\otimes(\widehat B_0+\widehat B_1)+\widehat A_1\otimes(\widehat B_0-\widehat B_1)$. For $\pm1$ observables, $\widehat{\mathcal B}^{\,2}=4\openone-[\widehat A_0,\widehat A_1]\otimes[\widehat B_0,\widehat B_1]$, hence $\|\widehat{\mathcal B}\|\le2\sqrt2$, proving the correlator bound and Eq.~(\ref{eq:Tsirelson_S_compact}).
\end{proof}

We shall often use the isotropic one-parameter family
\begin{eqnarray}
\Pr[A\oplus B=st\mid s,t]=\frac{1+E}{2} \quad (0 \le E \le1),
\label{eq:isotropic_CHSH_compact}
\end{eqnarray}
for which $S_{\mathrm{CHSH}}=2(1+E)$. The classical and quantum bounds become $E\le1/2$ and $E\le1/\sqrt2$, respectively.

\subsection{The minimal RAC instance and its relation to CHSH}

The smallest nontrivial RAC instance, $(N,m)=(2,1)$, realizes the CHSH predicate directly. Alice holds $(a_0,a_1)$ and Bob holds $b\in\{0,1\}$. Alice inputs $s=a_0\oplus a_1$ into the box, receives $A$, and sends
\begin{eqnarray}
x:=a_0\oplus A.
\label{eq:seed_x_compact}
\end{eqnarray}
Bob inputs $t=b$, receives $B$, and outputs
\begin{eqnarray}
\beta:=x\oplus B=a_0\oplus A\oplus B.
\label{eq:seed_beta_compact}
\end{eqnarray}

\begin{lemma}[CHSH equals the seed RAC success aggregation]
\label{lem:CHSH_RAC_seed}\label{lem:CHSH_equals_RAC}
For the above protocol, if $P_0=\Pr[\beta=a_0\mid b=0]$ and $P_1=\Pr[\beta=a_1\mid b=1]$, then
\begin{eqnarray}
S_{\mathrm{CHSH}}=2(P_0+P_1).
\label{eq:CHSH_RAC_identity}
\end{eqnarray}
\end{lemma}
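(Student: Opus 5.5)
Under the protocol, the event $\beta=a_b$ should coincide with the CHSH win event $A\oplus B=st$ for the box inputs $s=a_0\oplus a_1$, $t=b$. I plan to check this branch by branch and then average over the database bits, using that $\vec a$ is uniform and independent of the box.

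\emph{Step 1 (pointwise equivalence).} From Eq.~(\ref{eq:seed_beta_compact}), $\beta=a_0\oplus A\oplus B$. For $b=0$ (so $t=0$), $\beta=a_0$ holds iff $A\oplus B=0=st$. For $b=1$ (so $t=1$), $\beta=a_1$ holds iff $A\oplus B=a_0\oplus a_1=s=st$. Hence, for every $(a_0,a_1,b)$, the event $\{\beta=a_b\}$ is exactly the event $\{A\oplus B=st\}$ with $s=a_0\oplus a_1$ and $t=b$.

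\emph{Step 2 (averaging).} Condition on $b=K$. Since $a_0,a_1$ are unbiased and independent, $s=a_0\oplus a_1$ is uniform on $\{0,1\}$. The box's behaviour depends on $\vec a$ only through its input $s$, because $\vec a$ is independent of the pre-established resource. Therefore
\begin{eqnarray}
P_K=\frac12\sum_{s\in\{0,1\}}\Pr[A\oplus B=sK\mid s,t=K].
\end{eqnarray}
Summing over $K\in\{0,1\}$ gives $P_0+P_1=\frac12\sum_{s,t}\Pr[A\oplus B=st\mid s,t]=S_{\mathrm{CHSH}}/2$, by Definition~\ref{def:CHSH_box}. This is Eq.~(\ref{eq:CHSH_RAC_identity}).

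The only delicate point is the second step. We must justify that conditioning on the particular pair $(a_0,a_1)$, rather than just on $s$, does not change the box statistics. This follows because the box is queried with input $s$ alone and the data are independent of the box. Bob's independent choice of $b$ also ensures that conditioning on $b=K$ leaves the distribution of $s$ uniform.
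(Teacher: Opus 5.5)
Your proof is correct and follows essentially the same route as the paper: a branch-by-branch identification of the event $\{\beta=a_b\}$ with the CHSH win event $A\oplus B=st$, followed by averaging over the uniform input $s=a_0\oplus a_1$. Your remark that the box statistics depend on $\vec a$ only through $s$, because the resource is independent of the data, makes explicit a step the paper leaves implicit.
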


\begin{proof}---For $b=0$, correctness is $A\oplus B=0=st$. For $b=1$, correctness is $a_0\oplus A\oplus B=a_1$, equivalently $A\oplus B=a_0\oplus a_1=s=st$. Averaging over the unbiased input $s$ gives $P_0+P_1=\frac12 S_{\mathrm{CHSH}}$.
\end{proof}

Thus CHSH is not merely an external Bell parameter: in the seed RAC it is exactly the aggregate query-conditioned success. This bridge is the starting point for the IC amplification mechanism used below.

\section{Neural Information Causality Framework}\label{sec:neural-ic-framework}

We now transplant the RAC/IC structure into representation learning. The important restriction is not that the computation is literally distributed between two laboratories, but that it is causally separated into an encoding stage and a later query-conditioned decoding stage. In that setting the representation is the only data-dependent interface between past data and future answers.

\begin{center}
\emph{In a query-separated architecture with a finite-capacity interface, the representation functions operationally as a message.}
\end{center}

This section proves a representation-theoretic version of IC while keeping two notions distinct. The embedding statement is purely causal: any query-separated architecture induces a RAC experiment, and the observable Neural-RAC score is bounded by the information about the fresh database that reaches the decoder through the bottleneck and its allowed side resource. A capacity statement is then obtained only after one supplies an independent physical certificate for the interface---for example a finite alphabet, finite precision, or a noisy-channel capacity. This separation is important: without a precision, noise, bandwidth, or energy constraint, an ideal real-valued interface need not have any finite information capacity.

\subsection{Query-separated computation and the Neural-RAC primitive}

\begin{figure}[t]
\centering
\includegraphics[width=1.00\linewidth]{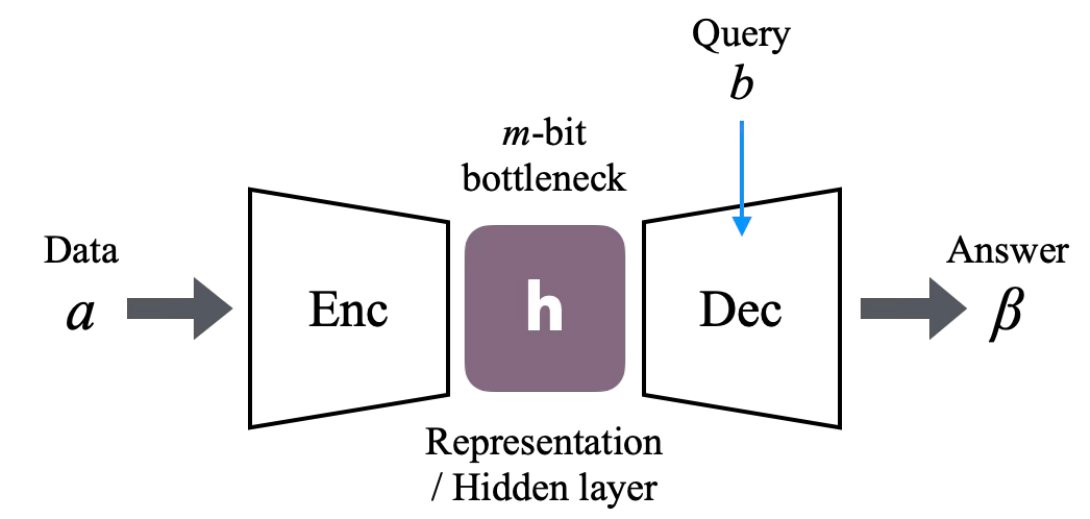}
\caption{\textbf{Neural-RAC as a query-separated bottleneck.} The encoder commits to a data-dependent representation $H$ before the query arrives. The decoder later receives $(H,b)$ and outputs $\beta$ as a guess for $a_b$. Trainable parameters are fixed before the episode; the bottleneck, not the weights, is the only data-dependent causal carrier from the sampled database to the answer.}
\label{fig:neuralrac}
\end{figure}

\begin{definition}[Query-separated computation]
\label{def:query_separated}
Let $\vec a$ be a data register and $b$ a query register. A computation producing $\beta$ is query-separated if there is an intermediate system $H$ and maps
\begin{eqnarray}
H\leftarrow \mathrm{Enc}(\vec a;W), \quad \beta\leftarrow \mathrm{Dec}(H,b;W),
\label{eq:query_separated_maps}
\end{eqnarray}
such that the encoding is executed before $b$ is supplied, and the decoding has access to $\vec a$ only through $H$. The parameters $W$ are fixed before the episode and are independent of the freshly sampled database. We call $H$ the bottleneck representation.
\end{definition}

The independence of $W$ from the episode is essential. A trained model may of course store information about its training distribution, but in the Neural-RAC experiment the database $\vec a$ is freshly sampled after the protocol is fixed. Otherwise the weights themselves would be an uncounted data-dependent memory.

\begin{definition}[Neural random access coding (Neural-RAC)]
\label{def:neural_RAC}
A Neural-RAC instance consists of $N$ independent unbiased database bits $\vec a=(a_0,\ldots,a_{N-1})^T$ and a uniformly random query $b\in\{0,\ldots,N-1\}$. An admissible protocol is a query-separated computation
\begin{eqnarray}
H\leftarrow \mathrm{Enc}(\vec a;R_A,W),  \quad  \beta\leftarrow \mathrm{Dec}(H,b;R_B,W),
\label{eq:neural_RAC_maps}
\end{eqnarray}
where $R_A,R_B$ are local shares of a pre-established resource $R$ and $W$ denotes all fixed trainable parameters. The resource and the weights are independent of $(\vec a,b)$. The goal is to output $\beta\in\{0,1\}$ approximating $a_b$, with
\begin{eqnarray}
P_K:=\Pr[\beta=a_K\mid b=K].
\label{eq:neural_Pk_def}
\end{eqnarray}
\end{definition}

\begin{definition}[Effective bottleneck capacity as a physical certificate]\label{def:effective-capacity}
Let $B$ denote the decoder-side share of the allowed pre-established resource, and let $\mathsf{H}$ denote a specified physical model of the interface $H$ together with its admissible encoders. The effective capacity of that interface for the Neural-RAC experiment is an \emph{a priori} certificate
\begin{eqnarray}
C_H(\mathsf{H}) := \sup_{p(\vec a),\,\mathrm{Enc}\in\mathsf{H}} I(\vec a:H\mid B),
\label{eq:physical-capacity-def}
\end{eqnarray}
where the supremum is taken over fresh databases independent of the resource and over all admissible encoders obeying the physical constraints of $\mathsf{H}$. Since $\vec a$ is independent of $B$, this is equivalently a bound on $I(\vec a:H,B)$. In a concrete episode, saying that $H$ has effective capacity $C_H$ means that the physical model has already certified
\begin{eqnarray}
I(\vec a:H,B) \le C_H.
\label{eq:capacity-certificate}\label{eq:effective_capacity_def}
\end{eqnarray}
Thus $C_H$ is not inferred from the observed Neural-RAC score. It is supplied by the interface model. A hard classical alphabet $H\in\{0,1\}^m$ gives $C_H\le m$; a $d$-coordinate $q$-bit quantized vector gives $C_H\le dq$; and a real-valued AWGN interface with an average power constraint gives the corresponding Shannon capacity. By contrast, an unconstrained ideal real-valued register has no finite $C_H$ in this sense.
\end{definition}

\subsection{A theory-independent information score for Neural-RAC}

\begin{definition}[Neural-RAC information score]
\label{def:neural_RAC_score}
The observable Neural-RAC score is
\begin{eqnarray}
I_{\mathrm{N\text{-}RAC}}:=\sum_{K=0}^{N-1}I(a_K:\beta\mid b=K),
\label{eq:INRAC_def}
\end{eqnarray}
where the mutual information is Shannon mutual information between the classical variables observed in the experiment.
\end{definition}

\begin{theorem}[Accuracy implies information]
\label{thm:neural_accuracy_information}\label{thm:accuracy_implies_information}
For independent unbiased database bits, every Neural-RAC protocol satisfies
\begin{eqnarray}
I_{\mathrm{N\text{-}RAC}}\ge N-\sum_{K=0}^{N-1}h(P_K),
\label{eq:INRAC_lower_bound}
\end{eqnarray}
and, in the query-symmetric case $P_K\equiv P$,
\begin{eqnarray}
I_{\mathrm{N\text{-}RAC}}\ge N[1-h(P)].
\label{eq:INRAC_lower_symmetric}
\end{eqnarray}
\end{theorem}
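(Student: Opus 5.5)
The plan is to reduce the statement to the single-bit estimate of Lemma~\ref{lem:success_MI}, applied query by query, exactly as in the proof of Theorem~\ref{thm:IRAC_lower_bound}. The Neural-RAC score in Definition~\ref{def:neural_RAC_score} has the same form as $I_{\mathrm{RAC}}$: a sum over $K$ of Shannon mutual informations between the classical variables $a_K$ and $\beta$, conditioned on $b=K$. The internal structure of the protocol (the bottleneck $H$, the weights $W$, the resource shares $R_A,R_B$) will play no role. Only the joint distribution of $(\vec a,b,\beta)$ enters.

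First, I fix $K$ and work under the conditional distribution given $b=K$. The query $b$ is sampled independently of $\vec a$. The resource and the weights are independent of $(\vec a,b)$ by Definition~\ref{def:neural_RAC}. Hence, conditional on $b=K$, the bit $a_K$ remains uniform on $\{0,1\}$. The output $\beta$ is then some (possibly randomized) estimate $\widehat A$ of $A:=a_K$ under this conditional law, and its success probability is precisely $P_K$ from Eq.~(\ref{eq:neural_Pk_def}). Lemma~\ref{lem:success_MI} therefore gives $I(a_K:\beta\mid b=K)\ge 1-h(P_K)$.

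Second, I sum this over $K=0,\ldots,N-1$, which yields Eq.~(\ref{eq:INRAC_lower_bound}) directly. Third, the query-symmetric case follows by setting $P_K\equiv P$, so that $\sum_K h(P_K)=N h(P)$, giving Eq.~(\ref{eq:INRAC_lower_symmetric}).

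The only point that needs care is the unbiasedness of $a_K$ conditional on $b=K$. This is where the admissibility conditions are used: the query must be independent of the freshly sampled database, so conditioning on the query does not bias the target bit. If instead $b$ could be correlated with $\vec a$, Lemma~\ref{lem:success_MI} would not apply as stated. Beyond this check, the argument is a direct transcription of Theorem~\ref{thm:IRAC_lower_bound}. It needs no information-calculus assumptions, because all quantities involved are classical Shannon quantities.
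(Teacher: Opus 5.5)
Your proposal is correct and matches the paper's proof: condition on $b=K$, apply Lemma~\ref{lem:success_MI} to $\beta$ as an estimate of $a_K$, sum over $K$, and substitute $P_K\equiv P$ for the symmetric case. Your explicit check that $a_K$ stays unbiased given $b=K$ is a detail the paper leaves implicit; it needs only the independence of the query from the database, not the independence of the resource or the weights.
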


\begin{proof}---Condition on $b=K$ and apply Lemma~\ref{lem:success_MI} to the binary estimate $\beta$ of $a_K$, then sum over $K$.
\end{proof}

\subsection{Neural information causality: an embedding and capacity theorem}

\begin{definition}[Neural information causality (Neural-IC)]
\label{def:Neural_IC}
Neural-IC is the requirement that a query-separated Neural-RAC architecture obey the capacity accounting rule
\begin{eqnarray}
I_{\mathrm{N\text{-}RAC}}\le C_H.
\label{eq:Neural_IC_general}
\end{eqnarray}
For a hard $m$-bit classical bottleneck this becomes
\begin{eqnarray}
I_{\mathrm{N\text{-}RAC}}\le m.
\label{eq:Neural_IC_m}
\end{eqnarray}
\end{definition}

\begin{theorem}[Neural-IC embedding and capacity theorem]\label{thm:neural-ic-embedding}\label{thm:Neural_IC}
Consider the Neural-RAC task of Definition~5, with the database independent of the pre-established resource and of the fixed parameters. If Assumption~1 holds for the resource theory, then the following two-step statement holds.
\begin{enumerate}
\item \emph{Embedding.} Independently of how the interface capacity is later certified,
\begin{eqnarray}
I_{\mathrm{N\text{-}RAC}} \le I(\vec a:H,B).
\label{eq:embedding-bound-revised}
\end{eqnarray}
\item \emph{Capacity consequence.} If the physical interface model supplies the independent certificate $I(\vec a:H,B)\le C_H$, then
\begin{eqnarray}
I_{\mathrm{N\text{-}RAC}} \le C_H.
\label{eq:capacity-bound-revised}
\end{eqnarray}
\item \emph{Hard bottleneck.} In particular, if $H\equiv \vec h\in\{0,1\}^m$ is a hard classical bottleneck, then
\begin{eqnarray}
I_{\mathrm{N\text{-}RAC}} \le m.
\label{eq:hard-bottleneck-revised}
\end{eqnarray}
\end{enumerate}
\end{theorem}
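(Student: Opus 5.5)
The plan follows the structure of the proof of Theorem~\ref{thm:IC_bound}, with the message $\vec x$ replaced by the bottleneck $H$. The embedding bound (item 1) is the only nontrivial step; items 2 and 3 then follow by chaining inequalities.

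For the embedding bound, I would first use the chain rule (Assumption~\ref{ass:info_calculus}(iv)) to expand $I(\vec a:H,B)$ bit by bit:
\[
I(\vec a:H,B)=\sum_{K=0}^{N-1} I(a_K:H,B\mid a_0,\ldots,a_{K-1}).
\]
Because the bits $a_K$ are independent of each other, of the resource, and of $W$, each term should dominate $I(a_K:H,B)$. To see this, apply the chain rule to $I(a_K:H,B,a_0,\ldots,a_{K-1})$ in two orders. Product-state normalization gives $I(a_K:a_0,\ldots,a_{K-1})=0$. Nonnegativity of conditional mutual information then yields
\[
I(a_K:H,B\mid a_{<K})\ge I(a_K:H,B).
\]
This is the step where the ``discarding previous bits'' in Theorem~\ref{thm:IC_bound} is made precise.

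Next, for each $K$ I would fix the branch $b=K$. The query is independent of $(\vec a,H,B)$ and enters only the decoder, and $W$ is a fixed classical parameter. So $\beta=\mathrm{Dec}(H,K;B,W)$ is a local transformation of the decoder's side $(H,B)$. Data processing (Assumption~\ref{ass:info_calculus}(iii)) then gives
\[
I(a_K:H,B)\ge I(a_K:\beta\mid b=K).
\]
Consistency (i) identifies the right-hand side with the Shannon quantity of Definition~\ref{def:neural_RAC_score}. Summing over $K$ gives $I_{\mathrm{N\text{-}RAC}}\le I(\vec a:H,B)$.

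\textbf{Main obstacle.} The hardest point is justifying that query separation makes the branch-$K$ decoder a genuine local map on $(H,B)$ that never touches $\vec a$ directly. If the decoder had any other access to $\vec a$, the data-processing step would fail. I would argue this directly from Definition~\ref{def:query_separated}: decoding accesses $\vec a$ only through $H$, and $W$ and $R$ are independent of the episode. Conditioning on $b=K$ does not disturb the joint state of $(\vec a,H,B)$, because $b$ is sampled independently after encoding.

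Item 2 is then immediate: combine the embedding bound with the certificate in Eq.~(\ref{eq:capacity-certificate}). For item 3, I would reproduce the upper-bound half of the proof of Theorem~\ref{thm:IC_bound} with $\vec x\to\vec h$. First,
\[
I(\vec a:\vec h,B)=I(\vec a:\vec h\mid B)\le I(\vec h:\vec a,B),
\]
using product-state normalization and the chain rule. Then data processing and classical self-conditioning give
\[
I(\vec h:\vec a,B)\le I(\vec h:\vec h,\vec a,B)=H(\vec h)\le m.
\]
This shows that a hard $m$-bit alphabet certifies $C_H=m$, and item 2 then yields $I_{\mathrm{N\text{-}RAC}}\le m$.
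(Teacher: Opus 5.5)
Your proof is correct and follows the paper's argument essentially step for step: the chain-rule expansion over the independent bits, the removal of $\vec a_{<K}$, data processing through the branch-$K$ decoder, and the Theorem~\ref{thm:IC_bound} argument for the hard bottleneck. The only cosmetic difference is in removing $\vec a_{<K}$: you apply the chain rule in two orders and use nonnegativity of conditional mutual information, whereas the paper first rewrites the conditional term as $I(a_K:H,B,\vec a_{<K})$ and then discards $\vec a_{<K}$ by data processing.
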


\begin{proof}---First prove the embedding inequality, which is the substantive part of the result. Let $\vec a_{<K}:=(a_0,\ldots,a_{K-1})^T$. By the chain rule and independence of the database bits,
\begin{eqnarray}
I(\vec a:H,B)
&=&\sum_{K=0}^{N-1} I(a_K:H,B\mid \vec a_{<K}) \nonumber\\
&=&\sum_{K=0}^{N-1} I(a_K:H,B,\vec a_{<K}).
\label{eq:embedding-chain-revised}
\end{eqnarray}
Discarding $\vec a_{<K}$ from the second system is local, so data processing gives $I(a_K:H,B,\vec a_{<K})\ge I(a_K:H,B)$. Conditioned on $b=K$, the output $\beta$ is obtained by a local decoding transformation of $(H,B)$, hence
\begin{eqnarray}
I(a_K:H,B) \ge I(a_K:\beta\mid b=K).
\label{eq:embedding-dp-revised}
\end{eqnarray}
Summing over $K$ yields Eq.~(\ref{eq:embedding-bound-revised}). If the independently specified interface model supplies $I(\vec a:H,B)\le C_H$, Eq.~(\ref{eq:capacity-bound-revised}) follows immediately. Finally, for a hard classical bottleneck $H=\vec h$, product-state normalization gives $I(\vec a:B)=0$, and the same chain-rule/data-processing argument used in Theorem~2 gives
\begin{eqnarray}
I(\vec a:\vec h,B) \le I(\vec h:\vec h,\vec a,B) = H(\vec h)\le m,
\label{eq:hard-bottleneck-proof-revised}
\end{eqnarray}
where the equality uses classical self-conditioning. Combining this with Eq.~(\ref{eq:embedding-bound-revised}) proves the hard-bottleneck statement.
\end{proof}

\begin{proposition}[Useful interface-capacity certificates]\label{prop:capacity-certificates}
The certificate in Definition~\ref{def:effective-capacity} can be evaluated before running the Neural-RAC experiment in several standard interface models.
\begin{enumerate}
\item If $H$ takes values in a finite alphabet $\mathcal H$, then $C_H\le \log |\mathcal H|$.
\item If $H$ is a $d$-coordinate register quantized to $q$ bits per coordinate, then $C_H\le dq$.
\item If $H$ is the output of a memoryless real AWGN interface $H=U+Z$ with $Z\sim\mathcal N(0,\sigma^2 I_d)$ and an average power constraint $\mathbb E\|U\|^2\le dP$, then
\begin{eqnarray}
C_H \le \frac{d}{2}\log(1+P/\sigma^2).
\label{eq:awgn-capacity-certificate}
\end{eqnarray}
\item If no alphabet, precision, noise, bandwidth, or energy constraint is specified, then no finite capacity certificate follows from the symbol ``real-valued representation'' alone.
\end{enumerate}
\end{proposition}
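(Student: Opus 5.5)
The plan is to bound $I(\vec a:H\mid B)$ uniformly over admissible encoders and input distributions, and then use the independence of $\vec a$ and $B$ to turn this into a bound on $I(\vec a:H,B)$. The key reduction in every case is the chain rule of Assumption~\ref{ass:info_calculus}: since $I(\vec a:B)=0$,
\begin{eqnarray}
I(\vec a:H,B)=I(\vec a:H\mid B)\le I(H:\vec a,B).
\end{eqnarray}
So it suffices to bound the total information $H$ can carry about everything upstream of it, whatever that is. Items 1 and 2 then reduce to a counting argument, item 3 to the Gaussian channel coding converse, and item 4 to an explicit construction.

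For item 1, $H$ is a classical register, so classical self-conditioning and data processing give
\begin{eqnarray}
I(H:\vec a,B)\le I(H:H,\vec a,B)=H(H)\le\log|\mathcal H|.
\end{eqnarray}
This is exactly the argument of Theorem~\ref{thm:IC_bound} and the hard-bottleneck step of Theorem~\ref{thm:Neural_IC}, with $m$ replaced by $\log|\mathcal H|$. Item 2 is the special case $|\mathcal H|=(2^q)^d$, which gives $dq$.

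For item 3, I would treat $H=U+Z$ with $Z$ independent of $(\vec a,B)$ and of the encoder's randomness. The Markov chain $(\vec a,B)\to U\to H$ gives $I(\vec a:H\mid B)\le I(U:H\mid B)$, and conditioning on $B=\beta$ leaves an AWGN channel. I would then use $h(H\mid U,B)=h(Z)=\frac d2\log(2\pi e\sigma^2)$ together with the maximum-entropy bound
\begin{eqnarray}
h(H\mid B)\le \sum_i h(H_i)\le\sum_i\tfrac12\log\!\bigl(2\pi e(P_i+\sigma^2)\bigr),
\end{eqnarray}
where $P_i=\mathbb E[U_i^2]$. Finally I would apply concavity of $\log$ (Jensen) under $\sum_iP_i\le dP$, which yields $\frac d2\log(1+P/\sigma^2)$. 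Conditioning on $B$ only lowers $h(H\mid B)$ relative to $h(H)$, so the unconditional power constraint suffices.

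For item 4, I would exhibit a counterexample: the injective encoder $H=\sum_k a_k2^{-k-1}\in[0,1)$ with an exact decoder, so that $I(\vec a:H)=N$ for every $N$. The supremum in Definition~\ref{def:effective-capacity} is therefore unbounded.

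The main obstacle is item 3. The bottleneck may be quantum, or correlated with $B$ in a general-theory sense, so the classical differential-entropy calculation has to be justified. I would handle this by insisting, as the interface model does, that the channel input $U$ is a classical real vector and that $H$ is its noisy classical output; the conditioning on $B$ can then be carried out pointwise in the classical values. Everything remaining is the standard Shannon converse.
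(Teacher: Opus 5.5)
Your proposal is correct and follows the paper's proof. Both use a counting bound $H(H)\le\log|\mathcal H|$ for finite alphabets; you reach it through the self-conditioning step of Theorem~\ref{thm:IC_bound}, and the paper reaches it through $H(H\mid B)\le H(H)$. Both use data processing down to the channel input $U$ followed by the Gaussian-channel capacity bound, and both use a real number that hides the whole database for item 4.

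Your explicit max-entropy and Jensen computation with $h(H\mid B)\le h(H)$ handles the unconditional constraint $\mathbb E\|U\|^2\le dP$ more carefully than the paper's appeal to a ``conditional power constraint.'' Your proposed fix for a non-classical $B$ is not quite right, because a quantum $B$ has no classical values to condition on pointwise. The concern is easily removed, though: apply data processing directly to your opening reduction, $I(H:\vec a,B)\le I(H:U)$, which holds since $H$ depends on $(\vec a,B)$ only through the classical $U$.
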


\begin{proof}---For a finite alphabet, $I(\vec a:H\mid B)\le H(H\mid B)\le H(H)\le \log |\mathcal H|$, and the quantized-vector statement is the special case $|\mathcal H|\le 2^{dq}$. For the AWGN model, data processing gives $I(\vec a:H\mid B)\le I(U:H\mid B)$; maximizing the latter under the stated conditional power constraint gives the usual real Gaussian-channel capacity per channel use. The final statement is the contrapositive lesson: without a physical constraint, arbitrarily fine encodings of the database can be hidden in an ideal real value.
\end{proof}

Theorem~\ref{thm:neural-ic-embedding} is the point at which the neural formulation becomes more than an analogy. It says that a query-separated architecture does not merely resemble a communication experiment; it induces one operationally. Proposition~\ref{prop:capacity-certificates} then explains how a particular physical representation model supplies the capacity number that must upper-bound the observable random-access information score.

\subsection{Consequences: capacity--accuracy tradeoffs and oracle-memory diagnostics}

\begin{corollary}[Capacity required for random-access accuracy]
\label{cor:capacity_accuracy}
In Neural-RAC with effective capacity $C_H$, independent unbiased database bits satisfy
\begin{eqnarray}
N-\sum_{K=0}^{N-1}h(P_K)\le C_H.
\label{eq:capacity_accuracy_general}
\end{eqnarray}
For a hard $m$-bit bottleneck, the right-hand side may be replaced by $m$. In the query-symmetric case $P_K\equiv P$,
\begin{eqnarray}
C_H\ge N[1-h(P)].
\label{eq:capacity_accuracy_symmetric}
\end{eqnarray}
\end{corollary}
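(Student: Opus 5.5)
The plan is to chain the two bounds already proved: the lower bound on the observable score from Theorem~\ref{thm:accuracy_implies_information}, and the upper bound from Theorem~\ref{thm:neural-ic-embedding}. Together they give
\begin{eqnarray}
N-\sum_{K=0}^{N-1}h(P_K)\;\le\; I_{\mathrm{N\text{-}RAC}}\;\le\; I(\vec a:H,B)\;\le\; C_H .
\end{eqnarray}
The left inequality is Eq.~(\ref{eq:INRAC_lower_bound}). The middle one is the embedding inequality, Eq.~(\ref{eq:embedding-bound-revised}). The right one is exactly the certificate Eq.~(\ref{eq:capacity-certificate}) that defines what it means for the interface to have effective capacity $C_H$, as in Definition~\ref{def:effective-capacity}.

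First I will check that the hypotheses of both theorems hold in the setting of the corollary. The database bits must be independent and unbiased. The database must be independent of the resource and of the weights, which holds by Definition~\ref{def:neural_RAC}. Assumption~\ref{ass:info_calculus} must hold for the resource theory. Given these, the general inequality Eq.~(\ref{eq:capacity_accuracy_general}) follows by transitivity.

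For the hard $m$-bit bottleneck I will instead invoke item~3 of Theorem~\ref{thm:neural-ic-embedding}, which gives $I_{\mathrm{N\text{-}RAC}}\le m$ directly. Alternatively, Proposition~\ref{prop:capacity-certificates}(1) with $|\mathcal H|=2^m$ yields the same bound. For the query-symmetric case I will simply substitute $P_K\equiv P$, so the sum becomes $N h(P)$, and rearrange to obtain Eq.~(\ref{eq:capacity_accuracy_symmetric}).

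There is no genuine obstacle here; the corollary is a composition of earlier results. The only point needing care is the status of $C_H$. It must be read as an a priori certificate bounding $I(\vec a:H,B)$ in the episode. It is not a quantity inferred from the observed score, so no circularity enters. I will also note that the certificate is stated as a bound on $I(\vec a:H\mid B)$, and this equals $I(\vec a:H,B)$ because $I(\vec a:B)=0$ by product-state normalization. This identity lets Definition~\ref{def:effective-capacity} be applied verbatim.
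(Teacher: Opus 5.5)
Your proposal is correct and matches the paper's proof: the paper likewise obtains the corollary by chaining the accuracy-to-information lower bound of Theorem~\ref{thm:accuracy_implies_information} with the upper bound of Theorem~\ref{thm:neural-ic-embedding} (item~3 for the hard $m$-bit case), then substituting $P_K\equiv P$. Your remark that $I(\vec a:H\mid B)=I(\vec a:H,B)$ because $I(\vec a:B)=0$ is the same justification the paper gives in Definition~\ref{def:effective-capacity}.
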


\begin{proof}---Combine Theorem~\ref{thm:neural_accuracy_information} with Theorem~\ref{thm:Neural_IC}.
\end{proof}

Corollary~\ref{cor:capacity_accuracy} gives a practical diagnostic. If a model appears to retrieve many independently sampled targets with high uniform accuracy through a tiny representation, then at least one assumption has failed: the representation capacity has been undercounted (for example by ignoring finite-precision leakage), the query has leaked into the encoding stage, the weights have stored the episode, or the assumed correlation resource is not classical/quantum.

\begin{example}[A baseline strategy saturating Neural-IC]
\label{ex:baseline_saturation}
For a hard $m$-bit bottleneck, the encoder may copy $\vec h=(a_0,\ldots,a_{m-1})$ and the decoder may output $h_b$ for $b<m$, while guessing randomly otherwise. Then $P_K=1$ for $K<m$ and $P_K=1/2$ for $K\ge m$, so $I_{\mathrm{N\text{-}RAC}}=m$. The bound is therefore an accounting law, not a prohibition on using the transmitted bits.
\end{example}

\subsection{Classical one-bit benchmarks and fair-access metrics}\label{sec:classical-benchmarks}

The bound $I_{\mathrm{N\text{-}RAC}}\le m$ is an accounting law, not by itself a claim of quantum advantage. A classical one-bit encoder can concentrate its information on one target bit and saturate the total score, as Example~1 shows. The meaningful comparison for correlation-assisted random access is therefore a fair-access metric: average or query-symmetric performance over all possible queries. The following theorem gives the exact optimal classical one-bit average success probability.

\begin{theorem}[Optimal classical one-bit average RAC]\label{thm:classical-onebit-optimum}
For $N$ independent unbiased database bits and a single classical message bit, the optimal classical average success probability under a uniformly random query is
\begin{eqnarray}
P^{\mathrm{cl}}_{\mathrm{avg}}(N,1)
=\frac12+2^{-N}\binom{N-1}{\lfloor (N-1)/2\rfloor}.
\label{eq:classical-onebit-optimum}
\end{eqnarray}
It is achieved by majority encoding, and asymptotically
\begin{eqnarray}
P^{\mathrm{cl}}_{\mathrm{avg}}(N,1)
=\frac12+\frac{1}{\sqrt{2\pi N}}+O(N^{-3/2}).
\label{eq:classical-onebit-asymptotic}
\end{eqnarray}
\end{theorem}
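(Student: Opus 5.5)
Our plan is to reduce the optimization to a deterministic encoder problem and then solve that problem exactly.

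\textbf{Reduction to deterministic strategies.} The average success probability is linear in the joint strategy, and shared classical randomness is a convex mixture. So it suffices to optimize over deterministic encoders $f:\{0,1\}^N\to\{0,1\}$ and deterministic decoders $g(x,b)$. For a fixed $f$, the best decoder on query $b=K$ with message $x$ outputs the majority value of $a_K$ within the preimage $f^{-1}(x)$. This gives
\[
P_{\mathrm{avg}}=\frac12+\frac{1}{N2^{N}}\sum_{K}\sum_{x}\Bigl|\sum_{\vec a\in f^{-1}(x)}(-1)^{a_K}\Bigr|\cdot\frac12 .
\]
Write $F(\vec a)=(-1)^{f(\vec a)}$. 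Since $\sum_{\vec a}(-1)^{a_K}=0$, the two preimage sums are negatives of each other. The inner sum over $x$ therefore equals $\bigl|\sum_{\vec a}F(\vec a)(-1)^{a_K}\bigr|=2^N|\widehat F(\{K\})|$, and so
\[
P_{\mathrm{avg}}=\frac12+\frac1{2N}\sum_K|\widehat F(\{K\})|.
\]

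\textbf{Optimizing the encoder.} Next we absorb signs by flipping input coordinates, which costs nothing because the decoder can undo each flip. We may then assume every $\widehat F(\{K\})\ge0$. The objective becomes $\frac1{2N}2^{-N}\sum_{\vec a}F(\vec a)\sum_K(-1)^{a_K}$. This is maximized pointwise by taking $F(\vec a)=\mathrm{sgn}\bigl(\sum_K(-1)^{a_K}\bigr)$, which is the majority encoder; ties for even $N$ may be broken arbitrarily. The maximum equals
\[
\frac{1}{2N}2^{-N}\sum_{\vec a}\Bigl|N-2|\vec a|\Bigr| .
\]
This pointwise argument is the key step, and it is what makes the result exact rather than only a bound.

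\textbf{Evaluating the binomial sum.} We need $\sum_{w}\binom Nw|N-2w|=2N\binom{N-1}{\lfloor (N-1)/2\rfloor}$. To prove it, use $(N-2w)\binom Nw=N\bigl[\binom{N-1}{w}-\binom{N-1}{w-1}\bigr]$. Summing over $w<N/2$ then telescopes to $N\binom{N-1}{\lfloor (N-1)/2\rfloor}$, and symmetry doubles this. We expect this step to be the main place where care is needed: the parity cases and the middle term at $w=N/2$ for even $N$ must be handled correctly. We would check $N=2,3$ directly, which give $3/4$ for both. Substituting the identity yields Eq.~(\ref{eq:classical-onebit-optimum}).

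\textbf{Asymptotics.} Apply Stirling to the central binomial coefficient: $\binom{N-1}{\lfloor (N-1)/2\rfloor}\approx 2^{N-1}\sqrt{2/(\pi N)}\,(1+O(1/N))$. Multiplying by $2^{-N}$ gives $1/\sqrt{2\pi N}+O(N^{-3/2})$. Handle odd and even $N$ separately to confirm that the correction is $O(N^{-3/2})$ in both cases.
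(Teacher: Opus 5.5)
Your proposal is correct and follows essentially the same route as the paper: reduce to deterministic strategies, express each query's optimal bias as $|\widehat F(\{K\})|$, flip signs to bound $\sum_K|\widehat F(\{K\})|$ by $2^{-N}\sum_{\vec a}\bigl|\sum_K(-1)^{a_K}\bigr|$ with equality for majority, then apply Stirling. The differences are cosmetic. You obtain the bias via MAP decoding on preimages, whereas the paper enumerates the four one-bit decoder maps. You also prove the absolute-moment identity by telescoping, which the paper only states.
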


\begin{proof}---It suffices to consider deterministic encoders and decoders because the average success probability is affine in the protocol. Write the database in $\pm1$ variables $X_i=(-1)^{a_i}$ and let the one-bit message be $M=f(X)\in\{\pm1\}$. For a fixed query $i$, any deterministic decoder is either constant or equal to $\pm M$; the constant choice has zero bias because $X_i$ is unbiased. Hence, the optimal bias for query $i$ is $|\mathbb E[X_i f(X)]|$. The average success probability for a fixed encoder is therefore
\begin{eqnarray}
P_{\mathrm{avg}}(f)=\frac12+\frac{1}{2N}\sum_{i=1}^N |\widehat f_i|,
\quad
\widehat f_i:=\mathbb E[X_i f(X)].
\label{eq:avg-success-fourier}
\end{eqnarray}
Choose signs $\sigma_i=\mathrm{sgn}(\widehat f_i)$ and define $Y_i=\sigma_i X_i$ and $g(Y)=f(X)$. Then
\begin{eqnarray}
\sum_{i=1}^N |\widehat f_i|
=\mathbb E\left[g(Y)\sum_{i=1}^N Y_i\right]
\le \mathbb E\left|\sum_{i=1}^N Y_i\right|.
\label{eq:majority-upper-bound}
\end{eqnarray}
Equality is achieved by the majority encoder $g(Y)=\mathrm{sgn}(\sum_iY_i)$, with any fixed tie-breaking convention when $N$ is even. Finally, for $S_N=\sum_iY_i$,
\begin{eqnarray}
\mathbb E|S_N|
=N2^{-N+1}\binom{N-1}{\lfloor (N-1)/2\rfloor},
\label{eq:random-walk-abs}
\end{eqnarray}
which gives Eq.~(\ref{eq:classical-onebit-optimum}). Stirling's formula gives Eq.~(\ref{eq:classical-onebit-asymptotic}).
\end{proof}

For $N=2^n$, the majority benchmark has success bias $O(2^{-n/2})$, the same scaling as the nested quantum protocol at $E=1/\sqrt2$ but with a smaller constant. In information-score units, the majority code approaches $1/(\pi\ln2)\simeq0.459$, whereas the Tsirelson nested protocol approaches $1/(2\ln2)\simeq0.721$. Thus the quantum correlation layer improves fair query-conditioned access while remaining below the one-bit Neural-IC limit.

\begin{figure*}[t]
\centering
\includegraphics[width=0.60\linewidth]{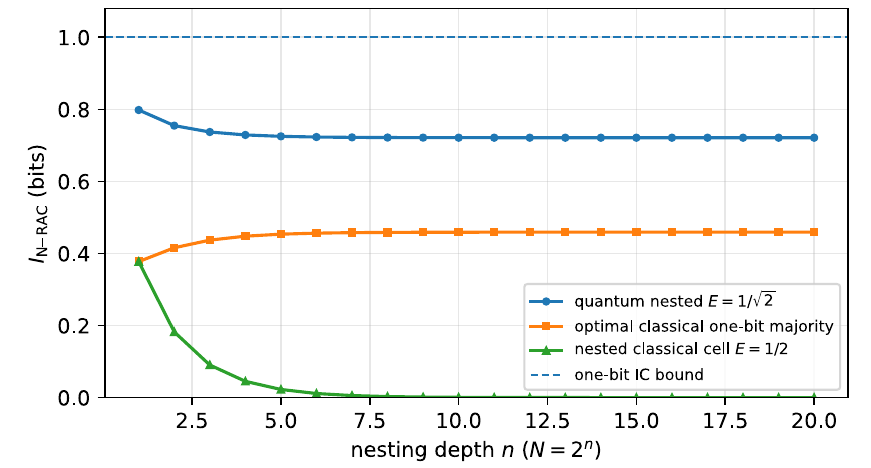}
\caption{Classical one-bit benchmark against nested correlation protocols. The optimal classical majority code has a nonzero large-$N$ information-score limit below the Tsirelson nested protocol, while the nested classical CHSH-cell strategy with $E=1/2$ decays to zero. All curves remain below the one-bit Neural-IC bound.}
\label{fig:classical-onebit-benchmark}
\end{figure*}

\section{Neural-CHSH Correlation Layer and a Tsirelson-Type Bound}\label{sec:neural-chsh}

The previous section established Neural-IC as a resource accounting rule: in a query-separated architecture, the total information the answer stage can extract about an a priori unknown database cannot exceed the information volume of the bottleneck. Here we show how that accounting becomes sharp when the encoder and decoder are endowed with a correlation resource whose strength is quantified by the CHSH value. The resulting phenomenon is best understood as an \emph{amplification instability}:
\begin{center}
\emph{If correlations are too strong, a one-bit bottleneck becomes an oracle memory.}
\end{center}

We formalize this statement by (i) introducing an isotropic family of no-signaling CHSH-type ``cells'' parameterized by a correlation bias $E$, (ii) constructing a correlation-enhanced Neural-RAC protocol by nesting these cells in a pyramid, and (iii) showing that Neural-IC forces the threshold $E \le 1/\sqrt{2}$. In this way, the Tsirelson bound emerges as a representation-theoretic stability condition.

\begin{figure}[t]
\centering
\includegraphics[width=1.00\linewidth]{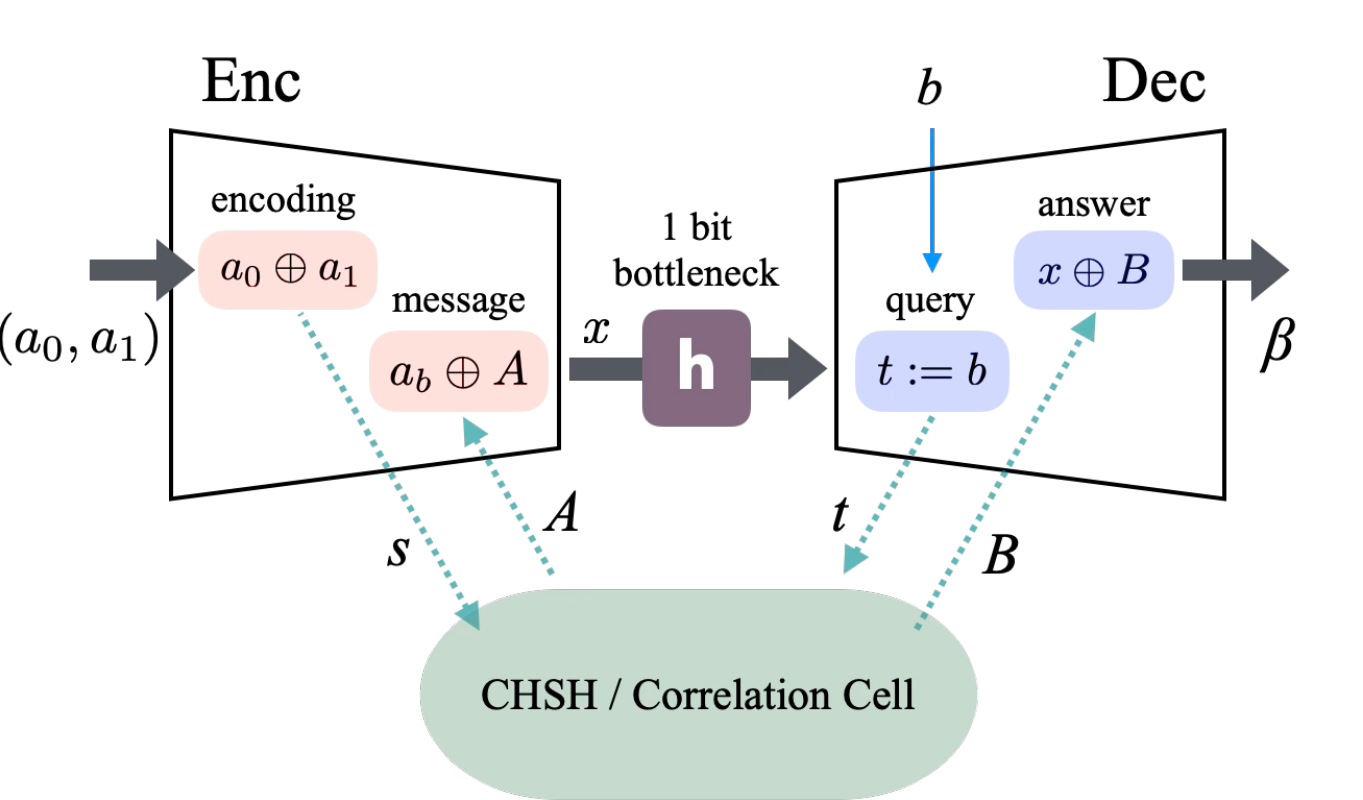}
\caption{\textbf{The minimal case $N=2,\,m=1$ reduces to a single CHSH cell.} For two input bits $(a_0,a_1)$ and a one-bit message budget, the Neural-RAC protocol can be represented by a bipartite
CHSH/correlation cell with inputs $s=a_0 \oplus a_1$ and $t=b$ and outputs $(A,B)$. Alice transmits $x=a_0\oplus A$ and Bob outputs $\beta=x\oplus B$, targeting $a_b$. This diagram serves as the entry point for relating Neural-RAC performance to CHSH correlators and Tsirelson-type bounds.}
\label{fig:neuralrac_chsh}
\end{figure}

\subsection{A CHSH correlation cell as a neural resource layer}

We model the shared correlation resource as a bank of independent binary-input/binary-output no-signaling boxes, i.e., CHSH correlation cells. The no-signaling requirement ensures that the boxes do not provide an additional communication channel beyond the designated bottleneck, exactly as in the original IC scenario~\cite{Pawlowski2009}.

\begin{definition}[Isotropic CHSH cell]
\label{def:isotropic_cell}
An isotropic CHSH cell is a no-signaling box with binary inputs $(s,t) \in \{0,1\}^2$ and binary outputs $(A,B) \in \{0,1\}^2$ such that: (i) the local outputs are uniform, i.e., $\Pr[A=0\mid s,t]=\Pr[B=0\mid s,t]=1/2$ for all $s,t$, and (ii) the CHSH winning predicate holds with probability
\begin{eqnarray}
\Pr[A\oplus B = st \mid s,t] = \frac{1+E}{2} \quad (\forall s,t),
\label{eq:isotropic_bias_E}
\end{eqnarray}
for some $E\in[0,1]$, called the \emph{correlation bias}.
\end{definition}

From Eqs.~(\ref{eq:S_CHSH_def}) and~(\ref{eq:isotropic_bias_E}), the CHSH winning functional of an isotropic cell satisfies
\begin{eqnarray}
S_{\mathrm{CHSH}} = \sum_{s,t} \Pr[A\oplus B = st \mid s,t] = 2(1+E),
\label{eq:S_E_relation}
\end{eqnarray}
so that $E>1/2$ corresponds to CHSH violation beyond classical locality and $E=1/\sqrt{2}$ corresponds to the Tsirelson limit in quantum theory~\cite{Pawlowski2009}.

In the neural picture, each use of the cell is a single call to a shared correlation layer accessible to the encoder and decoder. Crucially, we assume that Alice and Bob can access many independent copies of the same cell, reflecting the ability to repeatedly query the resource without signaling. This is the analogue of allowing multiple independent shared entangled pairs or multiple independent shared random seeds.

\subsection{The $(N,m)=(2,1)$ seed protocol revisited}

The smallest nontrivial Neural-RAC instance, $(N,m)=(2,1)$, already exposes the key mechanism. It is governed by the CHSH predicate through van~Dam's protocol, which we briefly reinterpret in the present notation. Let Alice hold $(a_0, a_1)$ and let Bob hold $b \in \{0,1\}$. Alice inputs $s = a_0 \oplus a_1$ into a CHSH cell and receives $A$; she sends the single bottleneck bit
\begin{eqnarray}
x := a_0\oplus A.
\label{eq:seed_message}
\end{eqnarray}
Bob inputs $t=b$ and receives $B$; he outputs
\begin{eqnarray}
\beta := x \oplus B = a_0 \oplus A \oplus B.
\label{eq:seed_beta}
\end{eqnarray}

For an isotropic cell, the probability that $\beta=a_b$ is exactly $(1+E)/2$. This follows because $\beta=a_b$ holds if and only if the CHSH winning condition $A\oplus B = st$ holds (Lemma~\ref{lem:CHSH_equals_RAC}) and the latter holds with probability $(1+E)/2$ by Eq.~(\ref{eq:isotropic_bias_E}).

\begin{proposition}[Seed retrieval accuracy]
\label{prop:seed_accuracy}
In the $(2,1)$ Neural-RAC instance implemented by Eq.~(\ref{eq:seed_message}) and Eq.~(\ref{eq:seed_beta}) using an isotropic CHSH cell of bias $E$, the per-query success probabilities satisfy
\begin{eqnarray}
P_0=P_1=\frac{1+E}{2}.
\label{eq:seed_success}
\end{eqnarray}
\end{proposition}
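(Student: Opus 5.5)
The approach is to condition on the query value and on the data, and reduce the success event to the CHSH winning predicate for a fixed pair of box inputs. The isotropy condition of Definition~\ref{def:isotropic_cell} makes the winning probability independent of the inputs. We never need to average over $s$.

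\emph{Case $b=0$.} Here $t=0$, so $st=0$. By Eq.~(\ref{eq:seed_beta}), $\beta=a_0\oplus A\oplus B$, hence $\beta=a_0$ if and only if $A\oplus B=0=st$.

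\emph{Case $b=1$.} Here $t=1$, so $st=s=a_0\oplus a_1$. Then $\beta=a_1$ if and only if $A\oplus B=a_0\oplus a_1=st$.

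In both cases the event $\{\beta=a_b\}$ coincides exactly with the CHSH winning event $\{A\oplus B=st\}$ for the actual inputs $(s,t)=(a_0\oplus a_1,b)$. This is the same observation as in the proof of Lemma~\ref{lem:CHSH_equals_RAC}, applied pointwise rather than aggregated.

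Next, fix $b=K$ and condition additionally on the database $(a_0,a_1)$. The database and query are independent of the pre-established resource, since the resource is prepared before the episode. So conditioned on $(a_0,a_1,b)$, the cell is simply queried at the fixed inputs $(s,t)=(a_0\oplus a_1,K)$ and returns $(A,B)$ distributed as $P(A,B\mid s,t)$. Equation~(\ref{eq:isotropic_bias_E}) then gives $\Pr[\beta=a_K\mid a_0,a_1,b=K]=(1+E)/2$ for every $(a_0,a_1)$. Averaging over the uniform database leaves this constant unchanged, so $P_0=P_1=(1+E)/2$.

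The only subtle point is that isotropy must hold for each $(s,t)$ separately, not just on average. If the box were merely CHSH-biased on average, $P_0$ and $P_1$ could differ and only their sum would be fixed, via $S_{\mathrm{CHSH}}=2(P_0+P_1)$. Definition~\ref{def:isotropic_cell} supplies exactly the per-input statement needed, together with the independence of the resource from $(\vec a,b)$ assumed in Definition~\ref{def:neural_RAC}. No further obstacle is expected. As a consistency check, substituting into Eq.~(\ref{eq:CHSH_RAC_identity}) recovers $S_{\mathrm{CHSH}}=2(1+E)$, matching Eq.~(\ref{eq:S_E_relation}).
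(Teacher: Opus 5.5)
Your proof is correct and follows the paper's own argument. You identify the success event $\{\beta=a_b\}$ with the CHSH winning event $\{A\oplus B=st\}$ at the actual inputs $(s,t)=(a_0\oplus a_1,b)$, and then invoke the per-input isotropy condition, which gives probability $(1+E)/2$ for every $(s,t)$ and hence for both queries. The only difference is that you spell out the conditioning on $(a_0,a_1,b)$ and the independence of the resource from the inputs, which the paper leaves implicit.
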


\begin{proof}---As shown in Sec.~\ref{sec:preliminaries}, $\beta=a_b$ if and only if $A\oplus B = st$ with $s=a_0\oplus a_1$ and $t=b$. By Eq.~(\ref{eq:isotropic_bias_E}), the latter event has probability $(1+E)/2$ for every input pair $(s,t)$, hence for both $b=0$ and $b=1$.
\end{proof}

The seed protocol is the atom of the construction. The question now becomes: what happens when we compose this primitive recursively to retrieve one out of $N=2^n$ bits while keeping the bottleneck fixed at $m=1$?

\subsection{Nesting as depth: a pyramid protocol for $N=2^n$ with $m=1$}

\begin{figure*}[t]
\centering
\includegraphics[width=0.80\linewidth]{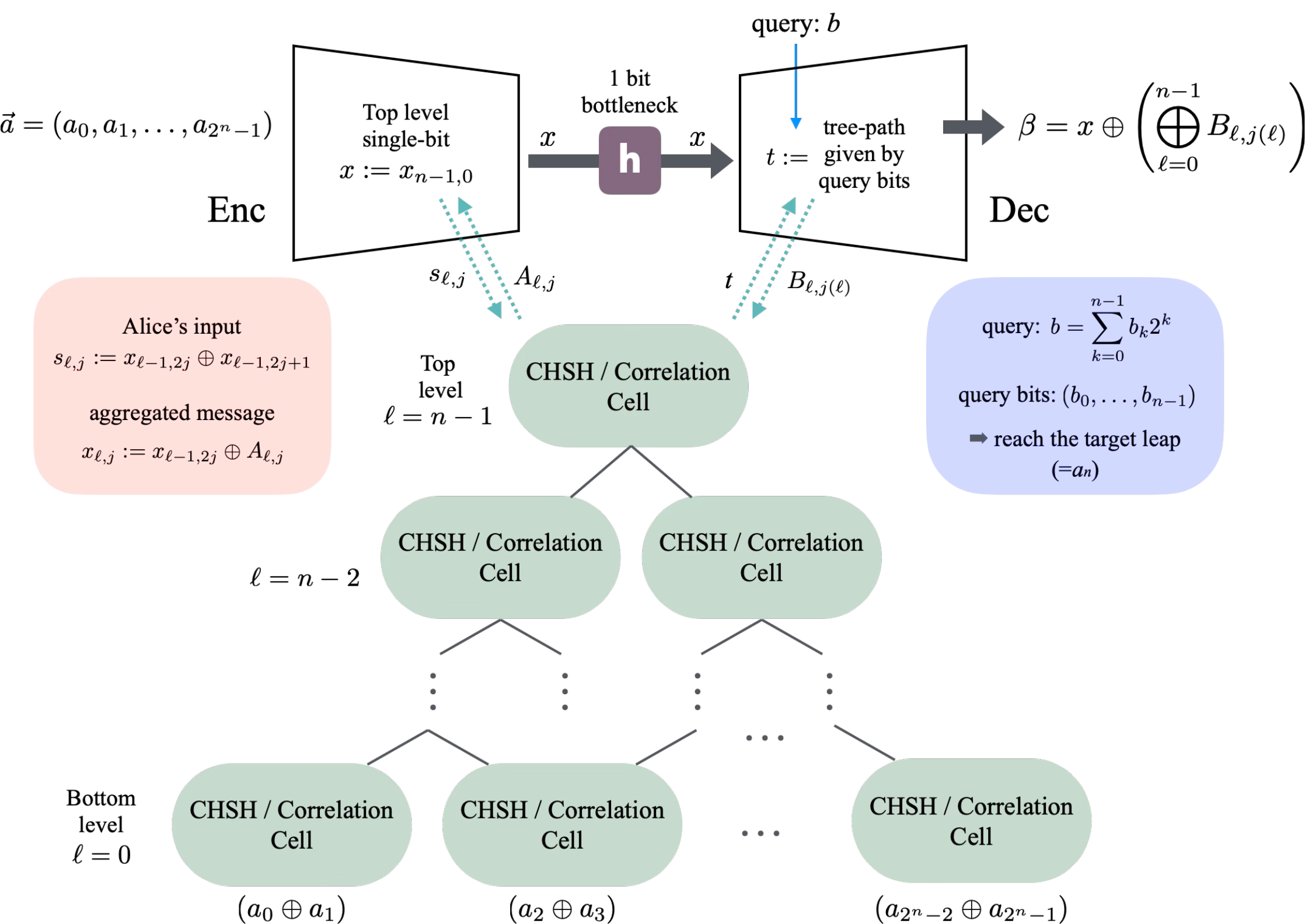}
\caption{\textbf{Nesting (pyramid) protocol for $N=2^n$ via depth $n$.} A depth-$n$ nesting of CHSH/correlation cells implements the $m=1$ random access strategy for $N=2^n$ inputs by recursively forming intermediate messages and producing a final one-bit message $x$. Bob uses the query bits $b=(b_{n-1}\dots b_1 b_0)$ together with the correlation outputs to recover $\beta$, schematically $\beta = x \oplus \bigl( \bigoplus_{k=0}^{n-1} B_k \bigr)$. The nesting viewpoint makes the IC amplification mechanism and its CHSH/Tsirelson interpretation transparent~\cite{Pawlowski2009}.}
\label{fig:neuralrac_nesting}
\end{figure*}

We now describe the correlation-enhanced protocol introduced in Ref.~\cite{Pawlowski2009}, phrased in the Neural-RAC language. Alice holds a database $\vec{a}=(a_0,\ldots,a_{2^n-1})$ and Bob holds an index $b\in\{0,\ldots,2^n-1\}$. Write the binary expansion of $b$ as
\begin{eqnarray}
b = \sum_{k=0}^{n-1} b_k 2^k, \quad \bigl( b_k \in \{0,1\} \bigr).
\label{eq:b_binary}
\end{eqnarray}
Bob's query thus consists of the $n$ bits $(b_{n-1},\ldots,b_0)$.

The protocol uses a full binary tree (pyramid) of $2^n-1$ independent isotropic CHSH cells arranged in $n$ levels. We index the levels by $\ell=0,1,\ldots, n-1$, where $\ell=0$ is the bottom level (closest to the raw database bits) and $\ell=n-1$ is the top level (closest to the transmitted bottleneck bit).

\medskip
\paragraph*{Encoding.}
At the bottom level $\ell=0$, Alice groups the database into pairs $(a_{2j},a_{2j+1})$. For each $j\in\{0,\ldots,2^{n-1}-1\}$ she runs the seed protocol on the pair to produce an \emph{intermediate message}
\begin{eqnarray}
x_{0,j} := a_{2j} \oplus A_{0,j},
\label{eq:x0j_def}
\end{eqnarray}
where $A_{0,j}$ is Alice's output on the corresponding bottom-level cell with input $s_{0,j}:=a_{2j} \oplus a_{2j+1}$.

At a higher level $\ell\ge 1$, each node aggregates two messages from level $\ell-1$. Specifically, for each $j\in\{0,\ldots,2^{n-\ell-1}-1\}$, Alice inputs
\begin{eqnarray}
s_{\ell,j} := x_{\ell-1,2j}\oplus x_{\ell-1,2j+1}
\label{eq:s_ellj_def}
\end{eqnarray}
into the level-$\ell$ cell and receives output $A_{\ell,j}$. Alice then defines the aggregated message
\begin{eqnarray}
x_{\ell,j} := x_{\ell-1,2j}\oplus A_{\ell,j}.
\label{eq:x_ellj_def}
\end{eqnarray}
At the top level, the single bit
\begin{eqnarray}
x := x_{n-1,0}
\label{eq:top_message}
\end{eqnarray}
is transmitted to Bob as the unique bottleneck bit ($m=1$).

\medskip
\paragraph*{Decoding.}
Bob descends the tree according to the bits $(b_{n-1},\ldots,b_0)$. At level $\ell=n-1$ (top), Bob inputs $t_{n-1,0}:=b_{n-1}$ to the top cell and receives $B_{n-1,0}$. As in the seed protocol, this allows Bob to recover either $x_{n-2,0}$ (if $b_{n-1}=0$) or $x_{n-2,1}$ (if $b_{n-1}=1$), up to a possible error. He repeats this procedure at level $\ell=n-2$ using $t_{n-2,j}:=b_{n-2}$ at the appropriate node $j$, and so on, until level $\ell=0$, where he finally recovers the desired database bit $a_b$.

Operationally, one may write Bob's final output in a compact form:
\begin{eqnarray}
\beta = x \oplus \left[ \bigoplus_{\ell=0}^{n-1} B_{\ell, j(\ell)} \right],
\label{eq:beta_parity_form}
\end{eqnarray}
where $j(\ell)$ denotes the index of the node selected by the query path at level~$\ell$ (determined by the higher bits $b_{n-1},\ldots,b_{\ell+1}$). Eq.~(\ref{eq:beta_parity_form}) is the natural generalization of Eq.~(\ref{eq:seed_beta}).

\medskip
The following theorem states the key quantitative property: nesting multiplies biases. It is the analytic core of the Tsirelson-type instability.
\begin{lemma}[Parity of independent biased errors]
\label{lem:parity_bias}
Let $e_1,\ldots,e_n\in\{0,1\}$ be independent Bernoulli variables such that
\begin{eqnarray}
\Pr[e_i=0]=\frac{1+E}{2},  \quad   \Pr[e_i=1]=\frac{1-E}{2}
\label{eq:biased_errors}
\end{eqnarray}
for some $E\in[-1,1]$ and all $i$.
Let $E_{\oplus}:=e_1\oplus\cdots\oplus e_n$ be their parity. Then,
\begin{eqnarray}
\Pr[E_{\oplus}=0]=\frac{1+E^n}{2},  \quad   \Pr[E_{\oplus}=1]=\frac{1-E^n}{2}.
\label{eq:parity_bias_result}
\end{eqnarray}
\end{lemma}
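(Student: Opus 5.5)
The natural route is to pass to $\pm1$ variables, where parity becomes multiplication. For each $i$, set $\epsilon_i:=(-1)^{e_i}\in\{\pm1\}$, the same substitution used in Lemma~\ref{lem:CHSH_corr_form} and in the proof of Theorem~\ref{thm:classical-onebit-optimum}. By Eq.~(\ref{eq:biased_errors}),
\begin{eqnarray}
\mathbb E[\epsilon_i]=\frac{1+E}{2}-\frac{1-E}{2}=E .
\end{eqnarray}
Since addition modulo two of bits is multiplication of their signs, we have $(-1)^{E_\oplus}=\prod_{i=1}^n\epsilon_i$.

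Next, independence of the $e_i$ implies independence of the $\epsilon_i$, so the expectation factorizes:
\begin{eqnarray}
\mathbb E\bigl[(-1)^{E_\oplus}\bigr]=\prod_{i=1}^n\mathbb E[\epsilon_i]=E^n .
\end{eqnarray}
Finally, for any bit $Z$ we have $\Pr[Z=0]=\frac12\bigl(1+\mathbb E[(-1)^Z]\bigr)$. Applying this with $Z=E_\oplus$ gives $\Pr[E_\oplus=0]=(1+E^n)/2$, and the complementary probability is $(1-E^n)/2$. This is Eq.~(\ref{eq:parity_bias_result}).

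As a cross-check one could instead argue by induction on $n$. Given the result for $n-1$, the parity $E_\oplus$ is zero exactly when either both $E'_\oplus:=e_1\oplus\cdots\oplus e_{n-1}$ and $e_n$ vanish or both equal one. This yields
\begin{eqnarray}
\frac{1+E^{n-1}}{2}\cdot\frac{1+E}{2}+\frac{1-E^{n-1}}{2}\cdot\frac{1-E}{2}=\frac{1+E^n}{2}.
\end{eqnarray}
Neither route has a real obstacle. The only point needing care is that factorizing the expectation requires \emph{mutual} independence of all the $e_i$, not merely pairwise independence. In the later application to the pyramid protocol, that mutual independence is exactly what the independent CHSH cells supply.
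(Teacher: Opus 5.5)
Your proposal is correct and follows essentially the same route as the paper: lift each error bit to $(-1)^{e_i}$ with mean $E$, use mutual independence to factorize $\mathbb{E}[(-1)^{E_\oplus}]=E^n$, and solve $2\Pr[E_\oplus=0]-1=E^n$. The inductive cross-check is also valid, and its algebra is correct, but it is not needed.
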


\begin{proof}---Define $\xi_i:=(-1)^{e_i}\in\{+1,-1\}$. Then,
\begin{eqnarray}
\mathbb{E}[\xi_i]=\Pr[e_i=0]-\Pr[e_i=1]=E.
\end{eqnarray}
Moreover,
\begin{eqnarray}
(-1)^{E_{\oplus}} = (-1)^{e_1\oplus\cdots\oplus e_n} = \prod_{i=1}^n (-1)^{e_i} = \prod_{i=1}^n \xi_i.
\end{eqnarray}
By independence,
\begin{eqnarray}
\mathbb{E}\big[(-1)^{E_{\oplus}}\big] = \prod_{i=1}^n \mathbb{E}[\xi_i] = E^n.
\end{eqnarray}
On the other hand,
\begin{eqnarray}
\mathbb{E}\big[(-1)^{E_{\oplus}}\big] &=& \Pr[E_{\oplus}=0]-\Pr[E_{\oplus}=1] \nonumber \\
	&=& 2\Pr[E_{\oplus}=0]-1.
\end{eqnarray}
Solving gives $\Pr[E_{\oplus}=0]=(1+E^n)/2$, completing the proof of Eq.~(\ref{eq:parity_bias_result}).
\end{proof}

We can now state the performance of the pyramid protocol.

\begin{theorem}[Nested CHSH cells implement ($N=2^n$, $m=1$) Neural-RAC with success $P=(1+E^n)/2$]
\label{thm:nesting_success}
Assume that Alice and Bob share $2^n-1$ independent copies of an isotropic CHSH cell of bias $E$. Consider the pyramid protocol described above for Neural-RAC with $N=2^n$ and $m=1$.
Then for every database index $K \in \{0,\ldots,2^n-1\}$ the per-query success probability is
\begin{eqnarray}
P_K = \Pr[\beta=a_K\mid b=K]=\frac{1+E^n}{2}.
\label{eq:PK_nesting}
\end{eqnarray}
In particular, the protocol is query-symmetric: $P_K$ does not depend on $K$.
\end{theorem}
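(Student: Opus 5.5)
Fix a query $K$ with binary expansion $(b_{n-1},\ldots,b_0)$. The plan is to show that along Bob's decoding path, $\beta\oplus a_K$ equals the parity of $n$ error bits, one per visited cell. These bits are independent with bias $E$, so Lemma~\ref{lem:parity_bias} gives the result.

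For each visited cell define the error bit $e_\ell:=A_{\ell,j(\ell)}\oplus B_{\ell,j(\ell)}\oplus s_{\ell,j(\ell)}t_{\ell,j(\ell)}$, with $t_{\ell,j(\ell)}=b_\ell$. Here $j(\ell)$ is the node on the path to leaf $K$: at level $\ell$ it is $\lfloor K/2^{\ell+1}\rfloor$, and its children at level $\ell-1$ are $2j(\ell)$ and $2j(\ell)+1$. The child selected by $b_\ell$ is $j(\ell-1)$.

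\textbf{Step 1: one-level identity.} We show that for every level,
\[
x_{\ell,j(\ell)}\oplus B_{\ell,j(\ell)} = x_{\ell-1,j(\ell-1)}\oplus e_\ell .
\]
At level $0$ we interpret $x_{-1,\cdot}$ as the database bits $a_{2j},a_{2j+1}$. The argument is the computation in Lemma~\ref{lem:CHSH_equals_RAC}. We have $x_{\ell,j}\oplus B = x_{\ell-1,2j}\oplus A\oplus B = x_{\ell-1,2j}\oplus s b_\ell\oplus e_\ell$. If $b_\ell=0$, this equals $x_{\ell-1,2j}\oplus e_\ell$. If $b_\ell=1$, then $x_{\ell-1,2j}\oplus s = x_{\ell-1,2j+1}$, so it equals $x_{\ell-1,2j+1}\oplus e_\ell$.

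\textbf{Step 2: telescoping.} Start from $\beta=x\oplus\bigoplus_\ell B_{\ell,j(\ell)}$ and $x=x_{n-1,0}$, and apply Step 1 from the top level down. This gives $\beta = a_K\oplus e_{n-1}\oplus\cdots\oplus e_0$. Hence $\beta=a_K$ exactly when the parity of the $e_\ell$ is zero.

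\textbf{Step 3: distribution of the errors.} This is the main obstacle, because the inputs $s_{\ell,j}$ at higher levels depend on the outputs $A$ of lower cells. To handle this, condition on the database and on the outputs of all cells off the path. Then process the path cells bottom-up. Given everything at levels below $\ell$, the inputs $(s_{\ell,j(\ell)},b_\ell)$ of cell $(\ell,j(\ell))$ are fixed. By Definition~\ref{def:isotropic_cell}, $\Pr[e_\ell=0\mid s,t]=(1+E)/2$ for every input pair. Because the cells are independent copies, this conditional probability holds regardless of the realized values of $e_0,\ldots,e_{\ell-1}$ and of the other conditioning variables.

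The chain rule then gives $\Pr[e_0,\ldots,e_{n-1}]=\prod_\ell \Pr[e_\ell]$ with each factor of bias $E$. One point needs care: the past $e$'s depend on $B$-outputs as well as $A$-outputs. This is harmless because each cell's joint output law is fixed by its own inputs and the cells are independent, so the conditioning is legitimate.

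Lemma~\ref{lem:parity_bias} now gives $\Pr[\beta=a_K\mid b=K]=(1+E^n)/2$. The conditioning was arbitrary, so averaging over it preserves this value. The expression is independent of $K$, which establishes query symmetry.
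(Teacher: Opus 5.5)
Your proposal is correct and follows essentially the paper's own route: you define per-cell CHSH error bits, prove the one-step child-recovery identity, and telescope it down the query path to $\beta=a_K\oplus e_0\oplus\cdots\oplus e_{n-1}$, exactly as in the paper's induction and its Appendix~A. You then establish independence of the path errors and apply the parity lemma; your bottom-up chain-rule conditioning for the adaptive-input issue is simply a direct form of the paper's Appendix~A argument, which instead writes each isotropic cell as $A=U$, $B=U\oplus st\oplus e$ with $e$ drawn independently of the inputs.
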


\begin{proof}---Fix an index $K$ and let $b$ be its binary expansion in Eq.~(\ref{eq:b_binary}). Along the query path from the top level to the bottom level, Bob applies the seed decoding rule at each of the $n$ selected cells. At each such cell, define the local error variable $e_\ell\in\{0,1\}$ at level $\ell$ as follows: $e_\ell=0$ if the cell satisfies the CHSH predicate $A_{\ell,j(\ell)}\oplus B_{\ell,j(\ell)} = s_{\ell,j(\ell)} t_{\ell,j(\ell)}$ on the actual inputs used by the protocol, and $e_\ell=1$ otherwise. By Definition~\ref{def:isotropic_cell}, for every possible input pair $(s, t)$, we have
\begin{eqnarray}
\Pr[e_\ell=0]=\frac{1+E}{2},  \quad  \Pr[e_\ell=1]=\frac{1-E}{2}.
\label{eq:ell_error_dist}
\end{eqnarray}
Because the $2^n-1$ cells are independent and each selected cell is used once, the errors $e_0,\ldots,e_{n-1}$ along the path are independent.

\smallskip
We first show that the final output satisfies
\begin{eqnarray}
\beta = a_K \oplus (e_0\oplus e_1\oplus\cdots\oplus e_{n-1}).
\label{eq:beta_error_parity}
\end{eqnarray}
This identity is proven by induction on $n$. For $n=1$ it is exactly the seed protocol: $\beta=a_b$ if and only if the CHSH predicate holds, i.e., if $e_0=0$. Assume the identity holds for depth $n-1$.
At depth $n$, Bob first uses the top cell to decide which of the two submessages (left or right) to recover, introducing an error $e_{n-1}$ if the CHSH predicate fails at the top cell. Conditioned on this choice, the remaining decoding within the chosen subtree is a depth-$(n-1)$ instance and introduces the parity of errors $(e_0 \oplus \cdots \oplus e_{n-2})$ by the induction hypothesis. Since the decoder combines the outputs by XOR (cf. Eq.~(\ref{eq:beta_parity_form})), the total error is the XOR of all level errors, proving Eq.~(\ref{eq:beta_error_parity}).

\smallskip
It follows from Eq.~(\ref{eq:beta_error_parity}) that $\beta=a_K$ if and only if the parity $e_0 \oplus \cdots \oplus e_{n-1}$ equals $0$. By Lemma~\ref{lem:parity_bias} and Eq.~(\ref{eq:ell_error_dist}), this event has probability $(1+E^n)/2$. This proves Eq.~(\ref{eq:PK_nesting}).
\end{proof}

Theorem~\ref{thm:nesting_success} is the essential bridge between CHSH strength and query-separated access. The bias $E$ does not merely add; it multiplies with depth. In a world of strong correlations, a modest advantage at the level of a single CHSH cell becomes an exponentially amplified advantage for large databases.

\subsection{Information amplification and the IC-induced threshold}

We now confront the nested construction with Neural-IC\@. Because the protocol is symmetric, Theorem~\ref{thm:accuracy_implies_information} gives a lower bound on the Neural-RAC information score in terms of the common success probability
\begin{eqnarray}
P:=\frac{1+E^n}{2}.
\label{eq:P_E_n}
\end{eqnarray}
The resulting bound grows with $n$ if $E$ is too large. To make this precise, we require a quantitative inequality relating the entropy deficit $1-h(P)$ to the bias $E^n$.

\begin{lemma}[Quadratic lower bound on entropy deficit]
\label{lem:entropy_deficit_quadratic}
For any $p\in[0,1]$,
\begin{eqnarray}
1-h(p) \ge \frac{2}{\ln 2}\Big(p-\frac12\Big)^2.
\label{eq:entropy_deficit_bound}
\end{eqnarray}
Equivalently, for any $\delta\in[-1,1]$,
\begin{eqnarray}
1-h\left(\frac{1+\delta}{2}\right) \ge \frac{\delta^2}{2\ln 2}.
\label{eq:entropy_deficit_delta}
\end{eqnarray}
\end{lemma}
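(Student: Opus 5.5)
We will prove the $\delta$-form of Eq.~(\ref{eq:entropy_deficit_delta}); the $p$-form then follows by setting $p=(1+\delta)/2$, so that $(p-\tfrac12)^2=\delta^2/4$ and $\frac{2}{\ln2}\cdot\frac{\delta^2}{4}=\frac{\delta^2}{2\ln2}$. The approach is to write the entropy deficit in natural logarithms and expand it as a power series in $\delta$. We will then observe that every coefficient is nonnegative, so the series is bounded below by its first term.

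First, in natural-log form,
\begin{eqnarray}
\ln 2\,\Big[1-h\Big(\tfrac{1+\delta}{2}\Big)\Big]=\tfrac12\big[(1+\delta)\ln(1+\delta)+(1-\delta)\ln(1-\delta)\big]=:\phi(\delta).
\end{eqnarray}
Next, for $|\delta|<1$ we expand using $\ln(1\pm\delta)$. The odd powers cancel by symmetry, leaving
\begin{eqnarray}
\phi(\delta)=\sum_{k\ge1}\frac{\delta^{2k}}{2k(2k-1)}=\frac{\delta^2}{2}+\frac{\delta^4}{12}+\cdots .
\end{eqnarray}
All terms are nonnegative, so $\phi(\delta)\ge\delta^2/2$. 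Dividing by $\ln2$ gives Eq.~(\ref{eq:entropy_deficit_delta}).

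As an alternative that avoids series manipulation, we can use a calculus argument. Set $g(\delta)=\phi(\delta)-\delta^2/2$. Then $g(0)=g'(0)=0$, and
\begin{eqnarray}
g''(\delta)=\frac{1}{1-\delta^2}-1\ge0,
\end{eqnarray}
so $g$ is convex with its minimum at $0$, hence $g\ge0$ on $(-1,1)$.

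The only delicate point is the endpoints $\delta=\pm1$ (equivalently $p\in\{0,1\}$), where the series and the derivative of $\ln(1-\delta)$ degenerate. There we use the convention $0\ln0=0$. Both sides are continuous on $[-1,1]$, so the inequality extends by taking limits. One can also check directly: at $\delta=\pm1$ the left side equals $1$ and the right side equals $1/(2\ln2)\approx0.72$, so the inequality holds there. The endpoint handling is routine, and there is no substantive obstacle elsewhere.
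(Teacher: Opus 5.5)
Your proof is correct, and your main route differs from the paper's. The paper writes $1-h(p)=D(p\Vert 1/2)/\ln 2$ and shows that $g(p)=D(p\Vert 1/2)-2(p-\tfrac12)^2$ is convex with $g(1/2)=g'(1/2)=0$, using $g''(p)=(2p-1)^2/[p(1-p)]\ge0$. Your ``alternative'' calculus argument is exactly this proof reparametrized by $p=(1+\delta)/2$: your $\phi(\delta)$ is $D(\tfrac{1+\delta}{2}\Vert\tfrac12)$ in nats, and your $g(\delta)$ is the paper's $g(p)$ under the substitution.

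Your primary route, the even power series $\phi(\delta)=\sum_{k\ge1}\delta^{2k}/[2k(2k-1)]$ with nonnegative coefficients, is a genuinely different argument, and it gives more than the inequality. It exhibits the exact nonnegative remainder, so you get sharper bounds such as $\phi(\delta)\ge\delta^2/2+\delta^4/12$. It also yields at once the small-bias expansion $1-h(\tfrac{1+\delta}{2})=\delta^2/(2\ln2)+O(\delta^4)$, which the paper proves separately in Lemma~\ref{lem:AppB_entropy_small_bias} and uses for the critical constant in Theorem~\ref{thm:critical_constant}.

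The paper's convexity argument has its own advantages. It needs no convergence bookkeeping, and it does not depend on the cancellation of odd powers, which is special to the reference point $1/2$. Since $\partial_p^2 D(p\Vert q)=1/[p(1-p)]$ for every $q$, the same two-derivative check proves the binary Pinsker inequality $D(p\Vert q)\ge2(p-q)^2$ for arbitrary $q$.

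Your endpoint handling is fine. In fact the series converges absolutely at $\delta=\pm1$, to $\sum_{k\ge1}[1/(2k-1)-1/(2k)]=\ln2=\phi(\pm1)$, so even the limiting step is unnecessary.
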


\begin{proof}---We work with natural logarithms in the intermediate steps. Let $D(\mathrm{Bern}(p) \Vert \mathrm{Bern}(1/2))$ denote the binary Kullback--Leibler divergence:
\begin{eqnarray}
D(p\Vert 1/2) &:=& p\ln\frac{p}{1/2} + (1-p)\ln\frac{1-p}{1/2} \nonumber \\
	&=& p\ln(2p)+(1-p)\ln\big(2(1-p)\big).
\label{eq:KL_def}
\end{eqnarray}
A direct computation shows that Shannon entropy in bits satisfies
\begin{eqnarray}
1-h(p) = \frac{1}{\ln 2}\,D(p\Vert 1/2).
\label{eq:entropy_KL_relation}
\end{eqnarray}
Define the function
\begin{eqnarray}
g(p) := D(p\Vert 1/2) - 2\Big(p-\frac12\Big)^2.
\label{eq:g_def}
\end{eqnarray}
We claim $g(p)\ge 0$ for all $p\in[0,1]$. Indeed, $g(1/2)=0$ and a straightforward derivative calculation gives
\begin{eqnarray}
g'(p) &=& \ln\frac{p}{1-p} - 4\Big(p-\frac12\Big), \nonumber \\
g''(p) &=& \frac{1}{p(1-p)} - 4 = \frac{(2p-1)^2}{p(1-p)} \ge 0.
\label{eq:g_second_derivative}
\end{eqnarray}
Hence $g$ is convex on $(0,1)$ and attains its global minimum at $p=1/2$ where $g'(1/2)=0$. Therefore, $g(p)\ge g(1/2)=0$ for all $p \in (0,1)$, and by continuity also for $p \in [0,1]$. Combining $D(p\Vert 1/2)\ge 2(p-1/2)^2$ with Eq.~(\ref{eq:entropy_KL_relation}) yields Eq.~(\ref{eq:entropy_deficit_bound}). The equivalent form of Eq.~(\ref{eq:entropy_deficit_delta}) follows by substituting $p=(1+\delta)/2$.
\end{proof}

We are now ready to derive the Tsirelson-type threshold from Neural-IC.

\begin{theorem}[IC-induced Tsirelson-type bound for isotropic correlations]
\label{thm:IC_implies_Tsirelson}
Assume Neural-IC holds for the Neural-RAC task with message length $m=1$, i.e.,
\begin{eqnarray}
I_{\mathrm{N\text{-}RAC}} \le 1  \quad \text{for all $N$}.
\label{eq:IC_assumed_m1}
\end{eqnarray}
Let $E \in [0,1]$ be the bias of an isotropic CHSH cell as in Definition~\ref{def:isotropic_cell}. If $E > 1/\sqrt{2}$, then there exists $n$ such that the pyramid protocol of Theorem~\ref{thm:nesting_success} violates Neural-IC:
\begin{eqnarray}
I_{\mathrm{N\text{-}RAC}} > 1.
\end{eqnarray}
Consequently, a necessary condition for Eq.~(\ref{eq:IC_assumed_m1}) to hold in the presence of isotropic CHSH correlations is
\begin{eqnarray}
E \le \frac{1}{\sqrt{2}}.
\label{eq:E_Tsirelson}
\end{eqnarray}
\end{theorem}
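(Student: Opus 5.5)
The plan is to combine three earlier results: the nested protocol's success rate (Theorem~\ref{thm:nesting_success}), the accuracy-to-information bound (Theorem~\ref{thm:accuracy_implies_information}), and the quadratic entropy-deficit estimate (Lemma~\ref{lem:entropy_deficit_quadratic}). Together they give an explicit lower bound on the information score of the depth-$n$ pyramid, and we will show this bound grows without limit in $n$ once $E>1/\sqrt2$.

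\textbf{Step 1 (lower bound on the score).} Fix $E>1/\sqrt2$ and $n\ge1$, and set $N=2^n$. Theorem~\ref{thm:nesting_success} shows the pyramid protocol is query-symmetric with $P_K\equiv P=(1+E^n)/2$. The symmetric form of Theorem~\ref{thm:accuracy_implies_information} then gives
\begin{eqnarray}
I_{\mathrm{N\text{-}RAC}}\ge 2^n\Big[1-h\Big(\frac{1+E^n}{2}\Big)\Big].
\end{eqnarray}
Applying Eq.~(\ref{eq:entropy_deficit_delta}) with $\delta=E^n\in[0,1]$ yields
\begin{eqnarray}
I_{\mathrm{N\text{-}RAC}}\ge \frac{2^nE^{2n}}{2\ln2}=\frac{(2E^2)^n}{2\ln 2}.
\end{eqnarray}

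\textbf{Step 2 (divergence).} Since $E>1/\sqrt2$, we have $2E^2>1$, so $(2E^2)^n\to\infty$. Choose any $n$ with $(2E^2)^n>2\ln2$; explicitly,
\begin{eqnarray}
n>\frac{\log(2\ln 2)}{\log(2E^2)}
\end{eqnarray}
works, since $2\ln2\approx1.386$. For this $n$ we get $I_{\mathrm{N\text{-}RAC}}>1$, which violates Eq.~(\ref{eq:IC_assumed_m1}) at $N=2^n$.

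\textbf{Step 3 (admissibility).} We must check that the pyramid protocol really is an admissible Neural-RAC instance with a hard one-bit bottleneck, so that Eq.~(\ref{eq:IC_assumed_m1}) applies to it. The message $x$ is a single classical bit. The $2^n-1$ cells form a pre-established no-signaling resource that is independent of $(\vec a,b)$. Bob's outputs depend only on $(x,b,B\text{-shares})$. The construction is therefore query-separated in the sense of Definition~\ref{def:neural_RAC}. The contrapositive of Step 2 then gives the necessary condition $E\le1/\sqrt2$.

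\textbf{Main obstacle.} The quantitative part is routine. The only delicate point is the independence of the level errors and the query-path bookkeeping, which is what gives $P=(1+E^n)/2$ for every $K$, and that is already supplied by Theorem~\ref{thm:nesting_success}. The one care point is that the bound must be applied at the specific $N=2^n$, not uniformly in $N$. At the boundary $E=1/\sqrt2$ the quadratic estimate only gives $1/(2\ln2)<1$, which is consistent: there is no violation there, so the threshold is sharp with respect to this argument.
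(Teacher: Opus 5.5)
Your proposal is correct and follows the paper's proof essentially step for step. You use the symmetric accuracy-to-information bound at $P=(1+E^n)/2$ and the quadratic entropy-deficit lemma with $\delta=E^n$, then note that $(2E^2)^n/(2\ln 2)$ diverges when $2E^2>1$. Your explicit choice of $n$ and your admissibility check in Step 3 are harmless additions that the paper leaves implicit.

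One small caveat concerns your closing remark. At $E=1/\sqrt2$ the quadratic estimate is only a lower bound, so its value $1/(2\ln 2)<1$ does not by itself show that no violation occurs there. The paper obtains non-violation at that point separately, from quantum realizability together with Neural-IC. This lies outside the statement being proved.
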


\begin{proof}---Fix $n$ and consider the pyramid protocol for $N=2^n$ and $m=1$. By Theorem~\ref{thm:nesting_success}, the success probability is query-symmetric: $P_K=P$ for all $K$, with $P=(1+E^n)/2$. The symmetric form of Theorem~\ref{thm:accuracy_implies_information} yields
\begin{eqnarray}
I_{\mathrm{N\text{-}RAC}} \ge N\big(1-h(P)\big) = 2^n\left[1-h\left(\frac{1+E^n}{2}\right)\right].
\label{eq:INRAC_ge_2n_entropy}
\end{eqnarray}
Applying Lemma~\ref{lem:entropy_deficit_quadratic} with $\delta=E^n$ gives
\begin{eqnarray}
1-h\left(\frac{1+E^n}{2}\right) \ge \frac{E^{2n}}{2\ln 2}.
\label{eq:entropy_deficit_apply}
\end{eqnarray}
By substituting into Eq.~(\ref{eq:INRAC_ge_2n_entropy}), we have the explicit growth lower bound
\begin{eqnarray}
I_{\mathrm{N\text{-}RAC}} \ge \frac{1}{2\ln 2}(2E^2)^n.
\label{eq:INRAC_growth}
\end{eqnarray}
If $2E^2 > 1$, i.e., $E > 1/\sqrt{2}$, the right-hand side grows exponentially with $n$ and hence exceeds $1$ for all sufficiently large $n$. Therefore, the protocol violates $I_{\mathrm{N\text{-}RAC}} \le 1$ for large enough $n$, contradicting Eq.~(\ref{eq:IC_assumed_m1}). Thus, to avoid violation for arbitrarily large $n$, one must have $2E^2 \le 1$, i.e., Eq.~(\ref{eq:E_Tsirelson}).
\end{proof}

\begin{remark}[From CHSH to Tsirelson, without Hilbert spaces]
Theorem~\ref{thm:IC_implies_Tsirelson} is, in essence, an information-theoretic proof of the Tsirelson threshold: the value $E=1/\sqrt{2}$ appears not because we postulated a Hilbert space model, but because it is the unique point at which the amplification factor $2E^2$ crosses $1$. Below this threshold, the bias amplification in the nested protocol is too weak to overcome the growth of the query space ($N=2^n$), and the total information score can remain bounded by a single communicated bit. Above it, the protocol turns a one-bit bottleneck into an ever more powerful random-access interface, thereby violating information causality~\cite{Pawlowski2009}.
\end{remark}

\begin{remark}[Neural interpretation: oracle memories and instability]
In the language of representations, the parameter $E$ measures how well a single correlation cell can serve as a query-conditioned router: a small bias $E$ yields only a small advantage in the $(2,1)$ task, but nesting composes routers and multiplies biases. When $E > 1/\sqrt{2}$, depth acts as an amplifier whose gain eventually overwhelms the one-bit bottleneck, producing what is effectively an oracle memory: a representation of size $m=1$ from which the decoder can access an unbounded family of targets $\{a_K\}$ with nontrivial total information. Neural-IC forbids such an instability, and Tsirelson's value is precisely the critical point at which the instability would begin.
\end{remark}

\subsection{Beyond isotropy: asymmetric seed biases}\label{sec:asymmetric-biases}

The isotropic family is the cleanest setting, but the amplification mechanism does not require identical branch biases. Suppose the seed $(2,1)$ cell has query-branch success probabilities
\begin{eqnarray}
P_t=\frac{1+E_t}{2}, \quad  t \in \{0,1\},
\label{eq:asymmetric-seed-biases}
\end{eqnarray}
where $E_t\in[0,1]$ denotes the correctness bias when Bob's local seed query is $t$. Assume independent copies of the same asymmetric cell are used at every node of the nesting tree and that local errors remain independent along each conditioned path, as in the no-signaling cell model.

\begin{theorem}[Asymmetric bias multiplication and IC violation]\label{thm:asymmetric-ic}
For a depth-$n$ nested protocol and query string $b=(b_1,\ldots,b_n)\in\{0,1\}^n$, the success probability is
\begin{eqnarray}
P_b=\frac{1+\prod_{\ell=1}^n E_{b_\ell}}{2}.
\label{eq:asym-success-path}
\end{eqnarray}
When the induced conditional channels are binary symmetric, the exact score is
\begin{eqnarray}
I_{\mathrm{N\text{-}RAC}}^{\mathrm{asym}}(n;E_0,E_1) = \sum_{b\in\{0,1\}^n} \left[1-h\left(\frac{1+\prod_{\ell=1}^nE_{b_\ell}}{2}\right)\right]. \nonumber \\
\label{eq:asym-exact-score}
\end{eqnarray}
Moreover, if
\begin{eqnarray}
E_0^2+E_1^2>1,
\label{eq:asym-violation-condition}
\end{eqnarray}
then the one-bit Neural-IC bound is violated for sufficiently large $n$.
\end{theorem}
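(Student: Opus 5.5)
The plan has three parts, mirroring the isotropic case: a path-error identity for $P_b$, an exact score computation, and a growth argument.

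\textbf{Path success probability.} I would reuse the induction behind Theorem~\ref{thm:nesting_success}. Along the query path, the cell at level $\ell$ receives Bob's input $t=b_\ell$. Define $e_\ell\in\{0,1\}$ as the indicator that the CHSH predicate fails at that cell.

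The decoding identity $\beta=a_b\oplus e_1\oplus\cdots\oplus e_n$ is purely algebraic, so it carries over verbatim from Eq.~(\ref{eq:beta_error_parity}). By the asymmetric seed assumption, $\Pr[e_\ell=0]=(1+E_{b_\ell})/2$ for every Alice input $s$. This matters because the inputs $s_{\ell,j}$ depend on the data and on outputs of lower cells.

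Conditional independence of the path errors is assumed in the statement. I would make it precise by conditioning sequentially from the bottom level up. Each cell's error distribution, given its inputs, does not depend on $s$. Hence the conditional law of $e_\ell$ given $e_{<\ell}$ and the inputs is still $\mathrm{Bern}((1-E_{b_\ell})/2)$.

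Then Lemma~\ref{lem:parity_bias}, generalized to non-identical biases, gives $\mathbb E[(-1)^{\oplus e_\ell}]=\prod_\ell E_{b_\ell}$. The proof is the same product-of-expectations computation, with $\mathbb E[\xi_\ell]=E_{b_\ell}$. This yields Eq.~(\ref{eq:asym-success-path}).

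\textbf{Exact score.} For a binary symmetric channel from an unbiased bit with crossover probability $1-P_b$, the mutual information is exactly $1-h(P_b)$. So $I(a_K:\beta\mid b=K)=1-h(P_b)$, and summing over the $2^n$ queries gives Eq.~(\ref{eq:asym-exact-score}).

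Without the BSC hypothesis, Theorem~\ref{thm:accuracy_implies_information} still gives the right-hand side as a lower bound. That lower bound is all the violation argument needs.

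\textbf{Violation.} Apply Lemma~\ref{lem:entropy_deficit_quadratic} term by term:
\begin{eqnarray}
I_{\mathrm{N\text{-}RAC}}\ge \frac{1}{2\ln2}\sum_{b\in\{0,1\}^n}\prod_{\ell=1}^n E_{b_\ell}^2=\frac{1}{2\ln 2}\,(E_0^2+E_1^2)^n .
\end{eqnarray}
The sum factorizes over levels by the binomial theorem. If $E_0^2+E_1^2>1$, the right-hand side diverges in $n$, so it exceeds $1$ for large $n$, contradicting Eq.~(\ref{eq:Neural_IC_m}) with $m=1$.

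\textbf{Main obstacle.} The only delicate step is justifying independence of the $e_\ell$ along a path. The $s$-inputs of higher cells depend on outputs $A$ of lower cells, which are correlated with the lower $B$'s.

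I would handle this as in the isotropic proof. Since each cell's success probability is independent of $s$ and the cells are distinct independent boxes, the chain rule over the path gives the product formula for the parity's expectation. If the paper's cell model guarantees only the stated conditional independence, I would simply invoke that hypothesis.
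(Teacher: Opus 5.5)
Your proposal is correct and follows the paper's proof step for step: the algebraic path-parity identity, the parity lemma with non-identical biases $E_{b_\ell}$, the BSC value $1-h(P_b)$ for the exact score, and Lemma~\ref{lem:entropy_deficit_quadratic} applied term by term with the factorization $\sum_{b}\prod_{\ell}E_{b_\ell}^2=(E_0^2+E_1^2)^n$. Your sequential-conditioning argument makes explicit the path-error independence that the paper simply assumes. Your remark that the violation needs only the lower bound of Theorem~\ref{thm:accuracy_implies_information}, not the BSC hypothesis, does the same; note also that requiring the bias $E_t$ for every Alice input $s$ is stronger than the paper's branch-averaged seed definition, but is not actually needed, since with uniform data each path input $s_w$ is uniform and independent of the queried subtree below it, so the averaged bias already gives the required conditional law.
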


\begin{proof}---Along a fixed query path, each level contributes an independent error bit whose zero-probability is $(1+E_{b_\ell})/2$. Repeating the parity argument of Lemma~4 with nonidentical biases gives
\begin{eqnarray}
\Pr[\mathrm{even\ parity}\mid b]
=\frac{1+\prod_{\ell=1}^nE_{b_\ell}}{2},
\end{eqnarray}
which proves Eq.~(\ref{eq:asym-success-path}). The exact score Eq.~(\ref{eq:asym-exact-score}) follows when each conditional channel is binary symmetric. Applying Lemma~5 term by term gives
\begin{eqnarray}
I_{\mathrm{N\text{-}RAC}}^{\mathrm{asym}}
&\ge& \frac{1}{2\ln2}\sum_{b\in\{0,1\}^n}\prod_{\ell=1}^nE_{b_\ell}^2
=\frac{(E_0^2+E_1^2)^n}{2\ln2}.
\label{eq:asym-growth-lower-bound}
\end{eqnarray}
If $E_0^2+E_1^2>1$, the right-hand side eventually exceeds one, so the one-bit bound is violated.
\end{proof}

The isotropic threshold is recovered by setting $E_0=E_1=E$, for which Eq.~(\ref{eq:asym-violation-condition}) becomes $2E^2>1$. The asymmetric formulation is useful because raw correlation layers need not be perfectly isotropic. CHSH twirling can reduce a general box to an isotropic representative with the same CHSH value, but task-specific asymmetric performance is more precisely described by Eq.~(\ref{eq:asym-exact-score}).

\section{The Role of Quantum Mechanics}\label{sec:quantum-role}

Sec.~\ref{sec:neural-chsh} showed that isotropic correlations with $E>1/\sqrt2$ destabilize a one-bit bottleneck under nesting. Quantum mechanics is special in two complementary ways. First, it supplies the information calculus required by Theorem~\ref{thm:Neural_IC}, so Neural-IC holds for quantum resources. Second, it attains the isotropic bias $E=1/\sqrt2$ and no larger value, so the quantum boundary coincides with the stable frontier of query-conditioned access.

\subsection{Quantum mutual information as a consistent information calculus}

\begin{definition}[Quantum mutual information and conditional mutual information]
\label{def:quantum_MI}
For a bipartite state $\hat\rho_{UV}$, define the von Neumann entropy in bits by $S(\hat\sigma):=-\tr(\hat\sigma\log\hat\sigma)$ and
\begin{eqnarray}
I_q(U:V)_{\hat\rho}:=S(\hat\rho_U)+S(\hat\rho_V)-S(\hat\rho_{UV}).
\label{eq:qMI_def}
\end{eqnarray}
For a tripartite state, define
\begin{eqnarray}
I_q(U:V \mid W)_{\hat\rho} &:=& S(\hat\rho_{UW}) + S(\hat\rho_{VW}) \nonumber \\
	&& \qquad - S(\hat\rho_W) - S(\hat\rho_{UVW}).
\label{eq:qCMI_def}
\end{eqnarray}
\end{definition}

\begin{theorem}[Quantum theory satisfies the information-calculus axioms]
\label{thm:quantum_info_calculus}
Quantum mutual information satisfies Assumption~\ref{ass:info_calculus}: it reduces to Shannon mutual information on classical registers, vanishes on product states, obeys data processing under local CPTP maps, satisfies the chain rule with $I_q(U:V\mid W)\ge0$, and obeys classical self-conditioning for classical registers.
\end{theorem}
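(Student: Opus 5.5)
We verify the five items of Assumption~\ref{ass:info_calculus} in turn, working throughout with the entropic definitions of Definition~\ref{def:quantum_MI}. Classical registers are represented by states diagonal in a fixed product basis, and classical-quantum states have the form $\hat\rho_{XV}=\sum_x p(x)\ketbra{x}{x}\otimes\hat\rho_V^{x}$.

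\emph{Consistency.} For a diagonal $\hat\rho_{UV}=\sum_{u,v}p(u,v)\ketbra{u}{u}\otimes\ketbra{v}{v}$, every marginal is diagonal with eigenvalues given by the corresponding marginal distribution. Hence each von Neumann entropy equals the Shannon entropy of that marginal, and Eq.~(\ref{eq:qMI_def}) reduces to $H(U)+H(V)-H(UV)$.

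\emph{Product normalization.} Additivity of von Neumann entropy on tensor products, $S(\hat\rho_U\otimes\hat\rho_V)=S(\hat\rho_U)+S(\hat\rho_V)$, gives $I_q=0$.

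\emph{Chain rule.} Expanding both sides of $I_q(U:VW)=I_q(U:W)+I_q(U:V\mid W)$ using Eqs.~(\ref{eq:qMI_def}) and (\ref{eq:qCMI_def}), the terms $S(\hat\rho_W)$ and $S(\hat\rho_{UW})$ cancel. This leaves $S(\hat\rho_U)+S(\hat\rho_{VW})-S(\hat\rho_{UVW})$ on both sides.

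\emph{Nonnegativity and data processing.} Nonnegativity $I_q(U:V\mid W)\ge0$ is exactly strong subadditivity (Lieb--Ruskai), which we will invoke as a known theorem. This is the one genuinely deep input and the main obstacle; we will not reprove it but cite it.

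Data processing then follows by a standard route. Write $I_q(U:V)=D(\hat\rho_{UV}\Vert\hat\rho_U\otimes\hat\rho_V)$ and use monotonicity of quantum relative entropy under the CPTP map $\openone_U\otimes\Lambda_{V\to V'}$. This map sends $\hat\rho_U\otimes\hat\rho_V$ to $\hat\rho_U\otimes\hat\rho_{V'}$.

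Alternatively, we can stay within strong subadditivity. Dilate $\Lambda$ via Stinespring to an isometry $V\to V'E$ with environment $E$. Isometries preserve entropies, so $I_q(U:V)=I_q(U:V'E)$. The chain rule then gives $I_q(U:V'E)=I_q(U:V')+I_q(U:E\mid V')\ge I_q(U:V')$, using nonnegativity.

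\emph{Classical self-conditioning.} For a classical $X$, take the cq state $\hat\rho_{XY}$ and let $X'$ be a copy of $X$, so the state is $\sum_x p(x)\ketbra{xx}{xx}\otimes\hat\rho^x_Y$. We need $I_q(X':Y\mid X)=0$, i.e.
\begin{eqnarray}
S(\hat\rho_{X'X})+S(\hat\rho_{YX})-S(\hat\rho_X)-S(\hat\rho_{X'YX})=0 .
\end{eqnarray}
Block-diagonality gives $S(\hat\rho_{XZ})=H(X)+\sum_x p(x)S(\hat\rho_Z^x)$ for any cq state, which we will use repeatedly. It yields $S(\hat\rho_{X'X})=H(X)=S(\hat\rho_X)$ and $S(\hat\rho_{X'YX})=S(\hat\rho_{YX})$, so the combination vanishes.

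The chain rule then gives $I_q(X:XY)=I_q(X:X)+0$. Finally $I_q(X:X)=2H(X)-H(X)=H(X)$, which completes the verification.
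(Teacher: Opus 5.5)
Your proposal is correct and follows the paper's proof axiom by axiom: diagonal states for consistency, additivity for product states, monotonicity of relative entropy for data processing, direct expansion for the chain rule, strong subadditivity for nonnegativity, and a copy of the classical register for self-conditioning. Your explicit cq-state entropy computation and the Stinespring alternative (which needs only strong subadditivity) simply fill in details that the paper states tersely.
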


\begin{proof}---Consistency follows because a classical diagonal state has von Neumann entropies equal to the corresponding Shannon entropies. Product states obey $S(UV)=S(U)+S(V)$, hence $I_q(U:V)=0$. Data processing follows from the relative-entropy identity $I_q(U:V)=D(\hat\rho_{UV}\|\hat\rho_U\otimes\hat\rho_V)$ and monotonicity of quantum relative entropy under CPTP maps. Expanding Eqs.~(\ref{eq:qMI_def})--(\ref{eq:qCMI_def}) gives the chain rule
\begin{eqnarray}
I_q(U:VW)=I_q(U:W)+I_q(U:V\mid W),
\end{eqnarray}
and nonnegativity of $I_q(U:V\mid W)$ is strong subadditivity~\cite{LiebRuskai1973}. Finally, if $X$ is a classical register, conditioning on the classical register $X$ makes a copy of $X$ deterministic, giving $I_q(X:Y\mid X)=0$ and $I_q(X:XY)=H(X)$.
\end{proof}

\begin{corollary}[Neural-IC holds for quantum correlation resources]
\label{cor:quantum_neural_IC}
If the shared resource in Neural-RAC is a quantum state and the bottleneck is a classical hard $m$-bit message, then
\begin{eqnarray}
I_{\mathrm{N\text{-}RAC}}\le m
\label{eq:quantum_neural_IC}
\end{eqnarray}
for every protocol. More generally, a quantum-assisted protocol obeys the effective-capacity version $I_{\mathrm{N\text{-}RAC}}\le C_H$ whenever its interface satisfies Eq.~(\ref{eq:effective_capacity_def}).
\end{corollary}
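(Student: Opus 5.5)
The corollary follows by composing two earlier results: Theorem~\ref{thm:quantum_info_calculus}, which certifies that quantum mutual information $I_q$ satisfies Assumption~\ref{ass:info_calculus}, and Theorem~\ref{thm:Neural_IC}, whose hypotheses are exactly that Assumption~\ref{ass:info_calculus} holds for the resource theory and that the fresh database is independent of the resource and of the fixed weights. The plan is first to fix the model. The shared resource is a bipartite quantum state $\hat\rho_{R_AR_B}$, prepared independently of $(\vec a,b)$ and $W$. The encoder is a CPTP map (an instrument) acting on $\vec a$ and $R_A$ whose classical output is the bottleneck $H$. The decoder is, for each query value $b=K$, a quantum instrument on $(H,R_B)$ that outputs $\beta$. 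With the decoder-side share denoted $B:=R_B$, the global state after encoding is a classical--quantum state on $\vec a\,H\,B$. All the information quantities that appear in the proof of Theorem~\ref{thm:Neural_IC} are then well defined as $I_q$.

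The second step is to check that each move in the proof of Theorem~\ref{thm:Neural_IC} is licensed by a property established in Theorem~\ref{thm:quantum_info_calculus}:
\begin{itemize}
\item The chain-rule expansion of $I_q(\vec a:H,B)$ over $a_0,\dots,a_{N-1}$ uses the chain rule. Converting the conditional terms into $I_q(a_K:H,B,\vec a_{<K})$ additionally uses independence of the classical bits, through product-state normalization of $I_q(a_K:\vec a_{<K})$.
\item Discarding $\vec a_{<K}$ is a partial trace, and applying Bob's branch-$K$ decoding is a CPTP map on $(H,B)$. Both are covered by data processing under local CPTP maps.
\item The final quantity $I_q(a_K:\beta\mid b=K)$ is between classical registers, so by consistency it equals the Shannon score of Definition~\ref{def:neural_RAC_score}.
\end{itemize}
Together these give the embedding bound $I_{\mathrm{N\text{-}RAC}}\le I_q(\vec a:H,B)$.

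For the hard $m$-bit case, the plan is to reproduce Eq.~(\ref{eq:hard-bottleneck-proof-revised}). Product-state normalization gives $I_q(\vec a:B)=0$ because the database is independent of the resource. The chain rule then gives $I_q(\vec a:\vec h,B)=I_q(\vec a:\vec h\mid B)\le I_q(\vec h:\vec a,B)$. Data processing and classical self-conditioning bound this by $I_q(\vec h:\vec h,\vec a,B)=H(\vec h)\le m$, which proves Eq.~(\ref{eq:quantum_neural_IC}). The effective-capacity version then follows immediately from part 2 of Theorem~\ref{thm:Neural_IC}, once Eq.~(\ref{eq:effective_capacity_def}) is supplied for the interface.

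The step needing the most care is the self-conditioning step when $\vec h$ was produced by measuring $R_A$, which is entangled with $B$. Alice's measurement outcome is correlated with Bob's quantum share, so one must verify that $\vec h$ is still a genuinely classical register, meaning the post-encoding state is block-diagonal in $\vec h$. Only then does $I_q(\vec h:\vec h\vec a B)=H(\vec h)$ hold. I would handle this by writing the post-encoding state explicitly as $\sum_{\vec a,\vec h}p(\vec a)\ketbra{\vec a}{\vec a}\otimes\ketbra{\vec h}{\vec h}\otimes\hat\rho_B^{\vec a,\vec h}$. For such a state the conditional entropy $S(\vec h\mid \vec h\vec a B)$ vanishes, which gives the required identity. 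No-signaling guarantees that the marginal of $B$ is unchanged by the encoding, so $I_q(\vec a:B)=0$ continues to hold after encoding.
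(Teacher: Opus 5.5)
Your proposal is correct and takes the same route as the paper, which proves the corollary in one line: instantiate Assumption~\ref{ass:info_calculus} with $I_q$ via Theorem~\ref{thm:quantum_info_calculus}, then apply Theorem~\ref{thm:Neural_IC}. Your step-by-step audit spells out what that composition relies on, namely that $\vec h$ stays a classical register (with $\hat\rho_B^{\vec a,\vec h}$ read as subnormalized, of trace $p(\vec h\mid\vec a)$) and that local encoding leaves $B$'s marginal unchanged, so $I_q(\vec a:B)=0$ still holds after encoding.
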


\begin{proof}---Instantiate Assumption~\ref{ass:info_calculus} with $I=I_q$ and apply Theorem~\ref{thm:Neural_IC}.
\end{proof}

\subsection{Quantum correlations saturate the stable threshold}

\begin{theorem}[Quantum realization of the isotropic CHSH cell at $E=1/\sqrt2$]
\label{thm:quantum_Tsirelson_cell}
There exists a two-qubit state and local $\pm1$ observables whose induced binary-output box satisfies
\begin{eqnarray}
\Pr[A\oplus B=st\mid s,t]=\frac{1+1/\sqrt2}{2} \quad \forall s,t\in\{0,1\}.
\label{eq:quantum_iso_cell}
\end{eqnarray}
Thus quantum theory realizes the isotropic CHSH cell with bias $E=1/\sqrt2$ and $S_{\mathrm{CHSH}}=2+\sqrt2$.
\end{theorem}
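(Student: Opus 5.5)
The plan is to give an explicit construction and verify it by direct computation, reading the box off the correlators via Lemma~\ref{lem:CHSH_corr_form}. I would take the maximally entangled state $\ket{\Phi^+}=(\ket{00}+\ket{11})/\sqrt2$. Alice measures $\widehat A_0=Z$ and $\widehat A_1=X$. Bob measures $\widehat B_0=(Z+X)/\sqrt2$ and $\widehat B_1=(Z-X)/\sqrt2$. Outcome $+1$ is identified with bit value $0$, so that $\alpha=(-1)^A$ and $\widetilde\beta=(-1)^B$.

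The first step is the identity $\bra{\Phi^+}P\otimes Q\ket{\Phi^+}=\tfrac12\tr(P^TQ)$. Since $Z^T=Z$ and $X^T=X$, this gives $\langle Z\otimes Z\rangle=\langle X\otimes X\rangle=1$ and $\langle Z\otimes X\rangle=\langle X\otimes Z\rangle=0$. Hence $E_{00}=E_{01}=E_{10}=1/\sqrt2$ and $E_{11}=-1/\sqrt2$. In every case $(-1)^{st}E_{st}=1/\sqrt2$.

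The second step converts correlators into probabilities. As in the proof of Lemma~\ref{lem:CHSH_corr_form}, the win indicator is $\frac12[1+(-1)^{st}\alpha\widetilde\beta]$. Taking the conditional expectation gives $\Pr[A\oplus B=st\mid s,t]=\frac12(1+(-1)^{st}E_{st})=\frac{1+1/\sqrt2}{2}$ for all four input pairs, which is Eq.~(\ref{eq:quantum_iso_cell}).

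The third step checks condition (i) of Definition~\ref{def:isotropic_cell}, namely uniform marginals. Each local reduced state of $\ket{\Phi^+}$ is $\openone/2$. Every observable used is traceless with eigenvalues $\pm1$, so $\langle \widehat A_s\otimes\openone\rangle=\langle\openone\otimes\widehat B_t\rangle=0$. Thus each local outcome is uniform, independently of both inputs. This also gives no-signaling. Finally, Eq.~(\ref{eq:S_E_relation}) gives $S_{\mathrm{CHSH}}=2(1+1/\sqrt2)=2+\sqrt2$.

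No step is a real obstacle; the only care needed is the sign convention. The single anticorrelated setting $(1,1)$ must line up with the predicate $A\oplus B=st$, which is exactly why $\widehat B_1$ is taken as $(Z-X)/\sqrt2$. One remark: Theorem~\ref{thm:CHSH_bounds} shows $E\le1/\sqrt2$ for any isotropic quantum cell, so this construction attains the optimum.
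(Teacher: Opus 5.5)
Your proposal is correct and follows essentially the paper's own proof: the same state $\ket{\Phi^+}$, the same observables $\hat\sigma_z,\hat\sigma_x$ and $(\hat\sigma_z\pm\hat\sigma_x)/\sqrt2$, and the same correlators giving $(-1)^{st}E_{st}=1/\sqrt2$. Your per-input use of the win indicator (the paper loosely cites the summed identity of Lemma~\ref{lem:CHSH_corr_form}) and your explicit check of uniform marginals, condition (i) of Definition~\ref{def:isotropic_cell}, fill in details the paper leaves implicit.
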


\begin{proof}---Take $\hat\rho=\ketbra{\Phi^+}{\Phi^+}$ with $\ket{\Phi^+}=(\ket{00}+\ket{11})/\sqrt2$, Alice observables $\hat A_0=\hat\sigma_z$, $\hat A_1=\hat\sigma_x$, and Bob observables
\begin{eqnarray}
\hat B_0=\frac{\hat\sigma_z+\hat\sigma_x}{\sqrt2},
\quad
\hat B_1=\frac{\hat\sigma_z-\hat\sigma_x}{\sqrt2}.
\end{eqnarray}
Using $\langle\Phi^+|\hat\sigma_u\otimes\hat\sigma_v|\Phi^+\rangle=\delta_{uv}$ for $u,v\in\{x,z\}$, one obtains $E_{00}=E_{01}=E_{10}=1/\sqrt2$ and $E_{11}=-1/\sqrt2$. Therefore $(-1)^{st}E_{st}=1/\sqrt2$ for every input pair. Lemma~\ref{lem:CHSH_corr_form} then gives Eq.~(\ref{eq:quantum_iso_cell}).
\end{proof}

Theorem~\ref{thm:quantum_Tsirelson_cell} is the tightness complement to Theorem~\ref{thm:IC_implies_Tsirelson}. Neural-IC rules out $E>1/\sqrt2$ under arbitrary-depth nesting, while quantum mechanics attains $E=1/\sqrt2$. The boundary is therefore not merely a formal upper bound; it is a physically realized frontier.

\subsection{Classical communication semantics, dense coding, and Holevo limits}

IC is normally phrased for classical communication because the capacity unit is then unambiguous. If the transmitted bottleneck is quantum and entanglement-assisted, dense coding can change the amount of recoverable classical information per transmitted system~\cite{BennettWiesner1992}. This does not contradict IC; it changes the resource being counted.

\begin{proposition}[Holevo-limited information gain for an unentangled $m$-qubit bottleneck]
\label{prop:Holevo_bottleneck}
Let $A$ be a classical variable encoded into an $m$-qubit system $H$ as an ensemble $\{p(a),\hat\rho_a\}$. If Bob has no prior entanglement with $H$ and performs any measurement producing $Y$, then
\begin{eqnarray}
I(A:Y)\le m.
\label{eq:Holevo_m}
\end{eqnarray}
\end{proposition}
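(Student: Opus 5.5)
The plan is to reduce the statement to the Holevo bound and then bound the Holevo quantity by the dimension of the bottleneck.

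\textbf{Step 1: Holevo bound.} Bob's measurement is a POVM $\{\hat M_y\}$ acting on $H$ alone, since he has no prior entanglement with $H$. Form the classical--quantum state
\begin{eqnarray}
\hat\rho_{AH}=\sum_a p(a)\ketbra{a}{a}\otimes\hat\rho_a .
\end{eqnarray}
The measurement is a local CPTP map $H\to Y$. Data processing for $I_q$, already established in Theorem~\ref{thm:quantum_info_calculus}, then gives $I(A:Y)\le I_q(A:H)_{\hat\rho}$. Consistency guarantees that the left-hand side is Shannon mutual information, because both $A$ and $Y$ are classical.

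\textbf{Step 2: Holevo quantity as an entropy difference.} For the cq state, expand $I_q(A:H)=S(\hat\rho_H)-\sum_a p(a)S(\hat\rho_a)=\chi$. Use $S(\hat\rho_{AH})=H(A)+\sum_a p(a)S(\hat\rho_a)$, which holds because the blocks $\ketbra{a}{a}\otimes\hat\rho_a$ are orthogonal.

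\textbf{Step 3: Dimension bound.} Drop the nonnegative term $\sum_a p(a)S(\hat\rho_a)$. Then bound $S(\hat\rho_H)\le\log\dim H=\log 2^m=m$, which is the maximum of von Neumann entropy, attained by the maximally mixed state. Chaining the three steps gives $I(A:Y)\le m$.

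\textbf{Main obstacle.} The delicate point is where the no-entanglement hypothesis enters. If Bob held a system $R$ entangled with $H$, his measurement would act on $HR$. The relevant quantity would then be $I_q(A:HR)\le S(H)+S(H|R)$-type terms, which can reach $2m$ and reflects dense coding. I will therefore state explicitly that the post-encoding state of Bob's laboratory consists of $H$ alone, or of $H$ together with a system in product with $(A,H)$. In the latter case product normalization together with the chain rule removes that system, so the bound still reduces to $I_q(A:H)$.

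The remaining steps are routine uses of the axioms already verified for $I_q$.
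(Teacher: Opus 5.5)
Your argument is correct and follows the paper's proof: bound $I(A:Y)$ by the Holevo quantity $\chi$, discard the nonnegative average entropy $\sum_a p(a)S(\hat\rho_a)$, and use $S(\sum_a p(a)\hat\rho_a)\le\log 2^m=m$. The only difference is that the paper cites the Holevo bound directly, whereas you derive it by applying the data-processing property of $I_q$ from Theorem~\ref{thm:quantum_info_calculus} to the classical--quantum state. This keeps your argument within the axioms the paper has already verified.
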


\begin{proof}---The Holevo bound gives~\cite{Holevo1973} 
\begin{eqnarray}
I(A:Y) \le \chi &=& S(\sum_a p(a)\hat\rho_a) - \sum_a p(a)S(\hat\rho_a) \nonumber \\
	&\le& S(\sum_a p(a)\hat\rho_a) \le \log(2^m)=m.
\end{eqnarray}
This completes the proof.
\end{proof}

Thus quantum bottlenecks can be accommodated, but only after specifying the correct capacity semantics. The present manuscript keeps the communicated bottleneck classical in the main theorems, while allowing quantum correlations as side resources that reshape decoding without becoming hidden communication channels.

\subsection{Synthesis: quantum advantage without oracle-memory collapse}

Quantum resources occupy a balanced position in the Neural-IC picture. They can improve the seed RAC task beyond the classical bias $E=1/2$ and, through nesting, maintain nonvanishing large-$N$ query-conditioned access. Yet they still obey the information calculus that prevents a finite bottleneck from becoming an oracle memory. Post-quantum no-signaling correlations with $E>1/\sqrt2$ would cross that line: depth would convert a bounded interface into an unbounded random-access capability.

The useful lesson for quantum neural networks is therefore not simply that ``quantum is more expressive.'' A quantum correlation layer supplies a physically admissible enhancement of query-conditioned access at fixed bottleneck capacity. Its advantage is real precisely because it remains accountable: it strengthens the architecture without dissolving the meaning of the interface.

\section{Operational Probes and Evaluation Metrics}\label{sec:experiments}

We use the term ``probe'' in an operational sense: each test specifies the sampled database, the query distribution, the bottleneck model, the allowed correlation resource, and the statistic used to evaluate random-access performance. The numerical laboratory has three purposes. First, closed-form evaluations of the nested isotropic protocol visualize the stability transition without estimator noise. Second, capacity-accounting probes test whether apparent random-access gains are explained by the capacity that is actually supplied to the decoder. Third, quantum-layer and sampling probes connect the CHSH bias used in the theory to tunable measurement angles, visibility loss, and finite-sample estimation. No hardware data are used in this section; the quantum-layer results are analytic and numerical feasibility probes.

\subsection{Closed-form stability probes: depth, bias, and finite-depth criticality}

For the nested protocol of Sec.~\ref{sec:neural-chsh}, Alice holds $N=2^n$ independent bits, Bob receives a uniformly random index, communication is restricted to one classical bit, and the shared isotropic CHSH cells have bias $E$. The query-symmetric success probability is
\begin{eqnarray}
P(n,E)=\frac{1+E^n}{2}.
\label{eq:VI_success}
\end{eqnarray}
Because the induced conditional channel $a_K\mapsto\beta$ is a binary symmetric channel, the observable Neural-RAC score is exactly
\begin{eqnarray}
I_{\mathrm{N\text{-}RAC}}(n,E)=2^n\left[1-h\left(\frac{1+E^n}{2}\right)\right].
\label{eq:VI_closed_form}
\end{eqnarray}
This equation is the main calibration curve for the rest of the section: it isolates the effect of correlation strength from finite-sample and training effects.

\begin{figure}[t]
\centering
\includegraphics[width=1.00\linewidth]{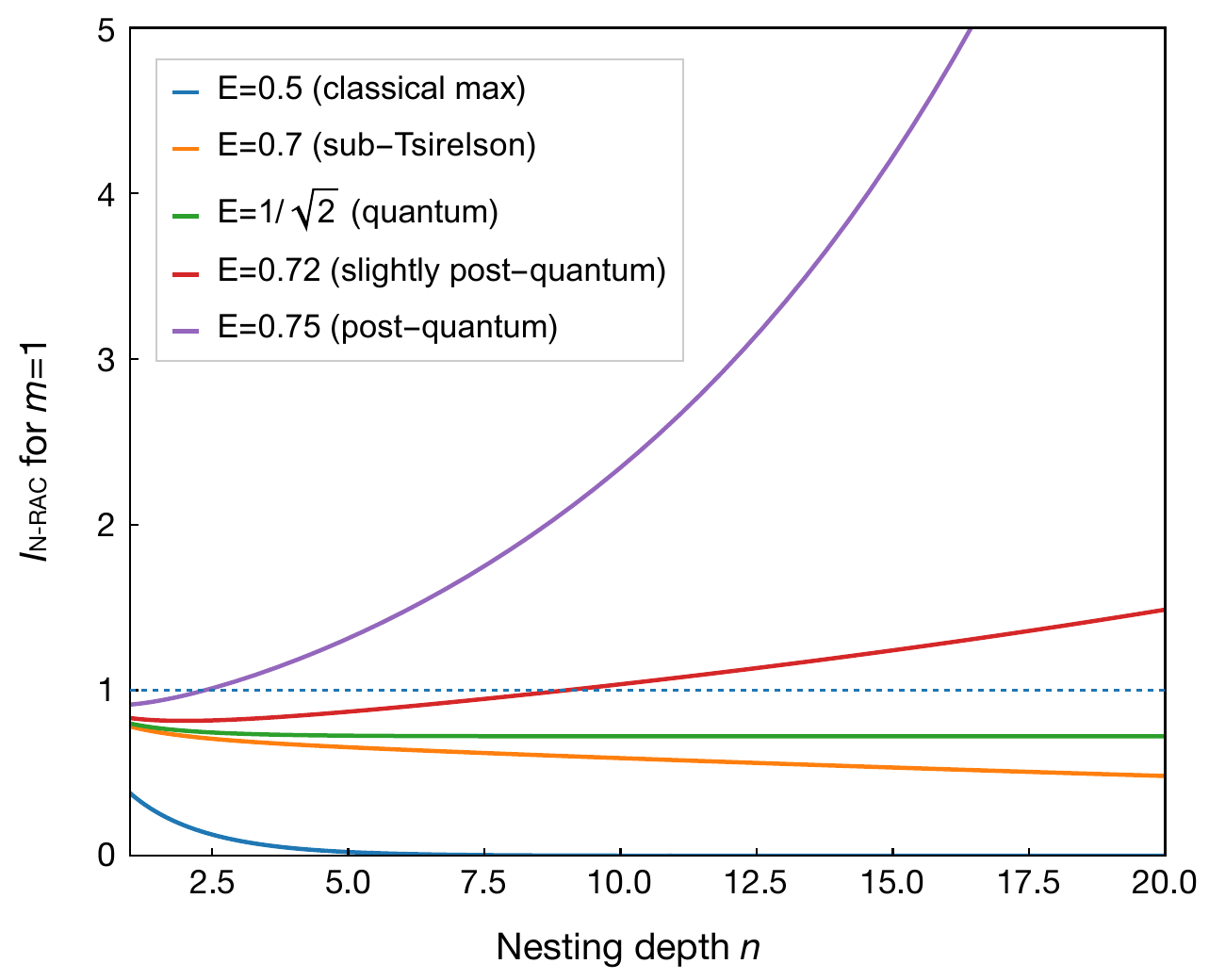}
\caption{Closed-form information score $I_{\mathrm{N\text{-}RAC}}(n,E)$ for the nested isotropic protocol. The dashed line marks the Neural-IC limit for a one-bit bottleneck. Sub-Tsirelson curves remain stable, the quantum curve approaches a constant below one, and supercritical curves eventually cross the IC boundary.}
\label{fig:INRAC_vs_n}
\end{figure}

\begin{figure}[t]
\centering
\includegraphics[width=1.00\linewidth]{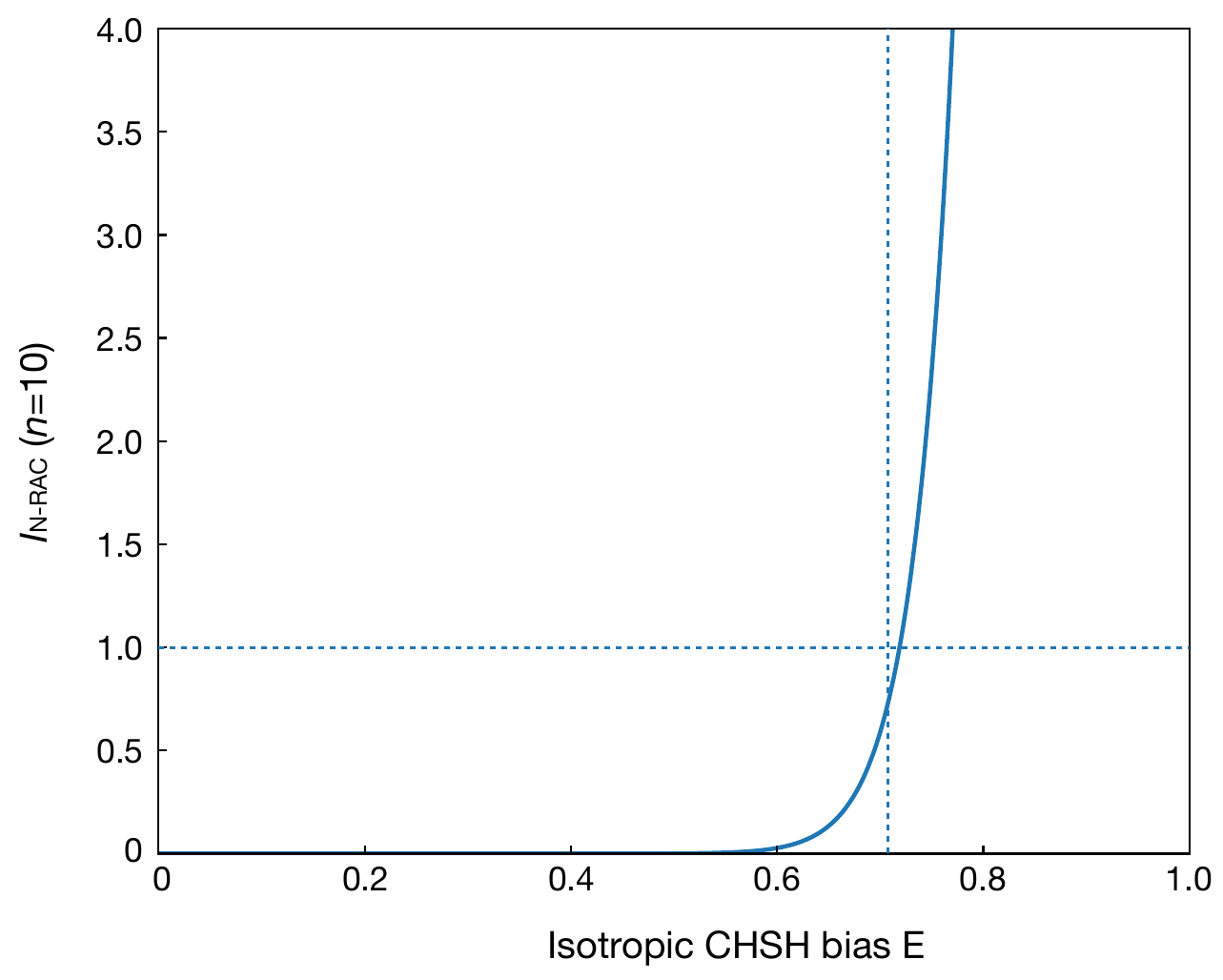}
\caption{Bias scan at fixed depth $n=10$ ($N=1024$). The vertical dashed line marks $E=1/\sqrt2$ and the horizontal dashed line marks the one-bit Neural-IC limit. Supercritical biases may look only slightly stronger at the seed level, but nesting exposes the violation.}
\label{fig:INRAC_vs_E}
\end{figure}

Representative numerical values from Eq.~(\ref{eq:VI_closed_form}) are shown in Table~\ref{tab:VI_values}. The contrast between $E=1/\sqrt2$ and nearby supercritical values is the central diagnostic: quantum correlations remain below the one-bit bound at all tested depths, whereas $E=0.72$ and $E=0.75$ eventually violate it. Figs.~\ref{fig:INRAC_vs_n} and~\ref{fig:INRAC_vs_E} show the same transition from complementary viewpoints: scaling with depth at fixed $E$, and the finite-depth bias scan at fixed $n$.

\begin{table}[t]
\centering
\begin{adjustbox}{max width=0.96\linewidth}
\begin{tabular}{c|ccccc}
\hline
$n$ & $E=0.5$ & $E=0.7$ & $E=1/\sqrt2$ & $E=0.72$ & $E=0.75$ \\
\hline
$1$  & $0.377$ & $0.780$ & $0.798$ & $0.832$ & $0.913$ \\
$5$  & $2.25\times10^{-2}$ & $0.655$ & $0.725$ & $0.870$ & $1.312$ \\
$10$ & $7.04\times10^{-4}$ & $0.589$ & $0.721$ & $1.036$ & $2.344$ \\
$20$ & $6.88\times10^{-7}$ & $0.482$ & $0.721$ & $1.486$ & $7.607$ \\
\hline
\end{tabular}
\end{adjustbox}
\caption{Closed-form values of $I_{\mathrm{N\text{-}RAC}}(n,E)$ for representative CHSH biases. The IC constraint for $m=1$ is $I_{\mathrm{N\text{-}RAC}}\le1$.}
\label{tab:VI_values}
\end{table}

The critical value has a useful asymptotic normalization.

\begin{theorem}[Critical regime at $E=1/\sqrt2$]
\label{thm:critical_constant}\label{thm:sec6_critical_constant}
For $E=1/\sqrt2$,
\begin{eqnarray}
\lim_{n\to\infty}I_{\mathrm{N\text{-}RAC}}\left(n,\frac1{\sqrt2}\right)=\frac{1}{2\ln2}\simeq0.72135.
\label{eq:critical_constant}
\end{eqnarray}
Moreover, no Neural-IC violation occurs at this point for any finite $n$.
\end{theorem}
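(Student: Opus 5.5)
We evaluate the closed form of Eq.~(\ref{eq:VI_closed_form}) at $E=1/\sqrt2$, writing $\delta_n:=E^n=2^{-n/2}$ so that $2^n=\delta_n^{-2}$ and
\begin{eqnarray}
I_{\mathrm{N\text{-}RAC}}\left(n,\tfrac{1}{\sqrt2}\right)=\frac{1-h\left(\frac{1+\delta_n}{2}\right)}{\delta_n^{2}}.
\end{eqnarray}
The whole statement then reduces to understanding the one-variable function $F(\delta):=\delta^{-2}\bigl[1-h\bigl(\tfrac{1+\delta}{2}\bigr)\bigr]$ on $(0,1]$. We first compute the limit as $\delta\to0$ and then establish the uniform bound $F(\delta)<1$.

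For the limit, we use the identity $1-h(p)=D(p\Vert1/2)/\ln2$ from the proof of Lemma~\ref{lem:entropy_deficit_quadratic}. We expand the binary relative entropy as a power series:
\begin{eqnarray}
D\left(\tfrac{1+\delta}{2}\Big\Vert\tfrac12\right)=\tfrac12\left[(1+\delta)\ln(1+\delta)+(1-\delta)\ln(1-\delta)\right]=\sum_{k\ge1}\frac{\delta^{2k}}{2k(2k-1)}.
\end{eqnarray}
The odd terms cancel by symmetry, so the leading term is $\delta^2/2$. Hence $F(\delta)=\frac{1}{\ln2}\sum_{k\ge1}\frac{\delta^{2k-2}}{2k(2k-1)}$, and $F(\delta)\to\frac{1}{2\ln2}$ as $\delta\to0$. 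Since $\delta_n\to0$ as $n\to\infty$, this proves Eq.~(\ref{eq:critical_constant}).

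For the absence of violations, the same series shows that $F$ has nonnegative coefficients in $\delta^2$. It is therefore increasing in $\delta$ on $(0,1]$, so over the admissible values $\delta_n=2^{-n/2}\le 2^{-1/2}$ its maximum is attained at $n=1$. One can check this directly: $I(1,1/\sqrt2)=2\bigl[1-h\bigl(\frac{1+1/\sqrt2}{2}\bigr)\bigr]\approx0.798<1$, matching Table~\ref{tab:VI_values}. The sequence in $n$ thus decreases monotonically from $0.798$ toward $0.72135$ and never exceeds the one-bit Neural-IC limit.

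The only step needing care is this monotonicity claim. The termwise rearrangement is justified because the series converges absolutely for $|\delta|<1$. At the endpoint $\delta=1$ the series still converges, to $\ln 2$, giving $F(1)=1$; this endpoint is never reached, since $n\ge1$ forces $\delta_n\le 2^{-1/2}$. Because the monotonicity argument only shows $I(n,1/\sqrt2)\le I(1,1/\sqrt2)$, the single value at $n=1$ must be evaluated numerically to confirm it lies strictly below $1$.
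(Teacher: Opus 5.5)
Your proof is correct. For the limit it is essentially the paper's argument, an expansion of $1-h$ around $p=1/2$, made exact by using the full series $D\bigl(\tfrac{1+\delta}{2}\,\Vert\,\tfrac12\bigr)=\sum_{k\ge1}\delta^{2k}/(2k(2k-1))$ instead of an $O(\delta^4)$ remainder.

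For the non-violation claim you take a genuinely different route. The paper does not analyze the closed form there at all. It argues that the isotropic cell with $E=1/\sqrt2$ is realized by a Bell state with Tsirelson-optimal measurements (Theorem~\ref{thm:quantum_Tsirelson_cell}). The pyramid protocol is therefore a quantum-assisted one-bit protocol, and Corollary~\ref{cor:quantum_neural_IC} forces $I\le 1$.

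That argument is structural: it covers every query-separated protocol built from such cells, not only the pyramid. But it yields only the non-strict bound and relies on the full information-calculus machinery and the quantum realization.

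Your argument is self-contained and purely analytic. It gives a strict inequality and shows that $I(n,1/\sqrt2)$ decreases monotonically in $n$, consistent with Table~\ref{tab:VI_values}. It also extends at no cost: since $I(n,E)=(2E^2)^nF(E^n)$, you get $I(n,E)<1$ for every $E\le 1/\sqrt2$ and every $n$. That is a direct analytic proof of the stable side of the threshold, complementing Theorem~\ref{thm:IC_implies_Tsirelson} on the unstable side.

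One simplification: you do not need the numerical check at $n=1$. You already computed $F(1)=1$, and every coefficient of the series is positive. So termwise $F(1)-F(\delta)=\frac{1}{\ln2}\sum_{k\ge2}\frac{1-\delta^{2k-2}}{2k(2k-1)}>0$ for $0<\delta<1$. This gives $I(n,1/\sqrt2)=F(2^{-n/2})<1$ for every $n\ge1$ with no numerics.
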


\begin{proof}---
Let $\delta_n=2^{-n/2}$. Expanding the binary entropy around $1/2$ gives
\begin{eqnarray}
1-h\left(\frac{1+\delta_n}{2}\right)=\frac{\delta_n^2}{2\ln2}+O(\delta_n^4).
\end{eqnarray}
Multiplying by $2^n$ yields Eq.~(\ref{eq:critical_constant}). The finite-$n$ nonviolation follows independently from the fact that $E=1/\sqrt2$ is realized by quantum correlations (Theorem~\ref{thm:quantum_Tsirelson_cell}) and quantum resources obey Neural-IC (Corollary~\ref{cor:quantum_neural_IC}).
\end{proof}

The proof separates two statements that are sometimes conflated. The limit calculation explains the constant $1/(2\ln2)$; the claim that no finite depth violates IC is guaranteed by the quantum realization and the general Neural-IC theorem, not by the asymptotic limit alone.

For finite-depth experiments, the relevant operational question is not only whether $E>1/\sqrt2$ asymptotically, but how large $E$ must be before a violation is visible at the available depth. We therefore define
\begin{eqnarray}
E_{\mathrm{crit}}(n) := \inf\left\{E\in[0,1]: I_{\mathrm{N\text{-}RAC}}(n,E)\ge 1\right\},
\label{eq:Ecrit_n}
\end{eqnarray}
with $I_{\mathrm{N\text{-}RAC}}(n,E)$ evaluated by Eq.~(\ref{eq:VI_closed_form}). Fig.~\ref{fig:finite_depth_boundary} shows the numerical root for $1\le n\le40$. The curve decreases monotonically toward $1/\sqrt2$ from above, clarifying a finite-size effect that is important for simulations: shallow protocols can tolerate apparently post-quantum biases without yet displaying a violation, while increasing depth closes the gap to the asymptotic Tsirelson boundary. For example, the numerical scan gives $E_{\mathrm{crit}}(10)\simeq0.7187$ and $E_{\mathrm{crit}}(20)\simeq0.7131$, already close to $1/\sqrt2\simeq0.7071$.

\begin{figure*}[t]
\centering
\includegraphics[width=0.70\linewidth]{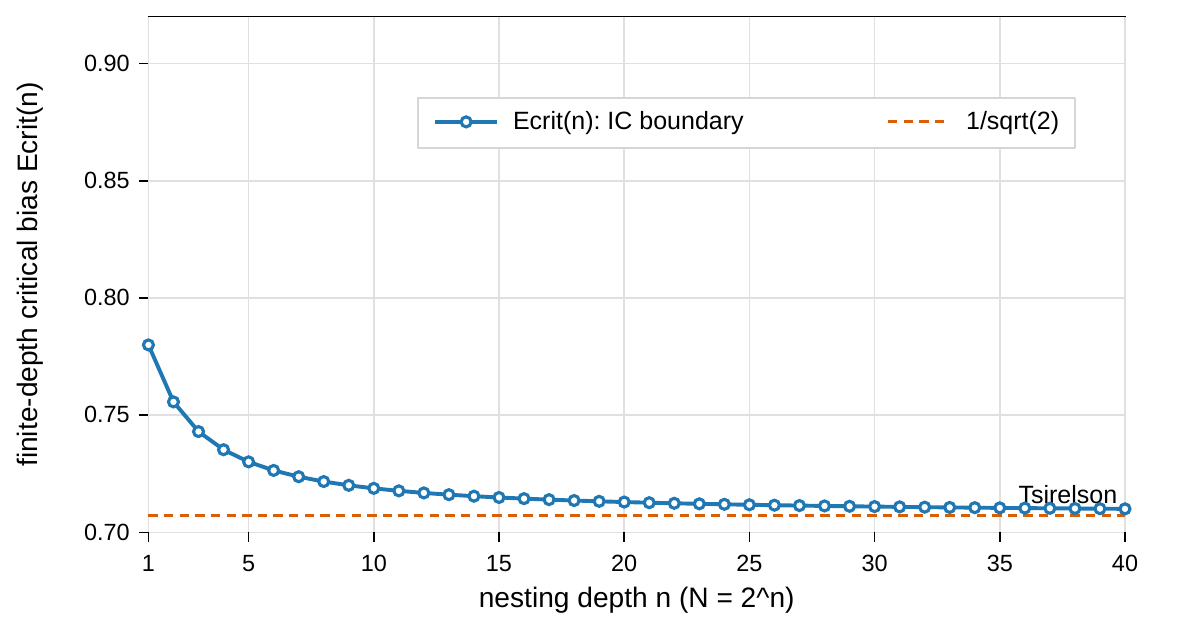}
\caption{Finite-depth critical bias $E_{\mathrm{crit}}(n)$ obtained by solving $I_{\mathrm{N\text{-}RAC}}(n,E)=1$ using Eq.~(\ref{eq:VI_closed_form}). The dashed line marks the Tsirelson value $1/\sqrt2$. The finite-depth boundary approaches the quantum threshold from above, clarifying why slightly supercritical correlations may require sufficiently large nesting depth before violating the one-bit Neural-IC bound.}
\label{fig:finite_depth_boundary}
\end{figure*}

\subsection{Capacity-accounting probes and leakage controls}

The depth and bias scans test the correlation resource. A complementary set of probes tests the interface semantics. Fig.~\ref{fig:capacity_sanity} compares three explicit channels for $N=8$: a hard $m$-bit interface that copies $m$ target bits; two finite-precision coordinates that pack $q$ bits each; and two noisy analog coordinates using BPSK over AWGN with signal-to-noise ratio $\rho$. These probes are not optimized neural architectures. They are accounting checks: lossless hard or packed finite-precision interfaces saturate the capacity that is explicitly counted, while noisy analog interfaces remain below the corresponding capacity upper bound. Apparent oracle behavior appears only if the effective capacity of the interface is undercounted.

\begin{figure*}[t]
\centering
\includegraphics[width=0.70\linewidth]{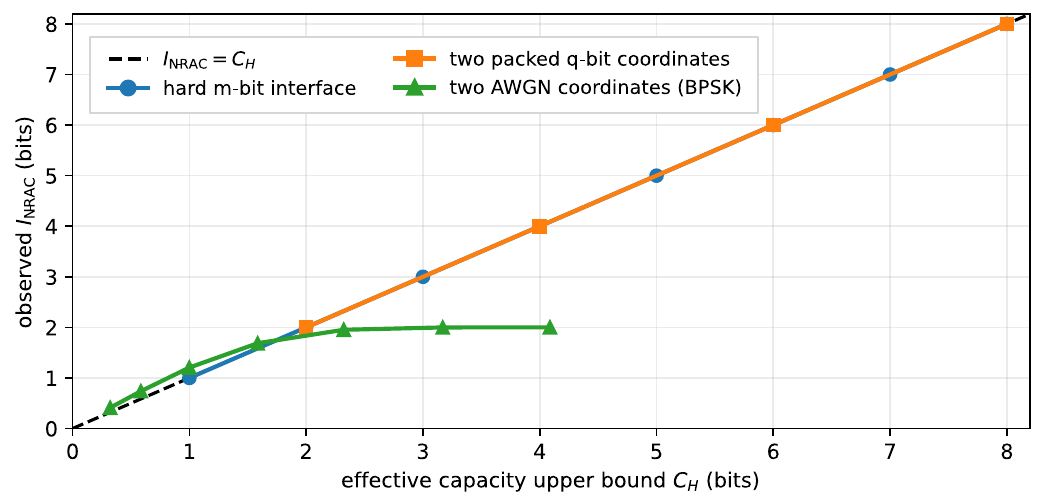}
\caption{Capacity accounting checks for finite-precision and noisy interfaces. The diagonal is the ideal accounting line $I_{\mathrm{N\text{-}RAC}}=C_H$. Lossless hard or packed finite-precision interfaces saturate the counted capacity, while noisy AWGN/BPSK coordinates fall below their capacity upper bound.}
\label{fig:capacity_sanity}
\end{figure*}

The one-bit boundary in Fig.~\ref{fig:finite_depth_boundary} is the special case $C_H=1$ of a more general finite-depth capacity boundary. For the nested isotropic protocol, define
\begin{widetext}
\begin{eqnarray}
E_{\mathrm{crit}}(n;C_H) := \inf\biggl\{E\in[0,1]: 2^n\left[1-h\left(\frac{1+E^n}{2}\right)\right]\ge C_H\biggr\}.
\label{eq:ecrit-capacity}
\end{eqnarray}
\end{widetext}
The numerical roots in Fig.~\ref{fig:capacity-phase-diagram} show that larger capacity budgets shift the finite-depth crossing upward, but for any fixed finite $C_H$ the asymptotic instability threshold remains $E=1/\sqrt2$. The large-$n$ approximation
\begin{eqnarray}
E_{\mathrm{crit}}(n;C_H)\simeq \frac{1}{\sqrt2}\,(2C_H\ln2)^{1/(2n)}
\label{eq:ecrit-asymptotic-capacity}
\end{eqnarray}
explains the slow convergence of finite-depth scans near criticality.

\begin{figure*}[t]
\centering
\includegraphics[width=0.55\linewidth]{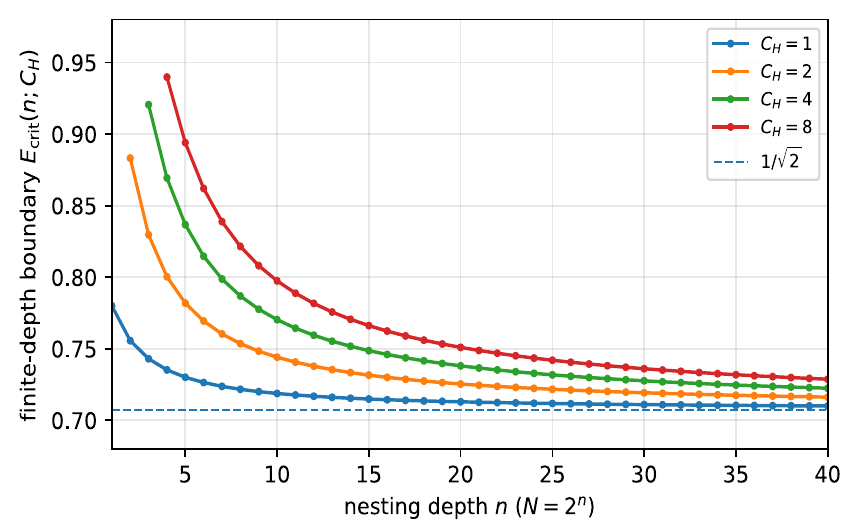}
\caption{Finite-depth phase boundaries for several effective capacities $C_H$. Each curve solves Eq.~(\ref{eq:ecrit-capacity}). Larger capacity budgets require stronger finite-depth correlations before a violation is visible, but all fixed finite budgets approach the Tsirelson threshold asymptotically.}
\label{fig:capacity-phase-diagram}
\end{figure*}

Finally, the controlled Neural-RAC ablations in Fig.~\ref{fig:neural-rac-ablation} test the diagnostic interpretation directly at $N=8$. The strict models use a straight-through binary bottleneck trained end-to-end with the encoder blind to the query, and they remain below their counted capacities. The leaky controls deliberately violate one assumption at a time: a query-leaky encoder sends $a_b$, a finite-precision real coordinate packs the database bits, and episode-specific weights store the sampled database. These controls reach $I_{\mathrm{N\text{-}RAC}}=8$ with a nominally tiny interface, but the apparent violation disappears once the failed assumption is identified. The query separation is broken, the precision capacity must be counted, or the weights must be treated as data-dependent memory.

\begin{figure*}[t]
\centering
\includegraphics[width=0.70\linewidth]{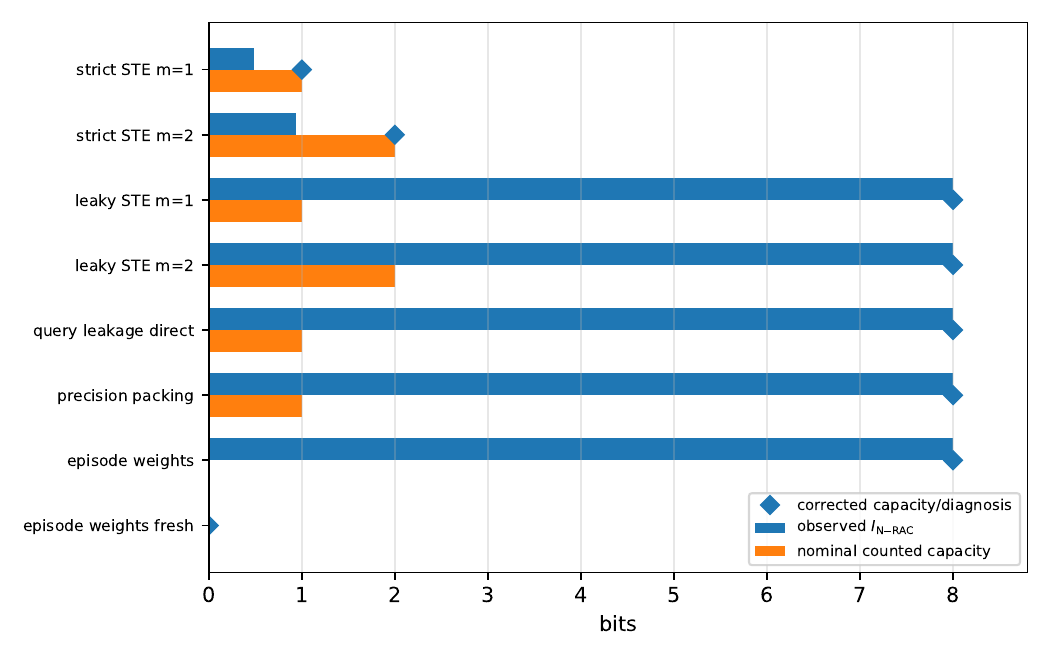}
\caption{Controlled leakage and capacity-accounting simulations for $N=8$. Blue bars show the observed Neural-RAC score, orange bars show the nominally counted interface capacity, and diamond markers show the corrected capacity or diagnosis after identifying the missing resource. Query-separated straight-through binary bottlenecks stay below capacity; query leakage, precision packing, and episode-specific weights create apparent oracle behavior only under undercounted or invalid accounting.}
\label{fig:neural-rac-ablation}
\end{figure*}

\subsection{Quantum-layer visibility and empirical estimation}

Appendix~\ref{app:qnn-module} gives a one-parameter quantum CHSH family with effective isotropic bias
\begin{eqnarray}
E_{\mathrm{iso}}(\varphi)=\frac{\cos\varphi+\sin\varphi}{2}, \quad 0 \le \varphi \le \frac{\pi}{4}.
\label{eq:VI_Eiso_phi}
\end{eqnarray}
To model imperfect Bell-pair preparation, measurement error, or hardware noise, we introduce a visibility parameter $\nu\in[0,1]$ and evaluate $E_{\mathrm{eff}}=\nu E_{\mathrm{iso}}(\varphi)$ in Eq.~(\ref{eq:VI_closed_form}). Fig.~\ref{fig:quantum_visibility} shows that the ideal Tsirelson endpoint remains below the IC limit, while visibility loss moves the curve deeper into the subcritical region. Thus visibility is not only an experimental imperfection; it is a directly interpretable reduction of the effective CHSH bias entering the Neural-RAC score.

\begin{figure*}[t]
\centering
\includegraphics[width=0.70\linewidth]{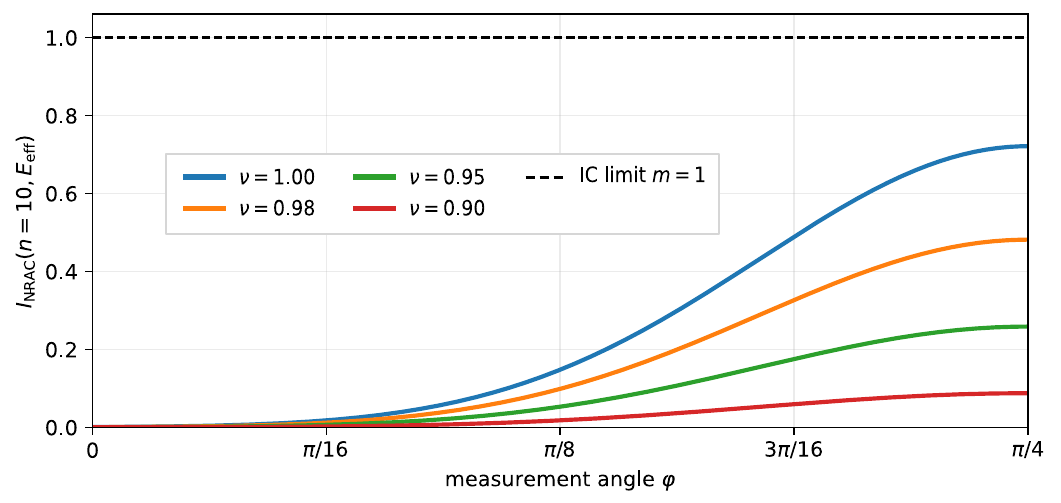}
\caption{Quantum correlation-layer sweep at $n=10$. The measurement angle $\varphi$ tunes the effective isotropic bias, while visibility $\nu$ models noise. The ideal curve reaches the Tsirelson endpoint without crossing the Neural-IC limit; lower visibilities remain safely subcritical.}
\label{fig:quantum_visibility}
\end{figure*}

For known protocols, Eq.~(\ref{eq:VI_closed_form}) is preferable to Monte Carlo estimation because the per-query bias near the critical point can be exponentially small. A direct empirical estimate of $P=1/2+E^n/2$ at relative precision on the scale of $E^n$ requires $\Omega(E^{-2n})$ samples. At the Tsirelson point this is $\Omega(2^n)=\Omega(N)$, even though the total information score remains $O(1)$. For unknown trained models, Appendix~\ref{app:computing-score} gives plug-in estimators and confidence intervals based on observed contingency tables. In reporting an empirical Neural-RAC experiment, the minimal reproducibility package should include the query-generation rule, the capacity certificate for $H$, the estimator for $I_{\mathrm{N\text{-}RAC}}$, confidence intervals or closed-form uncertainty, and explicit checks that the encoder is query-blind and that the weights are fixed before the episode.

\section{Conclusion and Discussion}\label{sec:conclusion}

This work develops Neural-IC as a capacity-accounting principle for episodic, query-separated computation. The central claim is deliberately narrow. When the data-dependent interface is fixed before the query arrives, the interface is operationally a message; therefore, the total query-conditioned information that can be harvested from a fresh database is bounded by the independently certified information capacity. This does not assert a finite-bit law for arbitrary real-valued hidden layers. It says that any such law requires a physical capacity certificate: finite alphabet, finite precision, channel noise, power, bandwidth, or another explicit constraint.

\subsection{What the framework establishes}

The revised formulation separates the causal embedding from the capacity certificate. The embedding inequality $I_{\mathrm{N\text{-}RAC}}\le I(\vec a:H,B)$ follows from query separation, data processing, and the chain rule. The bound $I_{\mathrm{N\text{-}RAC}}\le C_H$ follows only after the interface model supplies $I(\vec a:H,B)\le C_H$. This distinction is useful in practice because it turns Neural-IC into a diagnostic: if a small representation appears to answer too many independent random-access queries too well, then either the capacity has been undercounted, the query has leaked into the encoder, the weights have become episode-specific memory, or the assumed correlation resource is not classical/quantum.

The classical benchmark clarifies the meaning of quantum advantage. A classical one-bit message can saturate the total one-bit accounting law by storing one target bit, and majority encoding is the optimal one-bit strategy for average random access. Quantum CHSH correlations do not violate the information budget; rather, they improve fair query-conditioned access at fixed capacity. In the nested isotropic protocol this improvement is governed by the bias multiplication law $E\mapsto E^n$. The point $E=1/\sqrt2$ is the critical frontier: quantum mechanics attains it, while any isotropic no-signaling resource with $E>1/\sqrt2$ eventually converts a one-bit interface into an oracle-like random-access memory.

\subsection{Implications for correlation-assisted learning}

For learning architectures, the message is not that every hidden layer should be analyzed as an IC game. The appropriate regime is more specific: a fresh database, an encoder that commits before the query, a decoder that accesses the data only through the representation, and an explicit capacity model for that representation. Within this regime, Neural-RAC becomes a stress test for claims of memory, retrieval, or quantum-enhanced correlation. A QNN module can be interpreted as a physically admissible correlation layer that shifts the accessible CHSH bias toward the Tsirelson frontier, while still respecting the bottleneck accounting guaranteed by quantum mutual information.

The simulations reinforce this interpretation. The strict binary bottlenecks stay below their counted capacities. The deliberate violations are easy to manufacture, but they are diagnostic rather than paradoxical: the query leakage, precision packing, and episode-specific weights are precisely the side channels that the theorem says must be counted or excluded. The multi-capacity phase diagram further shows that finite-depth experiments can look benign even slightly above the asymptotic threshold, so any empirical claim should report the tested depth, estimated correlation bias, and capacity certificate together.

\subsection{Scope and outlook}

Several extensions remain natural. The conditional score of Appendix~\ref{app:correlated-data} handles nonuniform or correlated databases without double-counting shared information. The asymmetric-bias theorem shows how non-isotropic correlation layers can be analyzed without forcing all performance into one scalar. A next experimental step would be to implement small-depth Neural-RAC correlation layers on actual quantum hardware and compare the measured visibility to the finite-depth phase boundary. In this precise sense, the Tsirelson bound can be read not only as a Bell inequality frontier, but also as a stability boundary for representation-mediated random access.

\begin{acknowledgments}
JB thank Prof.~Marek {\.Z}ukowski for supports. This work was supported by the Korean ARPA-H Project through the Korea Health Industry Development Institute (KHIDI), funded by the Ministry of Health \& Welfare, Republic of Korea (RS-2025-25456722), the Ministry of Trade, Industry and Resources (MOTIR), Korea, under the project ``Industrial Technology Infrastructure Program'' (RS-2024-00466693), and the Ministry of Science, ICT and Future Planning (MSIP) by the National Research Foundation of Korea (RS-2024-00432214, RS-2025-03532992). This work is partially carried out under IRA Programme, project no.~FENG.02.01-IP.05-0006/23, financed by the FENG program 2021-2027, Priority FENG.02, Measure FENG.02.01., with the support of the FNP. M.P. acknowledges support from the NCN Poland, ChistEra2023/05/Y/ST2/00005 under the project Modern Device Independent Cryptography (MoDIC). We acknowledge the Yonsei University Quantum Computing Project Group for providing support and access to the Quantum System One (Eagle Processor), which is operated at Yonsei University.
\end{acknowledgments}

\onecolumngrid

\appendix 

\section{Nested CHSH Protocols and the Bias--Multiplication Law}\label{app:nested-chsh}

This appendix provides a self-contained and fully explicit derivation of the central combinatorial identity behind the ``pyramid'' (nested) protocols used in Secs.~\ref{sec:neural-chsh}--\ref{sec:experiments}. Concretely, we prove that if Alice and Bob share $2^n-1$ independent \emph{isotropic} CHSH correlation cells of bias $E$, then there exists a $(N,m)=(2^n,1)$ Neural-RAC protocol whose per-query success probability is
\begin{eqnarray}
P_K = \Pr[\beta=a_K\mid b=K] \;=\; \frac{1+E^n}{2}, \quad \forall K \in \{0,\dots,2^n-1\}.
\label{eq:AppA_PK}
\end{eqnarray}
This is the precise content of Eq.~(8) in Ref.~\cite{Pawlowski2009}, translated into the notation of our Neural-IC formulation. Because the main text emphasized the conceptual flow, we verify each step with full algebraic detail.

\subsection{Isotropic CHSH cells as biased XOR constraints}

We recall the isotropic CHSH cell (Definition~\ref{def:isotropic_cell}): it is a no-signaling box with inputs $(s,t) \in \{0,1\}^2$ and outputs $(A,B) \in \{0,1\}^2$ such that the local outcomes are uniformly random and
\begin{eqnarray}
\Pr[A\oplus B = s t \mid s,t] = \frac{1+E}{2}, \quad \forall s,t,
\label{eq:AppA_isotropic}
\end{eqnarray}
where $E \in [0,1]$ is the bias (correlation strength). The key to all subsequent calculations is that an isotropic cell admits an explicit generative representation in terms of an independent error bit.

\begin{lemma}[Generative model for an isotropic CHSH cell]
\label{lem:AppA_generative}
A correlation box is isotropic with bias $E$ if and only if there exist random variables $U\sim\mathrm{Bern}(1/2)$ and $e\sim\mathrm{Bern}\!\big(\frac{1-E}{2}\big)$, independent of each other and independent of $(s,t)$, such that the outputs can be generated as
\begin{eqnarray}
A := U,  \quad  B := U \oplus s t \oplus e.
\label{eq:AppA_generative_model}
\end{eqnarray}
Equivalently, the CHSH error bit
\begin{eqnarray}
e := (A \oplus B) \oplus st
\label{eq:AppA_error_def}
\end{eqnarray}
satisfies $\Pr[e=0]=\frac{1+E}{2}$ and is statistically independent of $(s,t)$.
\end{lemma}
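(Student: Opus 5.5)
The argument has two directions, and both reduce to computing the joint law of $(A,B)$ given $(s,t)$. An isotropic cell (Definition~\ref{def:isotropic_cell}) is specified only through two properties: uniform local marginals, and a winning probability $(1+E)/2$ for every input pair. I will therefore show that the generative model produces exactly these properties. Conversely, I will show that these properties determine $P(A,B\mid s,t)$ completely, and that this law coincides with the one produced by Eq.~(\ref{eq:AppA_generative_model}).

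\emph{Sufficiency (generative model $\Rightarrow$ isotropic).} Take $A=U$ and $B=U\oplus st\oplus e$ with $U,e$ independent of each other and of $(s,t)$.
\begin{itemize}
\item $A$ is uniform because $U$ is.
\item For fixed $(s,t)$, $B$ is $U$ XORed with a bit independent of $U$, so $B$ is also uniform.
\item $A\oplus B\oplus st=e$, so $\Pr[A\oplus B=st\mid s,t]=\Pr[e=0]=(1+E)/2$ for every $(s,t)$.
\item No-signaling follows at once, since both marginals are uniform and hence independent of the distant input.
\end{itemize}

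\emph{Necessity (isotropic $\Rightarrow$ generative model).} Fix $(s,t)$ and parametrize the $2\times2$ table $P(A,B\mid s,t)$ by its four entries.
\begin{itemize}
\item Uniform marginals give $P(0,0)+P(0,1)=P(1,0)+P(1,1)=P(0,0)+P(1,0)=1/2$. From these, $P(0,0)=P(1,1)$ and $P(0,1)=P(1,0)$.
\item The winning constraint then fixes the table. If $st=0$, then $P(0,0)=P(1,1)=(1+E)/4$ and $P(0,1)=P(1,0)=(1-E)/4$. If $st=1$, the two roles are exchanged.
\item This is precisely the law of $(U,\,U\oplus st\oplus e)$ with $U$ uniform and $\Pr[e=1]=(1-E)/2$ independent of $U$. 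The box can therefore be simulated by the generative model.
\end{itemize}
For the "equivalently" clause, note that $e:=A\oplus B\oplus st$ has conditional law $\mathrm{Bern}((1-E)/2)$ for every $(s,t)$. Its conditional distribution does not depend on $(s,t)$, which is exactly statistical independence from $(s,t)$ under any input distribution.

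\emph{Main obstacle: what "exist" means.} The generative form holds as an identity in distribution, not pointwise on a given probability space, so the proof should state this explicitly. The construction also needs a shared $U$ together with an $e$ that, for the purpose of reproducing the table, sits with Bob and depends on $s$ through $st$. It is therefore a statistical representation of the correlation, not a local model; this is compatible with the CHSH violation for $E>1/2$. I would also verify that $U$ and $e$ are jointly independent, not just pairwise. This holds because, given $(s,t)$, the pair $(A,\,A\oplus B\oplus st)$ has product law $\mathrm{Bern}(1/2)\times\mathrm{Bern}((1-E)/2)$, read directly off the table above.
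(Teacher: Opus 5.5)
Your proof is correct and follows the paper's own argument. Sufficiency is a direct computation of the marginals and the win event. Necessity shows that uniform marginals together with the winning probability force the table entries $(1\pm E)/4$, which is exactly the law of $(U,\,U\oplus st\oplus e)$. Your explicit derivation of $P(0,0)=P(1,1)$ and $P(0,1)=P(1,0)$ from the marginal constraints makes rigorous the step the paper only attributes to ``symmetry of the isotropic family.'' Your last observation, that $(A,\,A\oplus B\oplus st)$ has the product law, means you may take $U:=A$ and $e:=A\oplus B\oplus st$ pointwise, so your ``only in distribution'' caveat is unnecessary.
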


\begin{proof}---The proof is as follows.

($\Rightarrow$) Assume Eq.~(\ref{eq:AppA_generative_model}). Then, $A=U$ is uniform. Moreover, $B=U \oplus st \oplus e$ is also uniform because XOR with the uniform bit $U$ randomizes $B$ regardless of $st$ and $e$. Finally,
\begin{eqnarray}
A \oplus B = U \oplus (U \oplus st \oplus e) = st \oplus e.
\end{eqnarray}
Hence, $A \oplus B = st$ if and only if $e=0$, which occurs with probability $\Pr[e=0]=1-\Pr[e=1]=1-\frac{1-E}{2}=\frac{1+E}{2}$. This yields Eq.~(\ref{eq:AppA_isotropic}).

\medskip
($\Leftarrow$) Conversely, assume the box is isotropic in the sense of Eq.~(\ref{eq:AppA_isotropic}) and has uniform marginals. We fix $(s,t)$. There are exactly two output pairs with parity $A \oplus B = st$ and two with $A \oplus B \neq st$. The combination of uniform marginals and the symmetry of the isotropic family implies that the two ``winning'' pairs each occur with probability $(1+E)/4$ and the two ``losing'' pairs each occur with probability $(1-E)/4$. Now generate $U \sim \mathrm{Bern}(1/2)$ and $e \sim \mathrm{Bern}(\frac{1-E}{2})$ independently and set $(A,B)$ by Eq.~(\ref{eq:AppA_generative_model}). A direct enumeration shows that for any fixed $(s,t)$ this produces exactly the four probabilities above:
\begin{eqnarray}
\Pr[A=u,\,B=u\oplus st \mid s,t] &=& \Pr[U=u,\,e=0] = \frac{1+E}{4},\nonumber \\
\Pr[A=u,\,B=u\oplus st\oplus 1 \mid s,t] &=& \Pr[U=u,\,e=1] = \frac{1-E}{4},
\end{eqnarray}
for $u \in \{0,1\}$. Thus, the generative model reproduces the isotropic box exactly, proving the claim. Finally, Eq.~(\ref{eq:AppA_error_def}) follows by rearranging $A\oplus B = st\oplus e$, and independence of $e$ from $(s,t)$ is immediate since $e$ is sampled independently in the model.
\end{proof}

The utility of Lemma~\ref{lem:AppA_generative} is conceptual and technical: even if the inputs $(s,t)$ are chosen adaptively as functions of past outputs, the error bit $e$ remains independent of those choices. This observation is what makes ``nesting'' analytically tractable.

\begin{corollary}[Independent error bits under adaptive inputs]
\label{cor:AppA_iid_errors}
Consider a collection of independent isotropic CHSH cells of common bias $E$.
For each invocation $i$, let $(s_i,t_i)$ be \emph{any} (possibly adaptive) rule for selecting inputs as functions of previously observed classical data, and let $(A_i,B_i)$ be the corresponding outputs.
Define the error bits
\begin{eqnarray}
e_i := (A_i\oplus B_i)\oplus s_i t_i.
\end{eqnarray}
Then $\{e_i\}$ are i.i.d.\ Bernoulli random variables with
$\Pr[e_i=0]=\frac{1+E}{2}$, and they are independent of the entire sigma-algebra generated by the inputs $\{(s_j,t_j)\}_{j\le i}$.
\end{corollary}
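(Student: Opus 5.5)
We will prove the statement by an explicit construction of the joint distribution, using the generative representation of Lemma~\ref{lem:AppA_generative} for each cell and a simple induction over invocations. Model the $i$th cell as drawing a fresh pair $(U_i,e_i)$ with $U_i\sim\mathrm{Bern}(1/2)$ and $e_i\sim\mathrm{Bern}(\frac{1-E}{2})$, all mutually independent across $i$ and independent of the database, the query, and any local randomness. Since the cells are independent, this product model reproduces the joint behavior of the whole bank. Once $(s_i,t_i)$ have been chosen, the cell outputs $A_i=U_i$ and $B_i=U_i\oplus s_it_i\oplus e_i$. The defined error bit $(A_i\oplus B_i)\oplus s_it_i$ then equals $e_i$ identically.

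The key step is to show that $e_i$ is independent of $\mathcal F_i$, where $\mathcal F_i$ is the sigma-algebra generated by the inputs $(s_j,t_j)_{j\le i}$, together with $e_1,\dots,e_{i-1}$ and all other previously observed data. We proceed by induction on $i$. The input $(s_i,t_i)$ is a function of the external data $(\vec a,b,\text{local randomness})$ and of earlier outputs $(A_j,B_j)_{j<i}$. Those earlier outputs are in turn functions of the external data, $(U_j,e_j)_{j<i}$, and earlier inputs. Hence everything in $\mathcal F_i$ is measurable with respect to $\mathcal G_i:=\sigma(\vec a,b,\text{local randomness},(U_j,e_j)_{j<i})$. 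By construction, $e_i$ is independent of $\mathcal G_i$. This gives independence from the whole input history and also from $e_1,\dots,e_{i-1}$.

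To obtain i.i.d.\ status, we then write
\[
\Pr[e_1=x_1,\dots,e_k=x_k]=\prod_i\Pr[e_i=x_i\mid e_{<i}]
\]
and use the previous step to replace each conditional by the marginal $\Pr[e_i=x_i]$. Each such marginal is the Bernoulli law with $\Pr[e_i=0]=\frac{1+E}{2}$.

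The main obstacle is a subtle point about the physical model. A generic no-signaling box is specified only by its conditional distribution given $(s,t)$, not by a pre-sampled $(U,e)$. We need to argue that, for independent single-use boxes queried at possibly different times, the joint statistics of adaptively chosen inputs coincide with those of the generative model. We plan to handle this by conditioning on the realized inputs, one cell at a time. Given $\mathcal G_i$, cell $i$ is untouched and independent, so its output distribution is exactly the isotropic conditional $P(A,B\mid s_i,t_i)$. Lemma~\ref{lem:AppA_generative} shows this conditional is reproduced by $(U_i,e_i)$ regardless of $(s_i,t_i)$.

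The timing of the two parties' inputs matters only through no-signaling. If Alice's input is chosen before Bob's, Alice's marginal $A_i$ is uniform independently of $t_i$. The conditional of $B_i$ given $A_i$ and $(s_i,t_i)$ is then fixed, and this still matches the generative model. Therefore a chain-rule factorization over cells yields the claimed joint law.
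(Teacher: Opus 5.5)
Your proposal is correct and follows the paper's own route. Like the paper, you represent each cell via Lemma~\ref{lem:AppA_generative} with a fresh, independent pair $(U_i,e_i)$, so that $e_i$ is independent of everything generated before invocation $i$. Your measurability argument via $\mathcal G_i$, and your chain-rule justification that the pre-sampled model reproduces the physical joint law under adaptive inputs, only spell out steps the paper's one-line proof takes for granted.
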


\begin{proof}---By Lemma~\ref{lem:AppA_generative}, each cell can be sampled by first drawing an independent pair $(U_i, e_i)$ and then setting $A_i=U_i$, $B_i = U_i \oplus s_i t_i \oplus e_i$. Because different cells are independent resources, the pairs $(U_i,e_i)$ are independent across $i$. In particular, $e_i$ is drawn independently of $(s_i,t_i)$ and of all past randomness, hence the stated i.i.d.\ property follows.
\end{proof}

\subsection{Tree notation for the nested protocol}\label{appA:tree-notation}

We now define the nested $(N,m)=(2^n,1)$ protocol in a form convenient for proofs. The pictorial ``pyramid'' of Fig.~3 in Ref.~\cite{Pawlowski2009} is precisely a full binary tree of depth $n$ whose internal nodes correspond to CHSH cells.

\paragraph*{Indexing.} 
Let $\{0,1\}^{\le n-1}$ denote the set of all binary strings (``words'') of length at most $n-1$. Each word $w$ labels an internal node of the tree, with children $w0$ and $w1$. Leaves are labeled by words of length $n$, and we identify Alice's database bits with leaf labels:
\begin{eqnarray}
\{a_u\}_{u \in \{0,1\}^n}.
\end{eqnarray}
Bob's query is a leaf label $b \in \{0,1\}^n$ (uniform), and the target bit is $a_b$.

\paragraph*{Resources.}
For each internal node $w\in\{0,1\}^{\le n-1}$, Alice and Bob share an independent isotropic CHSH cell (Lemma~\ref{lem:AppA_generative}) with outputs $(A_w,B_w)$ and inputs $(s_w,t_w)$. All cells are independent of the database bits.

\paragraph*{Messages.}
The encoder associates to each node $w$ a single classical bit $x_w$. At leaves $u \in \{0,1\}^n$ we set
\begin{eqnarray}
x_u := a_u.
\label{eq:AppA_leaf_message}
\end{eqnarray}
At internal nodes, $x_w$ is computed recursively using the local cell output $A_w$.

\subsection{Encoding and decoding rules}\label{appA:encoding-decoding}

\paragraph*{Encoding (Alice).}
For each internal node $w\in\{0,1\}^{\le n-1}$, once the children's messages $x_{w0}$ and $x_{w1}$ are available, Alice sets
\begin{eqnarray}
s_w := x_{w0}\oplus x_{w1},  \quad x_w := x_{w0}\oplus A_w.
\label{eq:AppA_encoding_rules}
\end{eqnarray}
At the end of this upward recursion, Alice transmits the single-bit bottleneck message
\begin{eqnarray}
x := x_{\epsilon},
\label{eq:AppA_root_message}
\end{eqnarray}
where $\epsilon$ denotes the empty word (the root).

\paragraph*{Decoding (Bob).}
Bob receives $x$ and the query $b=b_1 b_2 \cdots b_n\in\{0,1\}^n$. Bob traverses the tree from the root to the leaf $b$. At an internal node $w$ whose depth equals $\abs{w}=r$ (so $w=b_1\cdots b_r$),
Bob sets the input
\begin{eqnarray}
t_w := b_{r+1} \in \{0,1\},
\label{eq:AppA_decoding_input}
\end{eqnarray}
obtains the output $B_w$, and updates his current message estimate by
\begin{eqnarray}
\widehat{x}_{w t_w} := \widehat{x}_{w} \oplus B_w,
\label{eq:AppA_message_update}
\end{eqnarray}
starting from $\widehat{x}_{\epsilon}:=x$. After $n$ steps, he reaches the leaf $b$ and outputs
\begin{eqnarray}
\beta := \widehat{x}_{b}.
\label{eq:AppA_output_rule}
\end{eqnarray}

\medskip
The decoding update Eq.~(\ref{eq:AppA_message_update}) is nothing but the seed rule $\beta = x \oplus B$ (Sec.~\ref{sec:preliminaries}) applied repeatedly: at each level, Bob uses one cell to ``choose'' the appropriate child message. The crux is to show that each choice introduces an independent biased error, and that these errors XOR (add modulo $2$) along the path.

\subsection{One-step correctness identity and error propagation}

We begin with a single-node identity that expresses exactly what Bob recovers from $w$ when he applies the update Eq.~(\ref{eq:AppA_message_update}).

\begin{lemma}[One-step child recovery up to the local CHSH error]
\label{lem:AppA_one_step}
Fix an internal node $w$. Let $e_w$ be the CHSH error bit of the cell at $w$,
\begin{eqnarray}
e_w := (A_w\oplus B_w)\oplus s_w t_w.
\label{eq:AppA_local_error_bit}
\end{eqnarray}
Then, the encoding/decoding rules Eqs.~(\ref{eq:AppA_encoding_rules})--(\ref{eq:AppA_message_update}) imply the identity
\begin{eqnarray}
x_w \oplus B_w = x_{w t_w} \oplus e_w.
\label{eq:AppA_child_recovery_identity}
\end{eqnarray}
\end{lemma}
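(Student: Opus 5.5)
The identity is a purely algebraic consequence of the encoding rule, the definition of the local error bit, and the fact that all variables are bits under $\oplus$. No probabilistic input is needed. The plan is to isolate $A_w\oplus B_w$ from the error-bit definition and substitute it into $x_w\oplus B_w$.

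\textbf{Step 1 (express the cell outputs via the error bit).} Rearrange Eq.~(\ref{eq:AppA_local_error_bit}) as
\begin{eqnarray}
A_w\oplus B_w = s_w t_w\oplus e_w,
\end{eqnarray}
using that $\oplus$ is associative and each bit is its own inverse.

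\textbf{Step 2 (substitute the encoding rule).} From Eq.~(\ref{eq:AppA_encoding_rules}) we have $x_w=x_{w0}\oplus A_w$. Hence
\begin{eqnarray}
x_w\oplus B_w = x_{w0}\oplus A_w\oplus B_w = x_{w0}\oplus s_w t_w\oplus e_w .
\end{eqnarray}

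\textbf{Step 3 (case split on Bob's input $t_w\in\{0,1\}$).}
\begin{itemize}
\item If $t_w=0$, then $s_wt_w=0$ and the right-hand side is $x_{w0}\oplus e_w=x_{wt_w}\oplus e_w$.
\item If $t_w=1$, then $s_wt_w=s_w=x_{w0}\oplus x_{w1}$. The right-hand side becomes $x_{w0}\oplus x_{w0}\oplus x_{w1}\oplus e_w=x_{w1}\oplus e_w=x_{wt_w}\oplus e_w$.
\end{itemize}
Both cases give Eq.~(\ref{eq:AppA_child_recovery_identity}). This mirrors the seed argument of Lemma~\ref{lem:CHSH_equals_RAC}, with $(a_0,a_1)$ replaced by the children's messages $(x_{w0},x_{w1})$.

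\textbf{Main subtlety.} The algebra is not the difficulty. The one point to state clearly is that $e_w$ is defined on the inputs actually used, $s_w$ from Alice's recursion and $t_w$ from Bob's path. The identity then holds pointwise for every realization, before any probabilistic claim is made.

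The distributional facts are deferred to the subsequent induction along the path, which uses Corollary~\ref{cor:AppA_iid_errors}. These are that $e_w$ is independent of the adaptively chosen $(s_w,t_w)$ and has $\Pr[e_w=0]=(1+E)/2$.
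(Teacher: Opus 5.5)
Your proposal is correct and follows the paper's proof essentially step for step: rearrange the error-bit definition to $A_w\oplus B_w = s_w t_w\oplus e_w$, substitute the encoding rule $x_w = x_{w0}\oplus A_w$, and case-split on $t_w\in\{0,1\}$. Your remark that the identity holds pointwise, with the distributional facts deferred to Corollary~\ref{cor:AppA_iid_errors}, matches how the paper uses the lemma in the subsequent path induction.
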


\begin{proof}---By the encoding rule, $x_w = x_{w0}\oplus A_w$. Therefore,
\begin{eqnarray}
x_w \oplus B_w = x_{w0} \oplus A_w \oplus B_w.
\label{eq:AppA_expand_xw}
\end{eqnarray}
By definition of the error bit Eq.~(\ref{eq:AppA_local_error_bit}), we have
\begin{eqnarray}
A_w \oplus B_w = s_w t_w \oplus e_w.
\label{eq:AppA_AB_relation}
\end{eqnarray}
Substituting Eq.~(\ref{eq:AppA_AB_relation}) into Eq.~(\ref{eq:AppA_expand_xw}) yields
\begin{eqnarray}
x_w\oplus B_w
= x_{w0}\oplus s_w t_w \oplus e_w.
\label{eq:AppA_expand_2}
\end{eqnarray}
Recall $s_w = x_{w0}\oplus x_{w1}$. If $t_w=0$, then $s_w t_w=0$ and Eq.~(\ref{eq:AppA_expand_2}) gives $x_w\oplus B_w = x_{w0}\oplus e_w = x_{w t_w}\oplus e_w$. If $t_w=1$, then $s_w t_w=s_w$, and
\begin{eqnarray}
x_w \oplus B_w = x_{w0}\oplus (x_{w0}\oplus x_{w1})\oplus e_w = x_{w1}\oplus e_w = x_{w t_w}\oplus e_w.
\end{eqnarray}
This proves Eq.~(\ref{eq:AppA_child_recovery_identity}).
\end{proof}

Lemma~\ref{lem:AppA_one_step} is the local engine of nesting. It says: given the parent message $x_w$, one cell lets Bob recover the requested child message, but possibly flipped by a local error bit $e_w$. Thus, the only remaining question is how these flips compose down the path.

\begin{theorem}[Global parity structure of the nested protocol]
\label{thm:AppA_parity_structure}
Run the nested protocol of Sec.~\ref{appA:tree-notation} and Sec.~\ref{appA:encoding-decoding} on $N=2^n$ database bits using independent isotropic CHSH cells. Let $b \in \{0,1\}^n$ be Bob's query leaf. Let $\mathcal{P}(b)$ denote the set of internal nodes on the unique root-to-leaf path to $b$:
\begin{eqnarray}
\mathcal{P}(b)=\{\epsilon, \, b_1, \, b_1b_2, \, \dots, \, b_1b_2\cdots b_{n-1}\}.
\end{eqnarray}
Then, Bob's output satisfies the exact identity
\begin{eqnarray}
\beta = a_b \oplus \left[ \bigoplus_{w\in\mathcal{P}(b)} e_w \right].
\label{eq:AppA_global_parity}
\end{eqnarray}
In particular, Bob is correct if and only if an even number of local errors occurs along the path.
\end{theorem}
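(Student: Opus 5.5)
We prove Eq.~(\ref{eq:AppA_global_parity}) by induction along the root-to-leaf path, using Lemma~\ref{lem:AppA_one_step} as the single-step engine. For $r=0,1,\ldots,n$, let $w_r:=b_1\cdots b_r$ denote the depth-$r$ prefix of the query leaf, so that $w_0=\epsilon$ and $w_n=b$. The internal nodes on the path are $\mathcal P(b)=\{w_0,\ldots,w_{n-1}\}$. The inductive claim is that Bob's running estimate obeys
\begin{eqnarray}
\widehat x_{w_r} = x_{w_r}\oplus\bigoplus_{j=0}^{r-1} e_{w_j}
\end{eqnarray}
for every $r$, with the empty XOR equal to $0$. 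The base case $r=0$ holds because $\widehat x_\epsilon:=x=x_\epsilon$ by Eq.~(\ref{eq:AppA_root_message}).

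For the inductive step, assume the claim at depth $r<n$. At node $w_r$, Bob sets $t_{w_r}=b_{r+1}$ by Eq.~(\ref{eq:AppA_decoding_input}), so $w_r t_{w_r}=w_{r+1}$. The update rule Eq.~(\ref{eq:AppA_message_update}) then gives
\begin{eqnarray}
\widehat x_{w_{r+1}}=\widehat x_{w_r}\oplus B_{w_r}=\big(x_{w_r}\oplus B_{w_r}\big)\oplus\bigoplus_{j<r}e_{w_j}.
\end{eqnarray}
By Lemma~\ref{lem:AppA_one_step}, $x_{w_r}\oplus B_{w_r}=x_{w_{r+1}}\oplus e_{w_r}$, which yields the claim at depth $r+1$.

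At $r=n$ we reach the leaf. There $x_b=a_b$ by Eq.~(\ref{eq:AppA_leaf_message}), and $\beta=\widehat x_b$ by Eq.~(\ref{eq:AppA_output_rule}), which gives Eq.~(\ref{eq:AppA_global_parity}). The final sentence of the theorem follows because $\beta=a_b$ exactly when the XOR of the path errors vanishes, that is, when an even number of the $e_w$ equal $1$.

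The one point that needs care is that Lemma~\ref{lem:AppA_one_step} is applied with the \emph{actual} inputs used at each node. Alice's input $s_{w_r}=x_{w_r0}\oplus x_{w_r1}$ is formed from the true child messages during the upward encoding, independently of Bob. Bob's input $t_{w_r}$ depends only on the query. So the identity of the lemma holds pathwise, for every realization, and no probabilistic reasoning enters. This is the step I would check most carefully.

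Independence and the distribution of the $e_w$ are not needed for this exact identity. They enter only afterwards, through Corollary~\ref{cor:AppA_iid_errors} and Lemma~\ref{lem:parity_bias}, to deduce the success probability in Eq.~(\ref{eq:AppA_PK}).
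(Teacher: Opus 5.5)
Your proof is correct and follows the paper's argument: the same induction on the running estimate $\widehat{x}_{b_1\cdots b_r}$ along the root-to-leaf path, with Lemma~\ref{lem:AppA_one_step} driving each step and $x_b=a_b$, $\beta=\widehat{x}_b$ closing it. The only cosmetic difference is that you start the induction at the trivial depth-$0$ case $\widehat{x}_\epsilon=x_\epsilon$, whereas the paper starts at the root step $r=1$. Your remark that the identity holds pathwise, with no probabilistic input, is accurate.
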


\begin{proof}---We track Bob's running message estimate. By definition, $\widehat{x}_{\epsilon}=x_{\epsilon}$. At the root node $\epsilon$, Bob updates using Eq.~(\ref{eq:AppA_message_update}): $\widehat{x}_{b_1} = \widehat{x}_{\epsilon}\oplus B_{\epsilon} = x_{\epsilon}\oplus B_{\epsilon}$. Applying Lemma~\ref{lem:AppA_one_step} at $w=\epsilon$ gives
\begin{eqnarray}
\widehat{x}_{b_1} = x_{b_1} \oplus e_{\epsilon}.
\label{eq:AppA_step1}
\end{eqnarray}

Now we assume inductively that after $r$ steps ($1\le r\le n-1$),
\begin{eqnarray}
\widehat{x}_{b_1\cdots b_r} = x_{b_1\cdots b_r}\oplus \left[ \bigoplus_{j=0}^{r-1} e_{b_1\cdots b_j} \right],
\label{eq:AppA_induction_hyp}
\end{eqnarray}
where $b_1\cdots b_0$ is understood as $\epsilon$. At the next node $w=b_1\cdots b_r$, Bob applies the update rule:
\begin{eqnarray}
\widehat{x}_{b_1\cdots b_{r+1}} = \widehat{x}_{b_1\cdots b_r} \oplus B_{b_1\cdots b_r}.
\label{eq:AppA_update_r}
\end{eqnarray}
Substituting Eq.~(\ref{eq:AppA_induction_hyp}) into Eq.~(\ref{eq:AppA_update_r}),
\begin{eqnarray}
\widehat{x}_{b_1\cdots b_{r+1}} = \left[x_{b_1\cdots b_r}\oplus \bigoplus_{j=0}^{r-1} e_{b_1\cdots b_j}\right] \oplus B_{b_1\cdots b_r}.
\end{eqnarray}
Applying Lemma~\ref{lem:AppA_one_step} at node $w=b_1\cdots b_r$ yields $x_{b_1\cdots b_r}\oplus B_{b_1\cdots b_r} = x_{b_1\cdots b_{r+1}}\oplus e_{b_1\cdots b_r}$.
Therefore,
\begin{eqnarray}
\widehat{x}_{b_1\cdots b_{r+1}} = x_{b_1\cdots b_{r+1}} \oplus \left[\bigoplus_{j=0}^{r-1} e_{b_1\cdots b_j}\right] \oplus e_{b_1\cdots b_r},
\end{eqnarray}
which is exactly the induction hypothesis with $r$ replaced by $r+1$.

By induction, after $n$ steps we obtain
\begin{eqnarray}
\widehat{x}_{b} = x_{b}\oplus \left[ \bigoplus_{w\in\mathcal{P}(b)} e_w \right].
\end{eqnarray}
Finally, at the leaf $b$ we have $x_b=a_b$ by Eq.~(\ref{eq:AppA_leaf_message}), and Bob outputs $\beta=\widehat{x}_b$ by Eq.~(\ref{eq:AppA_output_rule}). This proves Eq.~(\ref{eq:AppA_global_parity}).
\end{proof}

Theorem~\ref{thm:AppA_parity_structure} reduces the entire nested protocol to a single clean statement: the only reason Bob fails is that the independent CHSH ``error bits'' have odd parity along his path. Once this is established, the success probability is a purely probabilistic computation.

\subsection{Even-parity probability and the bias multiplication $E\mapsto E^n$}

We now compute the probability that the parity in Eq.~(\ref{eq:AppA_global_parity}) is zero. For completeness we prove the required identity in its most general form.

\begin{lemma}[Parity of i.i.d.\ biased bits]
\label{lem:AppA_parity_bias}
Let $e_1,\dots,e_n \in \{0,1\}$ be i.i.d.\ such that
\begin{eqnarray}
\Pr[e_i=0]=\frac{1+E}{2},  \quad   \Pr[e_i=1]=\frac{1-E}{2},
\label{eq:AppA_ei_dist}
\end{eqnarray}
where $E \in [-1,1]$. Let $E_{\oplus}:=e_1\oplus\cdots\oplus e_n$. Then,
\begin{eqnarray}
\Pr[E_{\oplus}=0]=\frac{1+E^n}{2},  \quad \Pr[E_{\oplus}=1]=\frac{1-E^n}{2}.
\label{eq:AppA_parity_result}
\end{eqnarray}
\end{lemma}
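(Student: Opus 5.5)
The plan is to use the $\pm1$ (Fourier) representation, as in Lemma~\ref{lem:parity_bias}, since it turns parity into a product and independence into factorization of expectations. For each $i$ define $\xi_i:=(-1)^{e_i}\in\{+1,-1\}$. Reading off Eq.~(\ref{eq:AppA_ei_dist}) gives
\begin{eqnarray}
\mathbb{E}[\xi_i]=\frac{1+E}{2}-\frac{1-E}{2}=E
\end{eqnarray}
for every $i$.

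Next, observe that $(-1)^{E_{\oplus}}=\prod_{i=1}^n\xi_i$. This holds because $(-1)^{x\oplus y}=(-1)^{x}(-1)^{y}$ for bits $x,y$, and the case of $n$ bits follows by a trivial induction. The $e_i$ are i.i.d., so the $\xi_i$ are independent, and the expectation of the product factorizes:
\begin{eqnarray}
\mathbb{E}\big[(-1)^{E_\oplus}\big]=\prod_{i=1}^n\mathbb{E}[\xi_i]=E^n.
\end{eqnarray}
On the other hand, $E_\oplus$ is itself a bit, so $\mathbb{E}[(-1)^{E_\oplus}]=\Pr[E_\oplus=0]-\Pr[E_\oplus=1]=2\Pr[E_\oplus=0]-1$. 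Solving $2\Pr[E_\oplus=0]-1=E^n$ gives $\Pr[E_\oplus=0]=(1+E^n)/2$, and the complementary probability is $(1-E^n)/2$, which is Eq.~(\ref{eq:AppA_parity_result}).

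As an independent cross-check, I would also verify the result by induction on $n$. Write $p_n:=\Pr[E_\oplus=0]$ for the parity of the first $n$ bits. Conditioning on $e_n$ gives the recursion
\begin{eqnarray}
p_n=p_{n-1}\frac{1+E}{2}+(1-p_{n-1})\frac{1-E}{2},
\end{eqnarray}
which is equivalent to $2p_n-1=E(2p_{n-1}-1)$. With the base case $2p_1-1=E$, this yields $2p_n-1=E^n$.

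No step is a real obstacle. The only points needing care are that the argument is valid for negative $E\in[-1,1]$ (the identity for $\mathbb{E}[\xi_i]$ does not use the sign of $E$) and that the independence hypothesis is exactly what justifies the factorization. In the nested protocol this independence is supplied by Corollary~\ref{cor:AppA_iid_errors}.
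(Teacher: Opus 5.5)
Your proof is correct and is the same as the paper's argument: the $\pm1$ lift $\xi_i=(-1)^{e_i}$, factorization of $\mathbb{E}[\prod_i\xi_i]=E^n$ by independence, and solving $2\Pr[E_{\oplus}=0]-1=E^n$. Your inductive cross-check via $2p_n-1=E(2p_{n-1}-1)$ is also valid, but it is not needed.
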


\begin{proof}---Define $\xi_i:=(-1)^{e_i}\in\{+1,-1\}$. Then, $\mathbb{E}[\xi_i]=\Pr[e_i=0]-\Pr[e_i=1]=E$. Moreover,
\begin{eqnarray}
(-1)^{E_{\oplus}} = (-1)^{e_1\oplus\cdots\oplus e_n} = \prod_{i=1}^n (-1)^{e_i} = \prod_{i=1}^n \xi_i.
\end{eqnarray}
By independence, $\mathbb{E}[(-1)^{E_{\oplus}}]=\prod_i \mathbb{E}[\xi_i]=E^n$. But also
\begin{eqnarray}
\mathbb{E}[(-1)^{E_{\oplus}}] = \Pr[E_{\oplus}=0]-\Pr[E_{\oplus}=1] = 2\Pr[E_{\oplus}=0]-1.
\end{eqnarray}
Solving for $\Pr[E_{\oplus}=0]$ gives $\Pr[E_{\oplus}=0]=(1+E^n)/2$, and $\Pr[E_{\oplus}=1]=1-\Pr[E_{\oplus}=0]=(1-E^n)/2$.
\end{proof}

We can now conclude the main identity Eq.~(\ref{eq:AppA_PK}).

\begin{theorem}[Success probability of the nested isotropic protocol]
\label{thm:AppA_nested_success}
Assume that for each internal node $w \in \{0,1\}^{\le n-1}$, Alice and Bob share an independent isotropic CHSH cell of bias $E$. Run the nested protocol of Sec.~\ref{appA:tree-notation} and Sec.~\ref{appA:encoding-decoding} on an i.i.d.\ unbiased database $\{a_u\}_{u \in \{0,1\}^n}$. Then, for every query $b \in \{0,1\}^n$,
\begin{eqnarray}
\Pr[\beta=a_b \mid b] = \frac{1+E^n}{2}.
\label{eq:AppA_success_b}
\end{eqnarray}
Equivalently, in the index notation $K \in \{0,\dots,2^n-1\}$ obtained by identifying $K$ with its length-$n$ binary expansion,
\begin{eqnarray}
P_K = \Pr[\beta=a_K\mid b=K] = \frac{1+E^n}{2} \quad (\forall K).
\end{eqnarray}
\end{theorem}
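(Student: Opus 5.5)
The plan is to combine the three ingredients already established in this appendix. First, invoke Theorem~\ref{thm:AppA_parity_structure}: for every fixed query leaf $b$, Bob's output obeys the exact identity $\beta = a_b\oplus\bigl[\bigoplus_{w\in\mathcal P(b)} e_w\bigr]$. Hence the event $\{\beta=a_b\}$ coincides with the event that the parity of the $n$ error bits $\{e_w\}_{w\in\mathcal P(b)}$ vanishes. This reduces the success probability to a purely probabilistic statement about these error bits, with no further dependence on the database or on the messages.

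Second, I would establish that, conditioned on $b$, the $n$ error bits on the path are i.i.d.\ with $\Pr[e_w=0]=(1+E)/2$. This is the step needing care. The inputs $s_w=x_{w0}\oplus x_{w1}$ depend on the database and on the outputs $A_{w'}$ of descendant cells. The inputs $t_w=b_{|w|+1}$ depend on the query. So the inputs are data-dependent and, on Alice's side, adaptively chosen.

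I would handle this with the generative model of Lemma~\ref{lem:AppA_generative}. Sample, independently for every internal node $w$, a pair $(U_w,e_w)$ with $U_w\sim\mathrm{Bern}(1/2)$ and $e_w\sim\mathrm{Bern}((1-E)/2)$, all independent of $\vec a$ and $b$. Then set $A_w=U_w$ and $B_w=U_w\oplus s_wt_w\oplus e_w$.

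Under this construction Alice's encoding uses only $\vec a$ and the $U_w$. The $e_w$ enter only Bob's outputs, which never feed back into any input. So the family $\{e_w\}$ is independent of $(\vec a,b,\{U_w\})$ and hence of all the inputs $(s_w,t_w)$. Because the cells are independent resources and their joint input--output statistics are reproduced exactly for every input assignment, the protocol's joint distribution equals that of the generative model. Corollary~\ref{cor:AppA_iid_errors} packages exactly this conclusion, and I would cite it. In particular, conditioning on $b$ does not disturb the distribution of the path errors.

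Third, apply Lemma~\ref{lem:AppA_parity_bias} to the $n$ i.i.d.\ bits $\{e_w\}_{w\in\mathcal P(b)}$; note $|\mathcal P(b)|=n$. This gives $\Pr[\beta=a_b\mid b]=(1+E^n)/2$ for every $b$. The index form $P_K=(1+E^n)/2$ follows by identifying $K$ with its length-$n$ binary expansion, which is a bijection between $\{0,\dots,2^n-1\}$ and the leaves.

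I expect the main obstacle to be the independence step. One must make sure the error bits' independence from the adaptively chosen $s_w$ is justified by the no-signaling, product structure of the cells, rather than merely asserted. The generative coupling above is the route I would take.
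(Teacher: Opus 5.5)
Your proposal is correct and follows the paper's own proof: the parity identity of Theorem~\ref{thm:AppA_parity_structure}, then the i.i.d.\ path errors from Corollary~\ref{cor:AppA_iid_errors}, then Lemma~\ref{lem:AppA_parity_bias} applied to the $n$ nodes of $\mathcal{P}(b)$. Your remark that Bob's outputs never feed back into any cell input justifies applying the generative coupling of Lemma~\ref{lem:AppA_generative} to the whole protocol. This makes the independence step more explicit than the paper, which simply cites the corollary.
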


\begin{proof}---Fix a query $b$. By Theorem~\ref{thm:AppA_parity_structure},
\begin{eqnarray}
\beta=a_b \oplus \bigoplus_{w \in \mathcal{P}(b)} e_w.
\end{eqnarray}
Thus, $\beta=a_b$ if and only if $\bigoplus_{w \in \mathcal{P}(b)} e_w=0$. By Corollary~\ref{cor:AppA_iid_errors}, the collection $\{e_w\}_{w \in \mathcal{P}(b)}$ is i.i.d.\ with $\Pr[e_w=0]=\frac{1+E}{2}$. There are exactly $n$ nodes on the path $\mathcal{P}(b)$. Therefore, Lemma~\ref{lem:AppA_parity_bias} applies and yields
\begin{eqnarray}
\Pr\left[\bigoplus_{w \in \mathcal{P}(b)} e_w = 0\right] = \frac{1+E^n}{2},
\end{eqnarray}
which is Eq.~(\ref{eq:AppA_success_b}).
\end{proof}

\begin{remark}[Why the answer is independent of the queried index]
The right-hand side of Eq.~(\ref{eq:AppA_success_b}) depends only on the path length $n$ and the bias $E$. This is not an accident: in the isotropic family, each local decision is corrupted by an error bit whose law is independent of the actual inputs, and each query traverses exactly one node per level. Thus, the protocol is query-symmetric by construction.
\end{remark}

\begin{remark}[Connection to the pyramid notation in the main text]
The word-based indexing used here is equivalent to the level-indexed pyramid notation $(\ell, j)$ of Sec.~\ref{sec:neural-chsh}. The internal nodes at depth $r$ (i.e., $\abs{w}=r$) correspond to level $\ell=n-1-r$ in the ``bottom-to-top'' convention, and the path $\mathcal{P}(b)$ corresponds to the unique selection $j(\ell)$ induced by the query bits. The parity identity Eq.~(\ref{eq:AppA_global_parity}) is the algebraic content of the informal statement that ``Bob is correct whenever he makes an even number of intermediate errors''~\cite{Pawlowski2009}.
\end{remark}

\section{Computing and Estimating the Neural-RAC Information Score}\label{app:computing-score}

This appendix complements Sec.~\ref{sec:experiments} by providing a rigorous and implementation-ready account of how to compute and estimate the Neural-RAC information score
\begin{eqnarray}
I_{\mathrm{N\text{-}RAC}} := \sum_{K=0}^{N-1} I(a_K:\beta\mid b=K)
\label{eq:AppB_INRAC_def}
\end{eqnarray}
from either (i) closed-form protocol statistics or (ii) finite-sample experimental data. Although the main text focuses on the nested isotropic CHSH construction (where closed forms exist), the estimation methodology here is written to remain valid for general query-separated models, including trained neural encoder--decoder systems and their quantum-enhanced variants.


\subsection{Mutual information of a binary channel: exact formulas}

In Neural-RAC the relevant object is the conditional channel $a_K \to \beta$ given the event $(b=K)$. Because $a_K$ and $\beta$ are binary, mutual information can be written in closed form in terms of the (conditional) confusion matrix. This subsection records these formulas explicitly.

\subsubsection*{B.1.1. General binary channel under an unbiased input}  

Fix $K$ and condition on $b=K$. Assume $a_K$ is unbiased (as in all RAC instances considered in this work), i.e.,
\begin{eqnarray}
\Pr[a_K=0\mid b=K] = \Pr[a_K=1\mid b=K] = \frac{1}{2}.
\end{eqnarray}
Define the channel parameters
\begin{eqnarray}
q_K &:=& \Pr[\beta=1 \mid a_K=0,\,b=K],  \nonumber \\
r_K &:=& \Pr[\beta=1 \mid a_K=1,\,b=K].
\label{eq:AppB_qr_def}
\end{eqnarray}
Equivalently, $q_K$ is the false-positive rate and $1-r_K$ is the false-negative rate.

\begin{proposition}[Mutual information of a general binary channel (unbiased input)]
\label{prop:AppB_binary_MI_general}
Under Eq.~(\ref{eq:AppB_qr_def}) with an unbiased input $a_K$, the mutual information satisfies
\begin{eqnarray}
I(a_K:\beta\mid b=K) = h\left(\frac{q_K+r_K}{2}\right) - \frac{1}{2} h(q_K) - \frac{1}{2} h(r_K).
\label{eq:AppB_MI_qr}
\end{eqnarray}
\end{proposition}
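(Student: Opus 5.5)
The plan is the decomposition $I(a_K:\beta\mid b=K)=H(\beta\mid b=K)-H(\beta\mid a_K,b=K)$. Throughout, we work conditionally on the event $b=K$. Because $a_K$ and $\beta$ are both binary, each entropy is a binary entropy of a single probability. The whole proof therefore consists of identifying those probabilities from the channel parameters $q_K,r_K$ of Eq.~(\ref{eq:AppB_qr_def}) and the unbiased input law.

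First, we compute the output marginal. By the law of total probability and $\Pr[a_K=0\mid b=K]=\Pr[a_K=1\mid b=K]=1/2$, we get $\Pr[\beta=1\mid b=K]=\tfrac12 q_K+\tfrac12 r_K$. Since $\beta$ is binary, this gives $H(\beta\mid b=K)=h\bigl((q_K+r_K)/2\bigr)$.

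Second, we compute the conditional entropy. Given $a_K=0$ (and $b=K$), $\beta$ is Bernoulli with parameter $q_K$. Given $a_K=1$, it is Bernoulli with parameter $r_K$. Averaging over the unbiased input yields $H(\beta\mid a_K,b=K)=\tfrac12 h(q_K)+\tfrac12 h(r_K)$. Subtracting this from the first term gives Eq.~(\ref{eq:AppB_MI_qr}).

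There is no real obstacle here; the only points requiring care are bookkeeping ones. One must use the form $H(\beta)-H(\beta\mid a_K)$ rather than $H(a_K)-H(a_K\mid\beta)$, since the latter would require the posterior probabilities and is messier. One must keep every probability conditioned on $b=K$, which is consistent with Definition~\ref{def:neural_RAC_score}. Finally, one must note that $h(\cdot)$ is symmetric under $p\mapsto 1-p$, so it does not matter whether the parameters are defined via $\beta=1$ or $\beta=0$. As a sanity check, take the binary symmetric case $q_K=1-r_K=1-P_K$. The formula then reduces to $1-h(P_K)$, consistent with the equality case of Lemma~\ref{lem:success_MI} and with Eq.~(\ref{eq:VI_closed_form}).
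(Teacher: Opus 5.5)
Your proposal is correct and follows the paper's proof essentially step for step. Like the paper, you decompose $I(a_K:\beta\mid b=K)=H(\beta\mid b=K)-H(\beta\mid a_K,b=K)$, obtain the output marginal $(q_K+r_K)/2$ by total probability, and average $h(q_K)$ and $h(r_K)$ over the unbiased input.
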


\begin{proof}---By definition,
\begin{eqnarray}
I(a_K:\beta\mid b=K) = H(\beta\mid b=K) - H(\beta\mid a_K,b=K).
\label{eq:AppB_MI_def_entropy}
\end{eqnarray}
First compute the marginal of $\beta$ conditioned on $b=K$:
\begin{eqnarray}
\Pr[\beta=1\mid b=K] = \sum_{a\in\{0,1\}} \Pr[a_K=a\mid b=K]\Pr[\beta=1\mid a_K=a,b=K] = \frac{q_K+r_K}{2}.
\end{eqnarray}
Therefore,
\begin{eqnarray}
H(\beta\mid b=K) = h\left(\frac{q_K+r_K}{2}\right).
\end{eqnarray}
Next,
\begin{eqnarray}
H(\beta\mid a_K,b=K) &=& \sum_{a\in\{0,1\}} \Pr[a_K=a\mid b=K]\,H(\beta\mid a_K=a,b=K) \nonumber\\
	&=& \frac12\,h(q_K) + \frac12\,h(r_K),
\end{eqnarray}
because conditioned on $a_K=0$ the output $\beta$ is Bernoulli$(q_K)$, and conditioned on $a_K=1$ it is Bernoulli$(r_K)$. Substituting into Eq.~(\ref{eq:AppB_MI_def_entropy}) gives Eq.~(\ref{eq:AppB_MI_qr}).
\end{proof}

\begin{remark}[Why success probability alone is not enough in general]
The per-query success probability
$P_K=\Pr[\beta=a_K\mid b=K]$
does not uniquely determine Eq.~(\ref{eq:AppB_MI_qr}) unless one further assumes symmetry of the conditional channel.
Indeed, $P_K$ constrains only the average of the two conditional accuracies
$\Pr[\beta=0\mid a_K=0,b=K]$ and $\Pr[\beta=1\mid a_K=1,b=K]$.
Mutual information depends on the full confusion matrix.
This is one reason the main text uses the robust inequality $I(a_K:\beta\mid b=K)\ge 1-h(P_K)$ when symmetry is not guaranteed.
\end{remark}

\subsubsection*{B.1.2. Binary symmetric channel (BSC) specialization} 

A particularly important case arises when the conditional channel is symmetric:
\begin{eqnarray}
\Pr[\beta\neq a_K \mid a_K, b=K] = \varepsilon_K  \quad \text{independent of $a_K$},
\end{eqnarray}
equivalently $q_K=1-P_K$ and $r_K=P_K$ with $P_K=1-\varepsilon_K$.

\begin{corollary}[Exact information from success probability for a BSC]
\label{cor:AppB_BSC_MI}
If conditioned on $b=K$ the relation $\beta=a_K\oplus \varepsilon$ holds with $\varepsilon$ independent of $a_K$ and $\Pr[\varepsilon=0]=P_K$, then
\begin{eqnarray}
I(a_K:\beta\mid b=K) = 1-h(P_K).
\label{eq:AppB_BSC_MI}
\end{eqnarray}
\end{corollary}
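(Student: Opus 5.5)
The cleanest route is to specialize Proposition~\ref{prop:AppB_binary_MI_general} rather than recompute entropies from scratch. Under the stated hypothesis, conditioned on $b=K$ we have $\beta=a_K\oplus\varepsilon$ with $\varepsilon$ independent of $a_K$ and $\Pr[\varepsilon=0]=P_K$. From this we read off the channel parameters of Eq.~(\ref{eq:AppB_qr_def}):
\begin{eqnarray}
q_K=\Pr[\varepsilon=1\mid a_K=0,b=K]=1-P_K, \quad r_K=\Pr[\varepsilon=0\mid a_K=1,b=K]=P_K .
\end{eqnarray}
Here independence of $\varepsilon$ from $a_K$ is exactly what lets us drop the conditioning on $a_K$.

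Next we substitute these values into Eq.~(\ref{eq:AppB_MI_qr}). Since $(q_K+r_K)/2=1/2$, the first term is $h(1/2)=1$. By the symmetry $h(p)=h(1-p)$, the remaining terms give $\tfrac12 h(1-P_K)+\tfrac12 h(P_K)=h(P_K)$. Together these yield $I(a_K:\beta\mid b=K)=1-h(P_K)$, which is Eq.~(\ref{eq:AppB_BSC_MI}).

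As a cross-check, and to show that Lemma~\ref{lem:success_MI} is tight here, we can also run the direct argument. Write $H(a_K\mid\beta,b=K)=H(\varepsilon\mid\beta,b=K)$. Because $a_K$ is unbiased and independent of $\varepsilon$, the output $\beta=a_K\oplus\varepsilon$ is uniform and independent of $\varepsilon$ (a one-time-pad argument). Hence $H(\varepsilon\mid\beta,b=K)=H(\varepsilon)=h(P_K)$, and the inequality in the lemma becomes an equality.

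The only point needing care is the independence bookkeeping. Everything must be conditioned on $b=K$, and unbiasedness of $a_K$ given $b=K$ holds because $b$ is independent of the database. There is no real obstacle beyond this.
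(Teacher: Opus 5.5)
Your proposal is correct and follows the paper's own proof: substitute $q_K=1-P_K$ and $r_K=P_K$ into Proposition~\ref{prop:AppB_binary_MI_general}, then use $h(1/2)=1$ and $h(1-P_K)=h(P_K)$. Your one-time-pad cross-check is also valid; it shows directly that Lemma~\ref{lem:success_MI} holds with equality in this case.
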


\begin{proof}---In Proposition~\ref{prop:AppB_binary_MI_general}, substitute $q_K=1-P_K$ and $r_K=P_K$. Then, $(q_K+r_K)/2=1/2$ so $h((q_K+r_K)/2)=h(1/2)=1$, and $h(q_K)=h(1-P_K)=h(P_K)$. Eq.~(\ref{eq:AppB_MI_qr}) reduces to $1-h(P_K)$.
\end{proof}

\begin{remark}[Relevance to the nested isotropic protocol]
For the nested isotropic CHSH construction (Sec.~\ref{sec:neural-chsh} and Sec.~\ref{sec:experiments}), the conditional channel is exactly a BSC: the output equals the target bit XOR the parity of independent CHSH error bits. Hence, Corollary~\ref{cor:AppB_BSC_MI} justifies the closed-form formula used in Sec.~\ref{sec:experiments}:
\begin{eqnarray}
I_{\mathrm{N\text{-}RAC}} = N\left[1-h\!\left(\frac{1+E^n}{2}\right)\right]\quad (N=2^n).
\end{eqnarray}
\end{remark}

\subsection{From data to $I_{\mathrm{N\text{-}RAC}}$: estimators and bias}

We now address the practical question: given empirical samples from a Neural-RAC device (classical neural model, QNN, or a simulator), how does one estimate $I_{\mathrm{N\text{-}RAC}}$?

\subsubsection*{B.2.1. Two experimental designs}

There are two natural sampling designs.

\paragraph*{Design A (per-query batching).}
For each $K\in\{0,\dots,N-1\}$, run $T_K$ trials with the query fixed at $b=K$. Record $(a_K,\beta)$ pairs.

\paragraph*{Design B (random queries).}
Run $T$ trials with $b$ sampled uniformly each time. Record triples $(b,a_b,\beta)$.

Design~A is statistically clean: each conditional channel is estimated from $T_K$ samples. Design~B is operationally faithful: it emulates the ``random access'' usage pattern. Both are valid. In the symmetric settings considered in Sec.~\ref{sec:experiments} (where $P_K$ does not depend on~$K$), Design~B is often preferable because it concentrates all sampling power on a single scalar.

\subsubsection*{B.2.2. Plug-in estimators from contingency tables} 

Fix $K$. From Design~A (or from Design~B restricted to $b=K$), let $n^{(K)}_{uv}$ denote the counts
\begin{eqnarray}
n^{(K)}_{uv} := \#\{ \text{trials with } a_K=u,\ \beta=v,\ b=K\}   \quad  (u,v \in \{0,1\}),
\end{eqnarray}
and let $T_K:=\sum_{u,v} n^{(K)}_{uv}$. Define the empirical joint distribution
\begin{eqnarray}
\widehat{p}^{(K)}_{uv} := \frac{n^{(K)}_{uv}}{T_K},
\quad
\widehat{p}^{(K)}_{u\cdot} := \sum_v \widehat{p}^{(K)}_{uv},
\quad
\widehat{p}^{(K)}_{\cdot v} := \sum_u \widehat{p}^{(K)}_{uv}.
\end{eqnarray}
The standard plug-in estimator of the conditional mutual information is
\begin{eqnarray}
\widehat{I}_K := \sum_{u,v \in \{0,1\}} \widehat{p}^{(K)}_{uv} \log\frac{\widehat{p}^{(K)}_{uv}}{\widehat{p}^{(K)}_{u\cdot}\widehat{p}^{(K)}_{\cdot v}},
\label{eq:AppB_plugin_MI}
\end{eqnarray}
with the convention $0\log 0 := 0$. Then, the natural estimator of the total score is
\begin{eqnarray}
\widehat{I}_{\mathrm{N\text{-}RAC}} := \sum_{K=0}^{N-1} \widehat{I}_K.
\label{eq:AppB_INRAC_plugin_total}
\end{eqnarray}

\begin{remark}[Finite-sample bias and regularization]
The plug-in estimator Eq.~(\ref{eq:AppB_plugin_MI}) is consistent as $T_K \to \infty$ but is biased at finite samples. For binary variables this bias is usually mild at moderate $T_K$, but it can become visible near the critical regime where $I(a_K:\beta\mid b=K)$ is very small. Two standard remedies are: (i) add a small pseudocount $\alpha>0$ (Laplace/Jeffreys smoothing), replacing $n^{(K)}_{uv}$ by $n^{(K)}_{uv}+\alpha$; (ii) apply an analytic bias correction (e.g., Miller--Madow corrections at the level of entropies; see Ref.~\cite{Paninski2003}). In Sec.~\ref{sec:experiments}, we avoided this issue by using closed-form evaluation whenever the protocol structure guarantees a BSC.
\end{remark}

\subsubsection*{B.2.3. Symmetric estimator via success probability} 

In many protocols of interest (including the nested isotropic one), $P_K$ is independent of $K$ and the conditional channel is approximately symmetric. In that case, one can estimate $P$ by the empirical accuracy and convert to mutual information via Corollary~\ref{cor:AppB_BSC_MI}. Let
\begin{eqnarray}
\widehat{P} := \frac{1}{T}\sum_{t=1}^T \mathds{I}\{\beta^{(t)} = a_{b^{(t)}}^{(t)}\}.
\label{eq:AppB_Phat}
\end{eqnarray}
Then, a symmetry-based estimator is
\begin{eqnarray}
\widehat{I}_{\mathrm{N\text{-}RAC}}^{\mathrm{(sym)}} := N\Big(1-h(\widehat{P})\Big).
\label{eq:AppB_INRAC_sym_est}
\end{eqnarray}
This estimator concentrates all sampling power into a single Bernoulli statistic and is therefore statistically efficient.

\subsection{Confidence intervals and error propagation}

A practical report of $I_{\mathrm{N\text{-}RAC}}$ should include an uncertainty statement. Here we describe a simple and robust method.

\subsubsection*{B.3.1. Binomial confidence intervals for $\widehat{P}$}  

Under the symmetry design Eq.~(\ref{eq:AppB_Phat}), the number of successes $S:=\sum_{t=1}^T \mathbf{1}\{\beta^{(t)} = a_{b^{(t)}}^{(t)}\}$ is Binomial$(T,P)$. Hence, classical binomial confidence intervals apply. Let $[\underline{P},\overline{P}]$ be any $1-\alpha$ confidence interval for $P$ (e.g., Clopper--Pearson or Wilson score; see, e.g., Ref.~\cite{BrownCaiDasGupta2001}). Then, by monotonicity of $f(p):=1-h(p)$ for $p \in [1/2,1]$,
\begin{eqnarray}
\underline{I}_{\mathrm{N\text{-}RAC}} := N\big(1-h(\underline{P})\big) \le I_{\mathrm{N\text{-}RAC}} \le N\big(1-h(\overline{P})\big) =: \overline{I}_{\mathrm{N\text{-}RAC}},
\label{eq:AppB_CI_transform}
\end{eqnarray}
is a conservative $(1-\alpha)$ confidence interval for $I_{\mathrm{N\text{-}RAC}}$ under the BSC assumption, provided the confidence interval lies inside $[1/2,1]$.

\begin{remark}[Endpoint transformation]
Since $1-h(p)$ increases with $p$ on $[1/2,1]$, the information interval is obtained by applying the same monotone map to the two endpoints of the success-probability interval. If an empirical interval extends below $1/2$, one should either report the clipped one-sided benchmark relevant to the protocol or compute the extrema of $1-h(p)$ directly over the full interval.
\end{remark}

\subsubsection*{B.3.2 \quad A simple analytical bound (Hoeffding)}  

As an alternative to exact binomial intervals, Hoeffding's inequality yields the explicit bound~\cite{BoucheronLugosiMassart2013}
\begin{eqnarray}
\Pr\big[|\widehat{P}-P|\ge \epsilon\big]\le 2e^{-2T\epsilon^2}.
\end{eqnarray}
Thus, with probability at least $1-\alpha$,
\begin{eqnarray}
P \in \left[\widehat{P}-\sqrt{\frac{\ln(2/\alpha)}{2T}}, \ \widehat{P}+\sqrt{\frac{\ln(2/\alpha)}{2T}}\right] \cap [0,1].
\end{eqnarray}
By plugging this interval into Eq.~(\ref{eq:AppB_CI_transform}) again, we obtain an explicit interval for $I_{\mathrm{N\text{-}RAC}}$.

\subsection{Critical-regime sample complexity: why Tsirelson is numerically delicate}

A subtle point arises near the Tsirelson threshold. Even when the total information score is $O(1)$, the per-query advantage can be exponentially small in the depth $n$. This section quantifies the phenomenon.

\begin{lemma}[Small-bias expansion of $1-h$]
\label{lem:AppB_entropy_small_bias}
Let $\delta\in[-1,1]$ and set $p=(1+\delta)/2$. Then,
\begin{eqnarray}
1-h(p) = \frac{\delta^2}{2\ln 2} + O(\delta^4) \quad (\delta\to 0).
\label{eq:AppB_small_bias_expand}
\end{eqnarray}
\end{lemma}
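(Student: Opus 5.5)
The plan is to reuse the Kullback--Leibler representation from the proof of Lemma~\ref{lem:entropy_deficit_quadratic} and then Taylor-expand it. Eq.~(\ref{eq:entropy_KL_relation}) gives $1-h(p)=D(p\Vert 1/2)/\ln 2$. Substituting $p=(1+\delta)/2$ yields
\begin{eqnarray}
D\Big(\tfrac{1+\delta}{2}\Big\Vert \tfrac12\Big)=\tfrac12\big[(1+\delta)\ln(1+\delta)+(1-\delta)\ln(1-\delta)\big]=:\phi(\delta).
\end{eqnarray}
It therefore suffices to show $\phi(\delta)=\delta^2/2+O(\delta^4)$.

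For $|\delta|<1$ I will use the power series $\ln(1\pm\delta)=\pm\delta-\delta^2/2\pm\delta^3/3-\cdots$. Then I multiply out $(1+\delta)\ln(1+\delta)$ and add the reflected term $(1-\delta)\ln(1-\delta)$.

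\textbf{Step 1: odd powers cancel.} The function $\phi$ is even in $\delta$, so all odd powers drop out.

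\textbf{Step 2: the even series.} Collecting coefficients gives
\begin{eqnarray}
\phi(\delta)=\sum_{k\ge1}\frac{\delta^{2k}}{2k(2k-1)}=\frac{\delta^2}{2}+\frac{\delta^4}{12}+\cdots .
\end{eqnarray}
Dividing by $\ln 2$ yields Eq.~(\ref{eq:AppB_small_bias_expand}).

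A cleaner alternative to the series bookkeeping is to use $\phi(0)=0$, $\phi'(\delta)=\tfrac12\ln\frac{1+\delta}{1-\delta}$ (so $\phi'(0)=0$) and $\phi''(\delta)=1/(1-\delta^2)$. Taylor's theorem with remainder then gives the result directly.

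The only point needing care is making the $O(\delta^4)$ uniform, and it is where I expect the only real subtlety. On any $|\delta|\le\delta_0<1$ the series tail is bounded by $\delta^4\sum_{k\ge2}\delta_0^{2k-4}/(2k(2k-1))$, which is a constant times $\delta^4$. Since the statement is asymptotic as $\delta\to0$, this suffices. If a bound valid on all of $[-1,1]$ is wanted, note that every coefficient $1/(2k(2k-1))$ is positive. Hence $0\le \phi(\delta)-\delta^2/2\le \delta^4\sum_{k\ge2}1/(2k(2k-1))$ for $|\delta|\le1$, since $\delta^{2k}\le\delta^4$ there. The sum converges to $\ln 2-1/2$, which gives a global constant.
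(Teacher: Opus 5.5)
Your proof is correct, and it takes a somewhat different route from the paper's. The paper Taylor-expands $h$ directly about $p=1/2$, using $h(1/2)=1$, $h'(1/2)=0$ and $h''(1/2)=-4/\ln 2$. It leaves implicit that the cubic term vanishes by the symmetry $h(\tfrac12+x)=h(\tfrac12-x)$; second-order information alone would only give $O(\delta^3)$. It also never makes the remainder constant explicit.

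You instead work with $\phi(\delta)=D\bigl(\tfrac{1+\delta}{2}\Vert\tfrac12\bigr)$ and compute its exact series $\sum_{k\ge1}\delta^{2k}/(2k(2k-1))$. This makes the cancellation of odd orders visible and gives more than the asymptotic claim. Because every coefficient is positive, you obtain the global two-sided bound $\delta^2/(2\ln 2)\le 1-h(p)\le \delta^2/(2\ln 2)+\bigl(1-\tfrac{1}{2\ln 2}\bigr)\delta^4$ on $[-1,1]$, and the upper half is tight at $|\delta|=1$. The lower half is precisely Lemma~\ref{lem:entropy_deficit_quadratic}, so a single computation covers both lemmas.

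Your ``cleaner alternative'' via $\phi''(\delta)=1/(1-\delta^2)$ is essentially the paper's argument in KL variables. There the $O(\delta^4)$ again needs either the evenness you noted or the explicit remainder $\phi(\delta)-\delta^2/2=\int_0^\delta(\delta-s)\,s^2/(1-s^2)\,ds$. That remainder is bounded by $\delta^4/(12(1-\delta_0^2))$ for $|\delta|\le\delta_0<1$.

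One minor point concerns the endpoints. At $|\delta|=1$ the separate series for $\ln(1\pm\delta)$ do not both converge. The endpoint case of your global bound should therefore be justified by uniform convergence of the combined series on $[-1,1]$ (Weierstrass M-test) together with continuity of $\phi$, which gives $\phi(\pm1)=\ln 2$.
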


\begin{proof}---Write $p=\tfrac{1}{2} + \tfrac{\delta}{2}$ and expand $h(p)$ in a Taylor series around $1/2$. Since $h(1/2)=1$ and $h'(1/2)=0$ by symmetry, the leading term comes from $h''(1/2)$. A direct derivative computation gives $h''(1/2)=-4/\ln 2$. Hence,
\begin{eqnarray}
h\left(\frac12+\frac{\delta}{2}\right) = 1 + \frac12 h''(1/2)\left(\frac{\delta}{2}\right)^2 + O(\delta^4) = 1 - \frac{\delta^2}{2\ln 2} + O(\delta^4),
\end{eqnarray}
which implies Eq.~(\ref{eq:AppB_small_bias_expand}).
\end{proof}

For the nested isotropic protocol, the success probability is $P=(1+E^n)/2$, hence $\delta=E^n$.
Therefore, for large $n$,
\begin{eqnarray}
I_{\mathrm{N\text{-}RAC}} = N\left[1-h\!\left(\frac{1+E^n}{2}\right)\right] \approx N \frac{E^{2n}}{2\ln 2}.
\label{eq:AppB_INRAC_asymp}
\end{eqnarray}
In the critical quantum case $E=1/\sqrt{2}$ and $N=2^n$, one has $E^{2n}=2^{-n}$, hence the product $N E^{2n}$ stays constant and $I_{\mathrm{N\text{-}RAC}}$ converges to $1/(2\ln 2)$, consistent with Theorem~\ref{thm:sec6_critical_constant} in the main text.

\begin{remark}[Sampling implication]
Near criticality, $\delta=E^n$ becomes extremely small. To estimate $P=\tfrac12+\tfrac{\delta}{2}$ with relative accuracy at the scale of $\delta$, a binomial estimator requires $T=\Omega(1/\delta^2)=\Omega(E^{-2n})$ samples. At $E=1/\sqrt{2}$ this becomes $T=\Omega(2^n)=\Omega(N)$. Thus, purely empirical estimation of $I_{\mathrm{N\text{-}RAC}}$ at large $n$ is intrinsically sample-demanding, even though the closed-form value may be $O(1)$. This explains why Sec.~\ref{sec:experiments} prefers closed-form evaluation when the protocol structure is known, and reserves Monte Carlo emulation as a sanity check rather than the primary estimator.
\end{remark}


\section{Non-uniform and Correlated Databases}\label{app:correlated-data}

The main Neural-RAC score is intentionally matched to the standard IC task: independent unbiased database bits. If the database bits are correlated, the unconditioned sum $\sum_K I(A_K:\beta\mid b=K)$ can double-count the same underlying information. For instance, if all $A_K$ are identical, one transmitted bit can answer every query, and the unconditioned sum would scale like $N$ even though the database contains only one bit of entropy. The correct extension is to account for each target only after conditioning on the previous targets.

\begin{definition}[Conditional Neural-RAC score]\label{def:conditional-neural-rac-score}
For an arbitrary binary database $A^N=(A_0,\ldots,A_{N-1})$, define
\begin{eqnarray}
I_{\mathrm{cN\text{-}RAC}}
:=\sum_{K=0}^{N-1} I(A_K:\beta\mid b=K,A_{<K}),  \quad  A_{<K}:=(A_0,\ldots,A_{K-1}).
\label{eq:conditional-score}
\end{eqnarray}
\end{definition}

\begin{theorem}[Conditional Neural-IC for arbitrary binary databases]\label{thm:conditional-neural-ic}
Let $A^N$ be any binary database, not necessarily independent or unbiased, and let $b$ be independent of $A^N$ and uniformly distributed over the query set. For every query-separated architecture satisfying Assumption~1,
\begin{eqnarray}
I_{\mathrm{cN\text{-}RAC}}\le I(A^N:H,B).
\label{eq:conditional-embedding}
\end{eqnarray}
Consequently, any physical capacity certificate $I(A^N:H,B)\le C_H$ implies $I_{\mathrm{cN\text{-}RAC}}\le C_H$.
\end{theorem}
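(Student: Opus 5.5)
The proof will mirror the embedding step of Theorem~\ref{thm:neural-ic-embedding}, with one change. Because the bits are no longer independent, the conditioning on the earlier bits $A_{<K}$ cannot be removed. Instead it is kept and carried into the score itself. We start from the chain rule of Assumption~\ref{ass:info_calculus}(iv), applied to the classical register $A^N$ against the decoder-side system $(H,B)$:
\begin{eqnarray}
I(A^N:H,B)=\sum_{K=0}^{N-1} I(A_K:H,B\mid A_{<K}).
\end{eqnarray}
This uses only the chain rule, not independence. Each summand is nonnegative by the nonnegativity of conditional mutual information. It therefore suffices to show, term by term,
\begin{eqnarray}
I(A_K:H,B\mid A_{<K})\ge I(A_K:\beta\mid b=K,A_{<K}).
\end{eqnarray}

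To get this termwise inequality, condition on the classical value $A_{<K}=\vec u$. For a classical conditioning register, $I(A_K:H,B\mid A_{<K})$ is the average over $\vec u$ of the mutual information $I(A_K:H,B)$ evaluated in the conditional state given $A_{<K}=\vec u$. Within each branch, the following facts hold.
\begin{itemize}
\item The query $b$ is independent of $A^N$ and of the already prepared $(H,B)$, since the encoding is executed before $b$ is supplied.
\item Conditioned additionally on $b=K$, the output $\beta$ is produced by the local decoding map $\mathrm{Dec}(\cdot,K;R_B,W)$ acting on $(H,B)$.
\item The conditional state of $(A_K,H,B)$ is unchanged by the event $b=K$, by that same independence.
\end{itemize}
Data processing, Assumption~\ref{ass:info_calculus}(iii), then gives $I(A_K:H,B\mid A_{<K}=\vec u)\ge I(A_K:\beta\mid b=K,A_{<K}=\vec u)$. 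Averaging over $\vec u$ with weights $\Pr[A_{<K}=\vec u]$ yields the displayed inequality. These weights equal $\Pr[A_{<K}=\vec u\mid b=K]$ because $b$ is independent of $A^N$. The right-hand side is a classical quantity, so by consistency it is exactly the Shannon conditional mutual information appearing in $I_{\mathrm{cN\text{-}RAC}}$.

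Summing over $K$ gives Eq.~(\ref{eq:conditional-embedding}). The capacity consequence is then immediate: substitute the certificate $I(A^N:H,B)\le C_H$, as in the second step of Theorem~\ref{thm:neural-ic-embedding}.

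The main obstacle is justifying the "condition on a classical value, then apply local data processing" step within the abstract information calculus. Assumption~\ref{ass:info_calculus} states data processing only for unconditional mutual information. I would handle this first by the identity $I(U:V\mid X)=I(U:VX)-I(U:X)$ together with the fact that $X=A_{<K}$ is classical and untouched by Bob's map. Bob's decoding acts on $(H,B)$ only, so data processing can be applied to the system $(H,B,A_{<K})$, mapping it to $(\beta,A_{<K})$, while the term $I(A_K:A_{<K})$ is unaffected. This gives the conditional data-processing inequality in general. In the quantum instantiation it is standard: it follows from strong subadditivity and monotonicity of relative entropy for classical--quantum states.

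A secondary subtlety is the role of the pre-established resource. We never need $I(A^N:B)=0$ here, because the bound is stated against $I(A^N:H,B)$ directly. The hypothesis that the database is independent of the resource only matters later, when the capacity certificate of Definition~\ref{def:effective-capacity} is invoked.
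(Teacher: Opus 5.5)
Your proposal is correct and follows the paper's proof essentially step for step. You apply the chain rule over $A^N$ against $(H,B)$ without using independence, then use termwise data processing through the decoding map on the branch $b=K$ with $A_{<K}$ kept as conditioning, then sum and substitute the capacity certificate; your extra justification of conditional data processing via $I(U:V\mid X)=I(U:VX)-I(U:X)$, with Bob's map acting on $(H,B)$ and leaving the classical copy of $A_{<K}$ untouched, makes rigorous a step the paper states in one line.
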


\begin{proof}---The chain rule gives, without any independence assumption among the database bits,
\begin{eqnarray}
I(A^N:H,B)=\sum_{K=0}^{N-1} I(A_K:H,B\mid A_{<K}).
\label{eq:conditional-chain}
\end{eqnarray}
For each $K$, condition on $A_{<K}$ and on the branch $b=K$. The decoder output $\beta$ is a local transformation of $(H,B)$, so data processing gives
\begin{eqnarray}
I(A_K:H,B\mid A_{<K})
\ge I(A_K:\beta\mid A_{<K},b=K).
\label{eq:conditional-dp}
\end{eqnarray}
Summing over $K$ proves Eq.~(\ref{eq:conditional-embedding}); the capacity consequence is immediate.
\end{proof}

\begin{corollary}[Conditional accuracy-to-information bound]\label{cor:conditional-accuracy}
Let $P_{e,K}:=\Pr[\beta\ne A_K\mid b=K]$. Then
\begin{eqnarray}
I_{\mathrm{cN\text{-}RAC}}
\ge \sum_{K=0}^{N-1} H(A_K\mid A_{<K})-\sum_{K=0}^{N-1} h(P_{e,K}).
\label{eq:conditional-accuracy-bound}
\end{eqnarray}
For independent unbiased bits, $H(A_K\mid A_{<K})=1$ and Eq.~(\ref{eq:conditional-accuracy-bound}) reduces to Theorem~4.
\end{corollary}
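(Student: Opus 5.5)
The plan is to prove the bound term by term, for each fixed $K$. We will show
\begin{eqnarray}
I(A_K:\beta\mid b=K,A_{<K}) \ge H(A_K\mid A_{<K}) - h(P_{e,K})
\nonumber
\end{eqnarray}
and then sum over $K$. The argument is a conditional version of Lemma~\ref{lem:success_MI}, which is Fano's inequality for binary variables.

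Since $b$ is independent of $A^N$, conditioning on $b=K$ does not change the law of $(A_K,A_{<K})$. The first step is therefore to write
\begin{eqnarray}
I(A_K:\beta\mid b=K,A_{<K}) = H(A_K\mid A_{<K}) - H(A_K\mid \beta, A_{<K}, b=K).
\nonumber
\end{eqnarray}
The second step bounds the last term. Let $\mathcal E:=A_K\oplus\beta$ denote the error bit. Since $A_K$ is a function of $(\beta,\mathcal E)$, we have
\begin{eqnarray}
H(A_K\mid\beta,A_{<K},b=K)\le H(\mathcal E\mid\beta,A_{<K},b=K)\le H(\mathcal E\mid b=K)=h(P_{e,K}),
\nonumber
\end{eqnarray}
where the middle inequality holds because conditioning reduces entropy. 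Summing the resulting inequality over $K$ gives Eq.~(\ref{eq:conditional-accuracy-bound}).

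For the special case, suppose the bits are independent and unbiased. Then $H(A_K\mid A_{<K})=H(A_K)=1$, and $h(P_{e,K})=h(1-P_K)=h(P_K)$, so the bound becomes $N-\sum_K h(P_K)$. It remains to check that the left side agrees with the unconditioned score of Theorem~\ref{thm:neural_accuracy_information}. Here I would not claim equality of $I(A_K:\beta\mid b=K,A_{<K})$ and $I(A_K:\beta\mid b=K)$ in general. Instead I note that the first two steps, run without conditioning on $A_{<K}$, give exactly Theorem~\ref{thm:neural_accuracy_information}; in that sense the corollary "reduces to" it.

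I expect no real obstacle. The only care needed is in the bookkeeping of the conditioning: $P_{e,K}$ is defined without conditioning on $A_{<K}$, and the conditioning-reduces-entropy step is precisely what handles that mismatch.
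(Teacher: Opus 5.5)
Your proof is correct and is essentially the paper's argument: you split the conditional mutual information as $H(A_K\mid A_{<K})-H(A_K\mid\beta,A_{<K},b=K)$ using the independence of $b$, bound the second term by $h(P_{e,K})$ via binary Fano (which you prove inline with the error bit, exactly as in Lemma~\ref{lem:success_MI}), and sum over $K$. Your caution about the ``reduces to'' clause is also well placed: the reduction is only at the level of the right-hand side, and for independent bits $I(A_K:\beta\mid b=K,A_{<K})\ge I(A_K:\beta\mid b=K)$ (possibly strictly), so the specialized corollary is implied by Theorem~\ref{thm:neural_accuracy_information} rather than identical to it.
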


\begin{proof}---For each branch $K$, Fano's binary inequality gives $H(A_K\mid\beta,A_{<K},b=K)\le h(P_{e,K})$. Since $b$ is independent of the database, $H(A_K\mid A_{<K},b=K)=H(A_K\mid A_{<K})$. Subtracting the conditional entropy after observing $\beta$ and summing over $K$ yields the result.
\end{proof}

This appendix is also a warning about applying the unconditioned Neural-RAC score to natural data. Real datasets often contain redundancy; in that setting the conditional score is the appropriate quantity if one wants a capacity law rather than a redundancy-counting statistic.

\section{Quantum Correlation Layers as Trainable Modules}\label{app:qnn-module}

This appendix makes explicit a technical bridge that is conceptually central to the present work: how a physically admissible quantum resource (entanglement + local measurements) becomes an effective ``correlation layer'' characterized by a single scalar bias parameter $E$, and how that parameter then propagates through nesting into the Neural-RAC information score. We also provide a minimal simulation (and a schematic) that can be used as a sanity-check and as a template for QNN-style modules.

Throughout this appendix, Alice and Bob's CHSH inputs are denoted by $(s,t) \in \{0,1\}^2$ and outputs by $(A,B) \in \{0,1\}^2$ (bit convention), consistent with the main text and with Ref.~\cite{Pawlowski2009}. When using quantum observables, we employ the standard $\pm 1$ convention and relate it back to the bit version explicitly.

\subsection{From $\pm1$ correlators to CHSH winning probabilities}

The CHSH game is most transparently described in the bit language: the winning predicate is
\begin{eqnarray}
A \oplus B = s t.
\label{eq:AppC_CHSH_predicate_bits}
\end{eqnarray}
Quantum measurement outcomes, however, are naturally $\pm 1$ random variables. The conversion is elementary but worth stating carefully because it is the algebraic hinge that connects quantum correlators to the isotropic-bias parameterization used throughout the paper.

\begin{definition}[$\pm 1$ lifts of bits]
\label{def:AppC_pm1_lift}
Given a bit $X \in \{0,1\}$ define its $\pm 1$ lift by $\overline{X}:=(-1)^X \in \{+1,-1\}$. Conversely, given $\alpha \in \{+1,-1\}$ one recovers the bit $X$ by $X=\frac{1-\alpha}{2}$.
\end{definition}

The key identity is that XOR becomes multiplication after the lift:
\begin{eqnarray}
(-1)^{X\oplus Y} = (-1)^X(-1)^Y = \overline{X} \overline{Y}.
\label{eq:AppC_xor_to_product}
\end{eqnarray}

\begin{lemma}[Winning probability in terms of correlators]
\label{lem:AppC_winprob_correlator}
Let $\alpha_s:=(-1)^{A}$ and $\beta_t:=(-1)^{B}$ be the $\pm 1$ lifts of the bit outputs on inputs $(s,t)$. Define the correlator
\begin{eqnarray}
E_{st} := \mathbb{E}[\alpha_s \beta_t \mid s,t] \in [-1,1].
\label{eq:AppC_corr_def}
\end{eqnarray}
Then the CHSH winning probability satisfies
\begin{eqnarray}
\Pr[A\oplus B = st \mid s,t] = \frac{1 + (-1)^{st} E_{st}}{2}.
\label{eq:AppC_winprob_formula}
\end{eqnarray}
\end{lemma}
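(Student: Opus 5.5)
The plan is to compute the correlator $E_{st}$ directly from the definition of expectation for a $\pm1$ random variable and then solve for the winning probability. Fix the inputs $(s,t)$ throughout, so every probability and expectation is conditioned on them. By the XOR-to-product identity Eq.~(\ref{eq:AppC_xor_to_product}), the product of the lifted outputs is $\alpha_s\beta_t=(-1)^{A\oplus B}$. This product takes the value $+1$ exactly on the event $A\oplus B=0$ and $-1$ on $A\oplus B=1$. Hence
\begin{eqnarray}
E_{st}=\Pr[A\oplus B=0\mid s,t]-\Pr[A\oplus B=1\mid s,t]=2\Pr[A\oplus B=0\mid s,t]-1.
\end{eqnarray}

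Next I will split into the two cases $st=0$ and $st=1$. If $st=0$, the winning event is $A\oplus B=0$, so $\Pr[\text{win}]=(1+E_{st})/2$; this matches Eq.~(\ref{eq:AppC_winprob_formula}) because $(-1)^{st}=1$. If $st=1$, the winning event is $A\oplus B=1$, whose probability is $1-(1+E_{st})/2=(1-E_{st})/2$. This again matches, now with $(-1)^{st}=-1$.

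Alternatively, both cases can be handled at once. Write the win indicator as $\tfrac12[1+(-1)^{st}(-1)^{A\oplus B}]$, exactly as in the proof of Lemma~\ref{lem:CHSH_corr_form}, and take its conditional expectation.

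There is no real obstacle here. The only point needing care is consistent bookkeeping of the sign convention $X=(1-\alpha)/2$ from Definition~\ref{def:AppC_pm1_lift}: bit $0$ must correspond to $+1$, so that $(-1)^{A\oplus B}=+1$ means "equal outputs". As a sanity check, summing over $(s,t)$ should recover Lemma~\ref{lem:CHSH_corr_form}.
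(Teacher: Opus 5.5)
Your proposal is correct and follows essentially the paper's argument. Both proofs use $\alpha_s\beta_t=(-1)^{A\oplus B}$, note that a $\pm1$ variable with mean $E_{st}$ equals $\pm1$ with probability $(1\pm E_{st})/2$, and match this to the target value $(-1)^{st}$; your explicit case split $st\in\{0,1\}$ is just the paper's final substitution step written out.
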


\begin{proof}---The event $A \oplus B = st$ is equivalent to $(-1)^{A \oplus B} = (-1)^{st}$. Using Eq.~(\ref{eq:AppC_xor_to_product}),
\begin{eqnarray}
(-1)^{A\oplus B} = (-1)^A(-1)^B = \alpha_s \beta_t.
\end{eqnarray}
Therefore, the win event is $\alpha_s\beta_t = (-1)^{st}$. Since $\alpha_s\beta_t\in\{\pm 1\}$,
\begin{eqnarray}
\Pr[\alpha_s\beta_t = (+1)\mid s,t] = \frac{1+ \mathbb{E}[\alpha_s\beta_t\mid s,t]}{2}
\end{eqnarray}
and
\begin{eqnarray}
\Pr[\alpha_s\beta_t = (-1)\mid s,t] = \frac{1- \mathbb{E}[\alpha_s\beta_t\mid s,t]}{2}.
\end{eqnarray}
Substituting the target value $(-1)^{st}$ yields Eq.~(\ref{eq:AppC_winprob_formula}).
\end{proof}

By summing Eq.~(\ref{eq:AppC_winprob_formula}) over the four inputs, we obtain the relation between the CHSH functional (as used in Ref.~\cite{Pawlowski2009}) and the correlator form:
\begin{eqnarray}
S_{\mathrm{CHSH}} := \sum_{s,t\in\{0,1\}} \Pr[A\oplus B = st \mid s,t] = 2 + \frac12\bigl(E_{00}+E_{01}+E_{10}-E_{11}\bigr).
\label{eq:AppC_S_CHSH_relation}
\end{eqnarray}

\subsection{CHSH twirling: reduction to an isotropic one-parameter family}

The nesting analysis in Appendix~\ref{app:nested-chsh} assumes an isotropic CHSH cell:
\begin{eqnarray}
\Pr[A\oplus B = st\mid s,t] = \frac{1+E}{2}  \quad  \forall (s,t),
\label{eq:AppC_isotropic_def}
\end{eqnarray}
for some bias $E \in [0,1]$. A natural question is: what if the raw correlations are not isotropic? Ref.~\cite{Pawlowski2009} emphasizes that one can apply a purely local randomization (no communication) that preserves the CHSH value while symmetrizing the box into the isotropic form. Because this reduction is exactly what allows us to represent a correlation layer by a single scalar parameter $E$, we give an explicit construction and proof.

\begin{proposition}[Isotropization (CHSH twirl) preserving $S_{\mathrm{CHSH}}$]
\label{prop:AppC_twirl}
Let $P(A,B\mid s,t)$ be an arbitrary no-signaling box with bit inputs/outputs. Define its CHSH winning probabilities $p_{st}:=\Pr[A\oplus B = st\mid s,t]$ and its CHSH value $S_{\mathrm{CHSH}} := \sum_{s,t} p_{st}$. Then, there exists a local pre-/post-processing using only shared randomness (no communication) producing a new box $P^{\mathrm{iso}}(A',B'\mid s,t)$, such that:
\begin{enumerate}
\item[(i)] \emph{Isotropy.} $\Pr[A'\oplus B' = st\mid s,t] = S_{\mathrm{CHSH}}/4$ for all $(s,t)$.
\item[(ii)] \emph{CHSH preservation.} The CHSH value is unchanged: $S_{\mathrm{CHSH}}(P^{\mathrm{iso}})=S_{\mathrm{CHSH}}(P)$.
\item[(iii)] \emph{Uniform marginals.} $A'$ and $B'$ are unbiased for every input (hence the result lies in the isotropic family Eq.~(\ref{eq:AppC_isotropic_def}) with $E = S_{\mathrm{CHSH}}/2 - 1$).
\end{enumerate}
\end{proposition}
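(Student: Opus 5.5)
The plan is to build the standard CHSH twirl of Ref.~\cite{Pawlowski2009} explicitly as a uniform mixture of local relabelings, and then verify (i)--(iii) by direct computation. Alice and Bob share three independent uniform bits $(\mu,\nu,\lambda)$. On actual inputs $(s,t)$, Alice feeds $\tilde s:=s\oplus\mu$ into the box, and Bob feeds $\tilde t:=t\oplus\nu$, obtaining outputs $(A,B)$. Alice then outputs
\begin{eqnarray}
A':=A\oplus\lambda\oplus \mu t\ \text{-type correction},
\end{eqnarray}
but written so that it depends only on local data: $A':=A\oplus\lambda\oplus \nu s\oplus\mu\nu$. Bob outputs $B':=B\oplus\lambda\oplus\mu t$. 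Each party uses only its own input, its box output and the shared bits, so no communication occurs. The resulting object is a convex mixture of boxes obtained from $P$ by local operations, so it is again a no-signaling box.

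The key algebraic step uses the identity $\tilde s\tilde t=st\oplus s\nu\oplus\mu t\oplus\mu\nu$. From it,
\begin{eqnarray}
A'\oplus B'\oplus st = A\oplus B\oplus \tilde s\tilde t,
\end{eqnarray}
so a win on $(s,t)$ for the new box is exactly a win on $(\tilde s,\tilde t)$ for the old box. For fixed $(s,t)$, the pair $(\tilde s,\tilde t)$ is uniform on $\{0,1\}^2$ because $(\mu,\nu)$ are uniform. Hence
\begin{eqnarray}
\Pr[A'\oplus B'=st\mid s,t]=\tfrac14\sum_{\tilde s,\tilde t}p_{\tilde s\tilde t}=S_{\mathrm{CHSH}}/4.
\end{eqnarray}
This gives (i), and summing over the four inputs gives (ii).

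For (iii), $\lambda$ is uniform and independent of everything else, so $A'$ is uniform for every input, and likewise $B'$. Combining the uniform marginals with the identity $\Pr[\text{win}]=(1+E)/2$ places the box in the family of Eq.~(\ref{eq:AppC_isotropic_def}) with $E=S_{\mathrm{CHSH}}/2-1$.

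The main obstacle is choosing the output corrections so that the cross terms $s\nu\oplus\mu t\oplus\mu\nu$ are split between the parties using only locally available variables. Alice knows $s$, $\mu$ and $\nu$; Bob knows $t$ and $\mu$. Verifying this bookkeeping is the point to check carefully.

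A second point is also worth flagging. If $S_{\mathrm{CHSH}}<2$, then $E<0$, which lies outside the stated range $[0,1]$. In that case one first flips $A$ (i.e., $A\mapsto A\oplus 1$) to map the box to one with $S_{\mathrm{CHSH}}\mapsto4-S_{\mathrm{CHSH}}$. Alternatively, one simply notes that the family is defined for $E\in[-1,1]$ there.
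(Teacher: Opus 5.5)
Your proposal is correct and is the paper's own proof up to renaming $(\mu,\nu,\lambda)\leftrightarrow(u,v,w)$: the input shifts and the output corrections $A'=A\oplus\lambda\oplus\nu s\oplus\mu\nu$, $B'=B\oplus\lambda\oplus\mu t$ are identical to the paper's, and so are the three verification steps (the parity identity $A'\oplus B'\oplus st=A\oplus B\oplus\tilde s\tilde t$, averaging over the uniformly distributed shifted inputs, and unbiased marginals from the common shared bit). Your side remark that $S_{\mathrm{CHSH}}<2$ gives $E<0$, outside the stated range $[0,1]$, is a valid point the paper glosses over; of your two fixes, only extending the family to $E\in[-1,1]$ (as the parity lemmas already allow) keeps (ii) intact, since flipping $A$ sends $S_{\mathrm{CHSH}}\mapsto 4-S_{\mathrm{CHSH}}$.
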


\begin{proof}---Let $u,v,w \in \{0,1\}$ be shared random bits, uniform and independent. Given the external inputs $(s,t)$, Alice and Bob feed the shifted inputs
\begin{eqnarray}
s^\star := s \oplus u, \quad t^\star := t \oplus v
\label{eq:AppC_twirl_inputs}
\end{eqnarray}
into the original box and obtain outputs $(A,B)$. They then define new outputs
\begin{eqnarray}
A' := A \oplus w \oplus (s\wedge v)\oplus (u\wedge v),
\quad
B' := B \oplus w \oplus (u\wedge t).
\label{eq:AppC_twirl_outputs}
\end{eqnarray}
We claim the key parity identity
\begin{eqnarray}
A'\oplus B' \oplus st = A \oplus B \oplus s^\star t^\star.
\label{eq:AppC_key_parity_identity}
\end{eqnarray}
To verify it, compute from Eq.~(\ref{eq:AppC_twirl_outputs}):
\begin{eqnarray}
A'\oplus B' &=& \bigl(A\oplus w \oplus (s\wedge v)\oplus (u\wedge v)\bigr)\oplus \bigl(B\oplus w \oplus (u\wedge t)\bigr)\nonumber\\
	&=& A\oplus B \oplus (u\wedge t)\oplus (s\wedge v)\oplus (u\wedge v).
\end{eqnarray}
Thus,
\begin{eqnarray}
A' \oplus B' \oplus st = A \oplus B \oplus \bigl( st \oplus (u\wedge t)\oplus (s\wedge v)\oplus (u\wedge v) \bigr).
\label{eq:AppC_expand1}
\end{eqnarray}
On the other hand, since $s^\star = s\oplus u$ and $t^\star=t\oplus v$,
\begin{eqnarray}
s^\star t^\star = (s\oplus u)(t\oplus v) = st \oplus (u\wedge t)\oplus (s\wedge v)\oplus (u\wedge v),
\label{eq:AppC_expand2}
\end{eqnarray}
where we used distributivity in $\mathbb{F}_2$ and the fact that $\wedge$ is multiplication. Substituting Eq.~(\ref{eq:AppC_expand2}) into Eq.~(\ref{eq:AppC_expand1}) yields Eq.~(\ref{eq:AppC_key_parity_identity}).

Now, the CHSH win event for the transformed box on inputs $(s,t)$ is $A' \oplus B' = st$, equivalently, $A' \oplus B' \oplus st = 0$. By Eq.~(\ref{eq:AppC_key_parity_identity}), this is equivalent to
$A \oplus B \oplus s^\star t^\star=0$, i.e., the original win event at shifted inputs $(s^\star,t^\star)$:
\begin{eqnarray}
\mathbf{1}\{A'\oplus B'=st\} = \mathbf{1}\{A\oplus B = s^\star t^\star\}.
\end{eqnarray}
Therefore,
\begin{eqnarray}
\Pr[A'\oplus B' = st\mid s,t,u,v,w] = \Pr[A\oplus B = s^\star t^\star \mid s^\star,t^\star].
\end{eqnarray}
Averaging over $(u,v,w)$ (note $w$ cancels from the win event) gives
\begin{eqnarray}
\Pr[A'\oplus B' = st\mid s,t] &=& \frac{1}{4}\sum_{u,v\in\{0,1\}} \Pr[A\oplus B = (s\oplus u)(t\oplus v)\mid s\oplus u,\,t\oplus v] \nonumber \\
	&=& \frac{1}{4}\sum_{s^\star,t^\star\in\{0,1\}} \Pr[A\oplus B = s^\star t^\star\mid s^\star, t^\star] \nonumber \\
	&=& \frac{S_{\mathrm{CHSH}}}{4},
\end{eqnarray}
which proves isotropy (i). Summing over $(s,t)$ shows the CHSH value is preserved (ii).

Finally, uniform marginals (iii) follow because $w$ is uniform and is XORed into both outputs. For any fixed input, $A'$ is $A$ XORed with a uniform bit and fixed local shifts, and $B'$ is $B$ XORed with the same uniform bit and a local shift; both marginals are therefore unbiased.
\end{proof}

\begin{remark}[Effective isotropic bias from a general box]
Proposition~\ref{prop:AppC_twirl} shows that, for the purpose of CHSH-structured tasks (and in particular for nesting analyses),
one may reduce an arbitrary correlation resource to the isotropic family with the \emph{same} $S_{\mathrm{CHSH}}$.
Thus it is natural to define the effective bias parameter
\begin{eqnarray}
E_{\mathrm{iso}} := 2\left(\frac{S_{\mathrm{CHSH}}}{4}\right)-1 = \frac{S_{\mathrm{CHSH}}}{2}-1.
\end{eqnarray}
Using Eq.~(\ref{eq:AppC_S_CHSH_relation}), this is equivalently
$E_{\mathrm{iso}} = \big(E_{00}+E_{01}+E_{10}-E_{11}\big)/4$.
This is the scalar ``correlation capacity'' that feeds into the bias-multiplication law of Appendix~\ref{app:nested-chsh}.
\end{remark}

\subsection{A one-parameter quantum family: tuning $E_{\mathrm{iso}}$ up to Tsirelson}

We now exhibit a simple quantum construction where a single angle parameter controls the effective isotropic bias $E_{\mathrm{iso}}$. This is exactly the form one would like in a QNN module: a small set of trainable parameters controlling correlation strength.

We use the Bell state $\ket{\Phi^+}=(\ket{00}+\ket{11})/\sqrt{2}$. Let $\hat{\sigma}_x,\hat{\sigma}_z$ be Pauli matrices. Alice's CHSH settings are fixed as
\begin{eqnarray}
\hat{A}_0 := \hat{\sigma}_z,  \quad   \hat{A}_1 := \hat{\sigma}_x.
\end{eqnarray}
Bob's settings form a one-parameter family in the $x$-$z$ plane:
\begin{eqnarray}
\hat{B}_0(\varphi) := \cos\varphi\,\hat{\sigma}_z + \sin\varphi\,\hat{\sigma}_x,
\quad
\hat{B}_1(\varphi) := \cos\varphi\,\hat{\sigma}_z - \sin\varphi\,\hat{\sigma}_x,
\label{eq:AppC_B_settings}
\end{eqnarray}
where $\varphi \in [0,\pi/4]$.

\begin{theorem}[Closed-form CHSH correlator and effective isotropic bias for the family Eq.~(\ref{eq:AppC_B_settings})]
\label{thm:AppC_quantum_phi_family}
Let $E_{st}(\varphi):=\bra{\Phi^+}\hat{A}_s\otimes \hat{B}_t(\varphi)\ket{\Phi^+}$. Then,
\begin{eqnarray}
E_{00}(\varphi) &=& E_{01}(\varphi)=\cos\varphi, \nonumber \\
E_{10}(\varphi) &=& \sin\varphi,  \nonumber \\
E_{11}(\varphi) &=& -\sin\varphi,
\label{eq:AppC_correlators_phi}
\end{eqnarray}
and hence,
\begin{eqnarray}
\mathrm{CHSHcorr}(\varphi):=E_{00}+E_{01}+E_{10}-E_{11} = 2(\cos\varphi+\sin\varphi).
\label{eq:AppC_CHSHcorr_phi}
\end{eqnarray}
The corresponding effective isotropic bias (via Proposition~\ref{prop:AppC_twirl}) is
\begin{eqnarray}
E_{\mathrm{iso}}(\varphi) = \frac{\mathrm{CHSHcorr}(\varphi)}{4} = \frac{\cos\varphi+\sin\varphi}{2}.
\label{eq:AppC_Eiso_phi}
\end{eqnarray}
In particular, $E_{\mathrm{iso}}(0)=1/2$ (classical edge) and $E_{\mathrm{iso}}(\pi/4)=1/\sqrt{2}$ (Tsirelson saturation).
\end{theorem}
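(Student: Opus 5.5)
The plan is a direct computation in three steps. The correlators come first, using the Bell-state identity already used in the proof of Theorem~\ref{thm:quantum_Tsirelson_cell}: $\langle\Phi^+|\hat\sigma_u\otimes\hat\sigma_v|\Phi^+\rangle=\delta_{uv}$ for $u,v\in\{x,z\}$. Then Lemma~\ref{lem:AppC_winprob_correlator} and Proposition~\ref{prop:AppC_twirl} convert the correlators into the effective isotropic bias. The identity itself is quick to confirm: $\sigma_z\otimes\sigma_z$ and $\sigma_x\otimes\sigma_x$ both stabilize $\ket{\Phi^+}$, while $\sigma_z\otimes\sigma_x\ket{\Phi^+}$ is orthogonal to $\ket{\Phi^+}$, and likewise for $\sigma_x\otimes\sigma_z$.

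\textbf{Step 1 (correlators).} Bob's observables are linear in the Pauli matrices, so bilinearity of $E_{st}(\varphi)$ in $\hat B_t$ gives $E_{st}(\varphi)=\cos\varphi\,\langle \hat A_s\otimes\hat\sigma_z\rangle \pm\sin\varphi\,\langle\hat A_s\otimes\hat\sigma_x\rangle$.
\begin{itemize}
\item For $\hat A_0=\hat\sigma_z$, only the $\sigma_z\sigma_z$ term survives, giving $E_{00}=E_{01}=\cos\varphi$.
\item For $\hat A_1=\hat\sigma_x$, only the $\sigma_x\sigma_x$ term survives, giving $E_{10}=\sin\varphi$ and $E_{11}=-\sin\varphi$.
\end{itemize}
Summing with the CHSH sign pattern yields $2(\cos\varphi+\sin\varphi)$, which is Eq.~(\ref{eq:AppC_CHSHcorr_phi}).

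\textbf{Step 2 (effective bias).} Eq.~(\ref{eq:AppC_S_CHSH_relation}) gives $S_{\mathrm{CHSH}}=2+\tfrac12\mathrm{CHSHcorr}(\varphi)$. The remark following Proposition~\ref{prop:AppC_twirl} defines $E_{\mathrm{iso}}=S_{\mathrm{CHSH}}/2-1$. Substituting gives $E_{\mathrm{iso}}=\mathrm{CHSHcorr}/4=(\cos\varphi+\sin\varphi)/2$.

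Proposition~\ref{prop:AppC_twirl} only applies to a no-signaling box. Here that property holds automatically, because the box comes from local measurements on a shared quantum state; this should be stated explicitly in the proof.

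\textbf{Step 3 (endpoints).} At $\varphi=0$ the formula gives $1/2$. At $\varphi=\pi/4$ it gives $(\sqrt2/2+\sqrt2/2)/2=1/\sqrt2$. The endpoint $\varphi=\pi/4$ also coincides with the settings of Theorem~\ref{thm:quantum_Tsirelson_cell}, which serves as a consistency check.

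The only step with any content is the vanishing of the cross terms $\langle\sigma_z\otimes\sigma_x\rangle$ and $\langle\sigma_x\otimes\sigma_z\rangle$. That is already settled by the Bell-state identity above, so I expect no real obstacle; the remaining work is careful bookkeeping of the signs in $E_{11}$.
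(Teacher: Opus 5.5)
Your proposal is correct and follows essentially the same route as the paper: apply the Bell-state identity $\bra{\Phi^+}\hat\sigma_u\otimes\hat\sigma_v\ket{\Phi^+}=\delta_{uv}$ term by term to get the four correlators, sum them with the CHSH sign pattern, and obtain $E_{\mathrm{iso}}=\mathrm{CHSHcorr}/4$ from the remark following Proposition~\ref{prop:AppC_twirl}. Your explicit check that the cross terms vanish, and your observation that the quantum box is automatically no-signaling (so the twirl applies), are small points the paper leaves implicit.
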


\begin{proof}---Using $\ket{\Phi^+}$, one has the identity
$\bra{\Phi^+}\hat{\sigma}_u\otimes\hat{\sigma}_v\ket{\Phi^+}=\delta_{uv}$ for $u,v\in\{x,z\}$ and mixed terms vanish. Then,
\begin{eqnarray}
E_{00}(\varphi)
&=& \bra{\Phi^+}\hat{\sigma}_z\otimes(\cos\varphi\,\hat{\sigma}_z+\sin\varphi\,\hat{\sigma}_x)\ket{\Phi^+}
= \cos\varphi, \nonumber\\
E_{01}(\varphi)
&=& \bra{\Phi^+}\hat{\sigma}_z\otimes(\cos\varphi\,\hat{\sigma}_z-\sin\varphi\,\hat{\sigma}_x)\ket{\Phi^+}
= \cos\varphi, \nonumber\\
E_{10}(\varphi)
&=& \bra{\Phi^+}\hat{\sigma}_x\otimes(\cos\varphi\,\hat{\sigma}_z+\sin\varphi\,\hat{\sigma}_x)\ket{\Phi^+}
= \sin\varphi, \nonumber\\
E_{11}(\varphi)
&=& \bra{\Phi^+}\hat{\sigma}_x\otimes(\cos\varphi\,\hat{\sigma}_z-\sin\varphi\,\hat{\sigma}_x)\ket{\Phi^+}
= -\sin\varphi,
\end{eqnarray}
proving Eq.~(\ref{eq:AppC_correlators_phi}). Eq.~(\ref{eq:AppC_CHSHcorr_phi}) follows by substitution, and Eq.~(\ref{eq:AppC_Eiso_phi}) is the definition $E_{\mathrm{iso}}=\mathrm{CHSHcorr}/4$. Finally, $E_{\mathrm{iso}}(\varphi)$ is maximized on $[0, \pi/4]$ at $\varphi=\pi/4$, giving $E_{\mathrm{iso}}(\pi/4)=(\sqrt{2})/2=1/\sqrt{2}$, the Tsirelson value.
\end{proof}

For the quantum family in Theorem~\ref{thm:AppC_quantum_phi_family}, Eq.~(\ref{eq:AppC_CHSHcorr_phi}) and Eq.~(\ref{eq:AppC_Eiso_phi}) provide a closed-form map $\varphi\mapsto E_{\mathrm{iso}}(\varphi)$. This is plotted in Fig.~\ref{fig:AppC_eBias_phi}: as $\varphi$ increases from $0$ to $\pi/4$, the effective isotropic bias rises smoothly from the classical edge $1/2$ to the Tsirelson value $1/\sqrt{2}$. Feeding this single scalar into the nested isotropic analysis of Appendix~\ref{app:nested-chsh} immediately yields the predicted Neural-RAC behavior at depth $n$, namely,
\begin{eqnarray}
P=\frac{1+E_{\mathrm{iso}}(\varphi)^n}{2},
\end{eqnarray}
and
\begin{eqnarray}
I_{\mathrm{N\text{-}RAC}}(n,E_{\mathrm{iso}}(\varphi)) = 2^n \left[1-h\left(\frac{1+E_{\mathrm{iso}}(\varphi)^n}{2}\right)\right].
\end{eqnarray}
Fig.~\ref{fig:AppC_INRAC_phi} visualizes this dependence for representative angles $\varphi=0$, $\pi/8$, and $\pi/4$ against the Neural-IC limit $m=1$ (dashed), showing that subcritical angles rapidly drive the information score to near zero with increasing depth, whereas near $\varphi=\pi/4$ the curve approaches the critical $O(1)$ regime discussed in Sec.~\ref{sec:experiments} without crossing the IC bound. Finally, Table~\ref{tab:AppC_phi_scan} provides a coarse numerical scan that makes the same chain of dependencies explicit by listing $(\varphi, E_{\mathrm{iso}}(\varphi), S_{\mathrm{CHSH}}(\varphi), I_{\mathrm{N\text{-}RAC}}(n=10))$: increasing $\varphi$ increases the attainable CHSH bias and therefore increases the query-conditioned information harvest under a fixed bottleneck, up to the Tsirelson frontier where the nested protocol becomes critical but remains causally accountable.

\begin{figure}[t]
\centering
\includegraphics[width=0.46\linewidth]{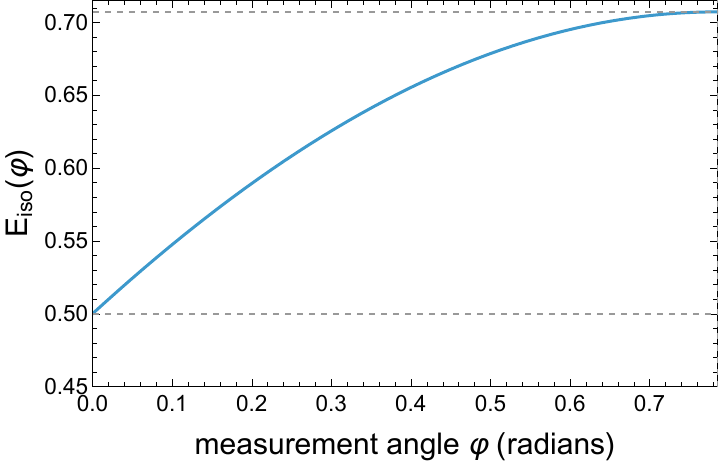}
\caption{Effective isotropic bias $E_{\mathrm{iso}}(\varphi)=(\cos\varphi+\sin\varphi)/2$ for the one-parameter quantum family in Theorem~\ref{thm:AppC_quantum_phi_family}.
The curve interpolates from the classical edge $1/2$ at $\varphi=0$ to the Tsirelson value $1/\sqrt{2}$ at $\varphi=\pi/4$.}
\label{fig:AppC_eBias_phi}
\end{figure}

\begin{figure}[t]
\centering
\includegraphics[width=0.46\linewidth]{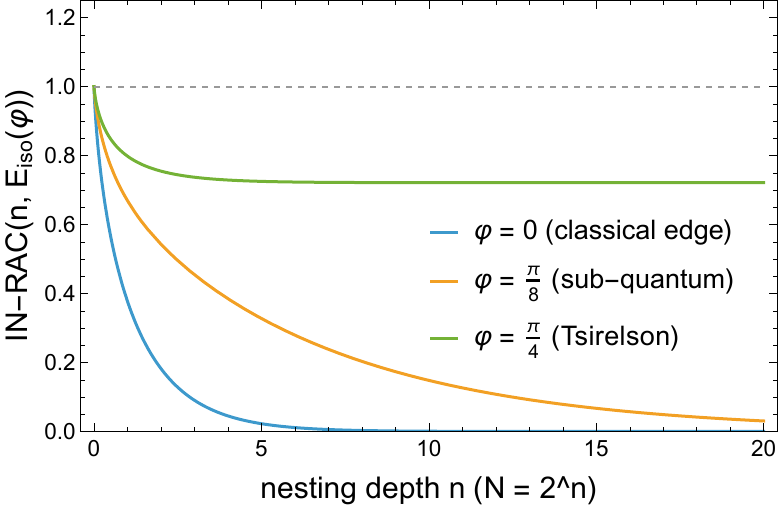}
\caption{Predicted Neural-RAC information score $I_{\mathrm{N\text{-}RAC}}(n,E_{\mathrm{iso}}(\varphi))$ for three representative angles:
$\varphi=0$ (classical edge), $\varphi=\pi/8$ (sub-quantum), and $\varphi=\pi/4$ (Tsirelson saturation).
The dashed line indicates the Neural-IC limit $m=1$.
As $\varphi$ approaches $\pi/4$, the score approaches the critical quantum behavior discussed in Sec.~\ref{sec:experiments}, without violating the IC limit.}
\label{fig:AppC_INRAC_phi}
\end{figure}

\begin{table}[t]
\centering
\begin{adjustbox}{max width=0.92\linewidth}
\begin{tabular}{c|ccc}
\hline
$\varphi$ & $E_{\mathrm{iso}}(\varphi)$ & $S_{\mathrm{CHSH}}(\varphi)$ & $I_{\mathrm{N\text{-}RAC}}(n=10)$ \\
\hline
$0$        & $0.5000$ & $3.0000$ & $7.04\times 10^{-4}$ \\
$\pi/16$   & $0.5879$ & $3.1759$ & $1.80\times 10^{-2}$ \\
$\pi/8$    & $0.6533$ & $3.3066$ & $1.48\times 10^{-1}$ \\
$3\pi/16$  & $0.6935$ & $3.3870$ & $4.89\times 10^{-1}$ \\
$\pi/4$    & $0.7071$ & $3.4142$ & $7.21\times 10^{-1}$ \\
\hline
\end{tabular}
\end{adjustbox}
\caption{A coarse scan of the one-parameter quantum family.
The Tsirelson point $\varphi=\pi/4$ saturates $E_{\mathrm{iso}}=1/\sqrt{2}$ and exhibits the critical regime emphasized in Sec.~\ref{sec:experiments}.}
\label{tab:AppC_phi_scan}
\end{table}

\begin{remark}[A QNN reading]
Although Theorem~\ref{thm:AppC_quantum_phi_family} gives a closed-form map $\varphi \mapsto E_{\mathrm{iso}}(\varphi)$ and the maximum $E_{\mathrm{iso}}=1/\sqrt{2}$ occurs at the Tsirelson point $\varphi=\pi/4$, one should not automatically hard-code $\varphi=\pi/4$ in a learning setting: the value that maximizes $E_{\mathrm{iso}}$ need not be the value that minimizes the end-to-end loss function (or maximizes the downstream task score). The trainable objective depends on the data distribution and on how the correlation layer interacts with the surrounding classical encoder/decoder. Hence, the loss-optimal setting can be strictly sub-Tsirelson even when stronger correlations are available. From this viewpoint, $\varphi$ can play the role of a trainable weight in a quantum correlation layer: varying the measurement angle changes how much of the shared entanglement is converted into CHSH-type correlation, i.e., it changes the conditional output distribution of a single CHSH module, and after isotropization this change is summarized by the scalar bias $E_{\mathrm{iso}}(\varphi)$. Tuning $\varphi$ therefore provides a continuous ``quantum knob'' that sweeps a family of physically realizable output distributions, interpolating from the classical boundary ($E_{\mathrm{iso}}=1/2$) up to the quantum frontier ($E_{\mathrm{iso}}=1/\sqrt{2}$), and thereby enlarges the hypothesis class available under a fixed $m$-bit bottleneck. In our framework, this scalar $E_{\mathrm{iso}}(\varphi)$ is precisely what controls (after isotropization) the nested Neural-RAC success probability and, consequently, the Neural-IC information score.
\end{remark}


As a minimal trainable-module sanity check, we also optimized a regularized scalar objective
\begin{eqnarray}
\mathcal U_\lambda(\varphi)=I_{\mathrm{N\text{-}RAC}}\bigl(n,E_{\mathrm{iso}}(\varphi)\bigr)-\lambda\left(\frac{\varphi}{\pi/4}\right)^2,  \quad  n=10,
\label{eq:qnn-regularized-utility}
\end{eqnarray}
over $0\le\varphi\le\pi/4$. This is not a hardware experiment; it is a controlled illustration of the point made in Remark~11. When the downstream objective includes even a simple cost for using stronger correlations, the loss-optimal angle can be strictly sub-Tsirelson.

\begin{figure}[t]
\centering
\includegraphics[width=0.60\linewidth]{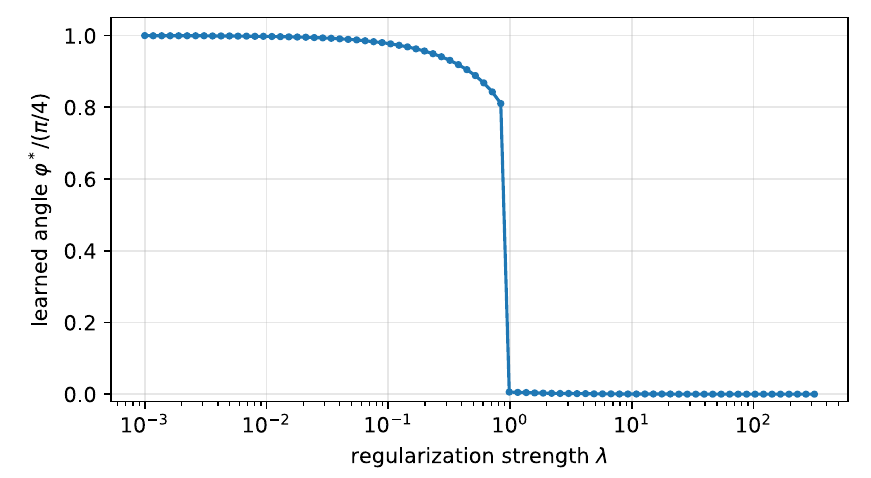}
\caption{Regularized optimization of the quantum correlation-layer angle for the toy utility in Eq.~(\ref{eq:qnn-regularized-utility}). With no penalty the optimum is the Tsirelson angle; increasing regularization moves the learned setting into the sub-Tsirelson region, illustrating why a trainable QNN layer need not always choose maximal CHSH bias.}
\label{fig:qnn-angle-optimization}
\end{figure}

\section{Cryptographic sponge constructions as a queryless bottleneck analogue}\label{app:sponge}

The cryptographic sponge constructions are a standard ``absorb-then-squeeze'' way to build variable-input/variable-output primitives, including modern hash and extendable-output functions (XOFs) based on Keccak/SHA-3.
A sponge maintains a fixed-width internal state of $b=r+c$ bits, split into a \emph{rate} $r$ (which interfaces with input/output) and a \emph{capacity} $c$ (which controls security).
After \emph{absorbing} an input string into the state, the construction can \emph{squeeze} an output string of arbitrary length.
The construction was introduced and analyzed by Bertoni \emph{et al.}~\cite{Bertoni2007SpongeFunctions,Bertoni2008IndifferentiabilitySponge}; practical Keccak parameters are specified in the Keccak implementation documentation~\cite{KeccakImplementationOverview32}, and a pedagogical overview is given in Ref.~\cite{WetzelsBokslag2015SpongesEngines}.

This appendix is not used in the proofs, but it helps delimit the scope of the bottleneck analogy. From the viewpoint of Neural-IC, sponges provide a useful \emph{queryless bottleneck} analogy: regardless of how long the squeezed output is, it is generated solely from a bounded internal state, so one should not expect an increase of information-theoretic entropy beyond what was injected during absorption.
This aligns with the main ``bottleneck as message'' intuition in our setting: downstream computation can only reuse and reformat what passes through a finite-capacity intermediate representation.

At the same time, the analogy has sharp limitations that are essential for interpreting Neural-RAC/IC correctly.
First, our setting is intrinsically \emph{query-separated}: the decoder receives a later-supplied query register $b$ and must answer the specific target indexed by $b$, whereas sponge squeezing is typically queryless.
Second, the long output stream of a sponge/XOF is generally a collection of highly correlated deterministic functions of the same internal state (or seed), and therefore should not be interpreted as many \emph{independent} random-access targets (contrast with the independent database bits $\{a_K\}$ in Neural-RAC).
Third, sponge security is primarily a \emph{computational} notion (e.g., pseudorandomness/indifferentiability), while Neural-IC is an \emph{information-theoretic} mutual-information accounting.
Finally, the ``oracle representation'' pathology discussed in the main text concerns uniformly strong random-access performance through a vanishing bottleneck, whereas sponge constructions exemplify the opposite phenomenon: arbitrarily long outputs that nevertheless respect the underlying bottleneck constraint.


%

\end{document}